\newcommand\hl[1]{%
  \bgroup
  \hskip0pt\color{blue!80!black}%
  #1%
  \egroup
}
\newcommand{\bra}[1]{\left\langle #1 \right|}
\newcommand{\ket}[1]{\left|#1\right\rangle}
\newcommand{\kettilde}[1]{|\tilde{#1}\rangle}
\newcommand{\bratilde}[1]{\langle\tilde{#1}|}
\newcommand{\braket}[2]{\left\langle#1 |  #2\right\rangle}
\newcommand{\Tr}{\text{Tr}}
\newcommand{\Cor}{\text{Cor}}
\newcommand{\dist}{\text{dist}}
\newcommand{\rank}{\text{rank}}
\newcommand{\be}{\begin{equation}}
\newcommand{\ee}{\end{equation}}
\newtheorem{definition}{Definition}
\newtheorem{theorem}{Theorem}
\newtheorem{corollary}{Corollary}
\newtheorem{lemma}{Lemma}
\newtheorem{problem}{Problem}
\newtheorem{claim}{Claim}
\def\nn{\nonumber\\}
\newcommand{\nc}{\newcommand}
\nc{\rnc}{\renewcommand}
\nc\bbC{\mathbb{C}}
\nc\bbF{\mathbb{F}}
\nc\bbM{\mathbb{M}}
\nc\bbN{\mathbb{N}}
\nc\bbR{\mathbb{R}}
\nc\bbZ{\mathbb{Z}}
\begin{document}

\title{Locally accurate MPS approximations for ground states of one-dimensional gapped local Hamiltonians}

\author{Alexander M. Dalzell}
\affiliation{Institute for Quantum Information and Matter, California Institute of
Technology, Pasadena, CA 91125, USA
}
\author{Fernando G.S.L. Brand\~ao}
\affiliation{Institute for Quantum Information and Matter, California Institute of
Technology, Pasadena, CA 91125, USA
}
\affiliation{Google LLC, Venice, CA 90291, USA}

\begin{abstract}

A key feature of ground states of gapped local 1D Hamiltonians is their relatively low entanglement --- they are well approximated by matrix product states (MPS) with bond dimension scaling polynomially in the length $N$ of the chain, while general states require a bond dimension scaling exponentially. We show that the bond dimension of these MPS approximations can be improved to a constant, independent of the chain length, if we relax our notion of approximation to be more local: for all length-$k$ segments of the chain, the reduced density matrices of our approximations are $\epsilon$-close to those of the exact state. If the state is a ground state of a gapped local Hamiltonian, the bond dimension of the approximation scales like $(k/\epsilon)^{1+o(1)}$, and at the expense of worse but still $\text{poly}(k,1/\epsilon)$ scaling of the bond dimension, we give an alternate construction with the additional features that it can be generated by a constant-depth quantum circuit with nearest-neighbor gates, and that it applies generally for any state with exponentially decaying correlations. For a completely general state, we give an approximation with bond dimension $\exp(O(k/\epsilon))$, which is exponentially worse, but still independent of $N$. Then, we consider the prospect of designing an algorithm to find a local approximation for ground states of gapped local 1D Hamiltonians. When the Hamiltonian is translationally invariant, we show that the ability to find $O(1)$-accurate local approximations to the ground state in $T(N)$ time implies the ability to estimate the ground state energy to $O(1)$ precision in $O(T(N)\log(N))$ time.

\end{abstract}
\maketitle

\section{Introduction}

In nature, interactions between particles act locally, motivating the study of many-body Hamiltonians consisting only of terms involving particles spatially near each other. An important method that has emerged from this course of study is the Density Matrix Renormalization Group (DMRG) algorithm \cite{white1992density,white1993density}, which aims to find a description of the ground state of local Hamiltonians on a one-dimensional chain of sites. DMRG has been an indispensable tool for research in many-body physics, but the rationale for its empirical success did not become fully apparent until long after it was being widely used. Its eventual justification required two ingredients: first, that the method can be recast \cite{klumper1993matrix,weichselbaum2009variational} as a variational algorithm that minimizes the energy over the set of matrix product states (MPS), a tensor network ansatz for 1D systems; and second, that the MPS ansatz set actually contains a good approximation to the ground state, at least whenever the Hamiltonian has a nonzero spectral gap. More specifically, Hastings \cite{hastings2007area} showed that ground states of gapped local Hamiltonians on chains with $N$ sites can be approximated to within trace distance $\epsilon$ by an MPS with bond dimension (a measure of the complexity of the MPS) only $\text{poly}(N,1/\epsilon)$, exponentially smaller than what is needed to describe an arbitrary state on the chain. Even taking into account these observations, DMRG is a heuristic algorithm and is not guaranteed to converge to the global energy minimum as opposed to a local minimum; however, a recently developed alternative algorithm \cite{landau2015polynomial, arad2017rigorous,roberts2017implementation}, sometimes referred to as the ``Rigorous RG'' (RRG) algorithm, avoids this issue and provides a way one can guarantee finding an $\epsilon$-approximation to the ground state in $\text{poly}(N,1/\epsilon)$ time.

These are extremely powerful results, but their value breaks down when the chain becomes very long. The bond dimension required to describe the ground state grows relatively slowly, but it still diverges with $N$. Meanwhile, if we run the RRG algorithm on longer and longer chains, we will eventually encounter an $N$ too large to handle given finite computational resources. Indeed, often we wish to learn something about the ground state in the thermodynamic limit ($N\rightarrow \infty$) but in this case these results no longer apply. Analogues of DMRG for the thermodynamic limit \cite{mcculloch2008infinite,vidal2007classical,haegeman2011time,zauner2018variational,ostlund1995thermodynamic,vanderstraeten2019tangent} --- methods that, for example, optimize over the set of constant bond dimension ``uniform MPS'' consisting of the same tensor repeated across the entire infinite chain --- have been implemented with successful results, but these methods lack the second ingredient that justified DMRG: it is not clear how large we must make the bond dimension to guarantee that the set over which we are optimizing contains a good approximation to the ground state.

Progress toward this ingredient can be found work by Huang \cite{huang2015computing} (and later by Schuch and Verstraete \cite{schuch2017matrix}), who showed that the ground state of a gapped local 1D Hamiltonian can be approximated \textit{locally} by a matrix product operator (MPO) --- a 1D tensor network object that corresponds to a  (possibly mixed) density operator as opposed to a quantum state vector --- with bond dimension independent of $N$ and polynomial in the inverse local approximation error. Here their approximation sacrifices global fidelity with the ground state, which decays exponentially with the chain length, in exchange for \textit{constant} bond dimension, while retaining high fidelity with the ground state reduced density matrices on all segments of the chain with constant length. In other words, the statistics for measurements of local operators are faithfully captured by the MPO approximation, a notion of approximation that is often sufficient in practice since many relevant observables, such as the individual terms in the Hamiltonian, are local. 

However, the result does not provide the necessary ingredient to justify infinite analogues of DMRG because MPO do not make a good ansatz class for variational optimization algorithms. One can specify the matrix elements for an MPO, but the resulting operator will only correspond to a valid quantum state if it is positive semi-definite, and verifying that this is the case is difficult: it is $\mathsf{NP}$-hard for finite chains, and in the limit $N \rightarrow \infty$ it becomes undecidable \cite{kliesch2014matrix}. Thus, if we attempt to perform variational optimization over the set of constant bond dimension MPO, we can be certain that our search space contains a good local approximation to the ground state, but we have no way of restricting our search only to the set of valid quantum states; ultimately the minimal energy MPO we find may not correspond to any quantum state at all.  

In this work, we fix this problem by showing an analogous result for MPS instead of MPO. We show that for any gapped nearest-neighbor Hamiltonian on a 1D chain with $N$ sites, and for any parameters $k$ and $\epsilon$, there is an MPS representation of a state $\kettilde{\psi}$ with bond dimension $\text{poly}(k, 1/\epsilon)$ such that the reduced density matrix of $\kettilde{\psi}\bratilde{\psi}$ on any contiguous region of length $k$ is $\epsilon$-close in trace distance to that of the true ground state.  Importantly, the bond dimension is independent of $N$. For general states (including ground states of non-gapped local Hamiltonians), we give a construction with bond dimension that is also independent of $N$ but exponential in $k/\epsilon$. This extends a previous result \cite{fannes1992abundance} that formally implied the existence of a uniform MPS approximation of this type when the state is translationally invariant and $N=\infty$, albeit without explicit attention paid to the dependence of the bond dimension on the locality $k$ or approximation error $\epsilon$, or to an improvement therein when the state is the ground state of a gapped Hamiltonian. Thus, we provide the missing ingredient for variational algorithms in the thermodynamic limit as we show that a variational set over a MPS with bond dimension independent in $N$ and polynomial in $1/\epsilon$ contains a state that simultaneously captures all the local properties of the ground state.

We present two proofs of our claim about ground states of gapped Hamiltonians. The first yields superior scaling of the bond dimension, which grows asymptotically slower than $(k/\epsilon)^{1+\delta}$ for any $\delta > 0$; however, it constructs an MPS approximation $\kettilde{\psi}$ that is long-range correlated and non-injective. In contrast, the second proof constructs an approximation that is injective and can be generated by a constant-depth quantum circuit with nearest-neighbor gates, while retaining $\text{poly}(k,1/\epsilon)$ bond dimension. The latter construction also follows merely from the assumption that the state has exponential decay of correlations. The proof idea originates with a strategy first presented in \cite{verstraete2006matrix} and constructs $\kettilde{\psi}$ by beginning with the true ground state $\ket{\psi}$ and applying three rounds of operations: first, a set of unitaries that, intuitively speaking, removes the short-range entanglement from the chain; second, a sequence of rank 1 projectors that zeroes-out the long-range entanglement; and third, the set of inverse unitaries from step 1 to add back the short-range entanglement. Intuitively, the method works because ground states of gapped Hamiltonians have a small amount of long-range entanglement. The non-trivial part is arguing that the local properties are preserved even as the small errors induced in step 2 accumulate to bring the global fidelity with the ground state to zero. The fact that $\kettilde{\psi}$ can be produced by a constant-depth quantum circuit acting on an initial product state suggests the possibility of an alternative variational optimization algorithm using the set of constant-depth circuits (a strict subset of constant-bond-dimension MPS) as the variational ansatz. Additionally, we note that the disentangle-project-reentangle process that we utilize in our proof might be of independent interest as a method for truncating the bond dimension of MPS. We can bound the truncation error of this method when the state has exponentially decaying correlations.

We also consider the question of whether these locally approximate MPS approximations can be rigorously found (\`a la RRG) more quickly than their globally approximate counterparts (and if they can be found at all in the thermodynamic limit). We prove a reduction for ground states of translationally invariant Hamiltonians showing that finding approximations to local properties to even a fixed $O(1)$ precision implies being able to find an approximation to the ground state energy to $O(1)$ precision with only $O(\log(N))$ overhead. Since strategies for estimating the ground state energy typically involve constructing a globally accurate approximation to the ground state, this observation gives us an intuition that it may not be possible to find the local approximation much more quickly than the global approximation, despite the fact that the bond dimensions required for the two approximations are drastically different.

\section{Background} \label{sec:background}

\subsection{One-dimensional local Hamiltonians}

In this paper, we work exclusively with gapped nearest-neighbor 1D Hamiltonians that have a unique ground state. Our physical system is a set of $N$ sites, arranged in one dimension on a line with open boundary conditions (OBC), each with its own Hilbert space $\mathcal{H}_i$ of dimension $d$. The Hamiltonian $H$ consists of terms $H_{i,i+1}$ that act non-trivially only on $\mathcal{H}_i$ and $\mathcal{H}_{i+1}$:
\begin{equation}
    H = \sum_{i=1}^{N-1} H_{i,i+1}.
\end{equation}
We will always require that $H_{i,i+1}$ be positive semi-definite and satisfy $\lVert H_{i,i+1} \rVert \leq 1$ for all $i$, where $\lVert \cdot \rVert$ is the operator norm. When this is not the case it is always possible to rescale $H$ so that it is. We call $H$ translationally invariant if $H_{i,i+1}$ is the same for all $i$. We will also always assume that $H$ has a unique ground state $\ket{\psi}$ with energy $E$ and an energy gap $\Delta > 0$ to its first excited state.  We let $\rho = \ket \psi \bra \psi$ refer to the (pure) density matrix representation of the ground state. For any density matrix $\sigma$ and subregion $X$ of the chain, we let $\sigma_X$ refer to $\Tr_{X^c}(\sigma)$, the reduced density matrix of $\sigma$ after tracing out the complement $X^c$ of $X$. 

Theorems \ref{thm:improvedbd} and \ref{thm:mainthm} will make statements about efficiently approximating the ground state of such Hamiltonians with matrix product states, and Theorem \ref{thm:reduction} is a statement about algorithms that estimate the ground state energy $E$ or approximate the expectation $\bra{\psi} O \ket{\psi}$ of a local observable $O$ in the ground state. 

\subsection{Matrix product states and matrix product operators}

It is often convenient to describe states with one-dimensional structure using the language of matrix product states (MPS). 
\begin{definition}[Matrix product state]
A matrix product state (MPS) $\ket \eta$ on $N$ sites of local dimension $d$ is specified by $Nd$ matrices $A_j^{(i)}$ with $i = 1, \ldots, d$ and $j = 1, \ldots, N$. The matrices $A_1^{(i)}$ are $1 \times \chi$ matrices and $A_N^{(i)}$ are $\chi \times 1$ matrices, with the rest being $\chi \times \chi$. The state is defined as
\begin{equation}
\ket \eta = \sum_{i_1 = 1}^d \ldots \sum_{i_N = 1}^d A_1^{(i_1)}\ldots A_N^{(i_N)} \ket{i_1 \ldots i_N}.
\end{equation}
The parameter $\chi$ is called the \textit{bond dimension} of the MPS.
\end{definition}

The same physical state has many different MPS representations, although one may impose a canonical form \cite{perez2006matrix} to make the representation unique. The bond dimension of the MPS is a measure of the maximum amount of entanglement across any ``cut'' dividing the state into two contiguous parts. More precisely, if we perform a Schmidt decomposition on a state $\ket\eta$ across every possible cut, the maximum number of non-zero Schmidt coefficients (i.e.~Schmidt rank) across any of the cuts is equal to the minimum bond dimension we would need to exactly represent $\ket\eta$ as an MPS \cite{vidal2003efficient}. Thus to show a state has an MPS representation with a certain bond dimension, it suffices to bound the Schdmidt rank across all the cuts. This line of reasoning shows that a product state, which has no entanglement, can be written as an MPS with bond dimension 1. Meanwhile, a general state with any amount of entanglement can always be written as an MPS with bond dimension $d^{N/2}$. 

A cousin of matrix product states are matrix product operators (MPO).

\begin{definition}[Matrix product operator]

A matrix product operator (MPO) $\sigma$ on $N$ sites of local dimension $d$ is specified by $Nd^2$ matrices $A_j^{(i)}$ with $i = 1, \ldots, d^2$ and $j = 1, \ldots, N$. The matrices $A_1^{(i)}$ are $1 \times \chi$ matrices and $A_N^{(i)}$ are $\chi \times 1$ matrices, with the rest being $\chi \times \chi$. The operator is defined as
\begin{equation}
\sigma = \sum_{i_1 = 1}^{d^2} \ldots \sum_{i_N = 1}^{d^2} A_1^{(i_1)}\ldots A_N^{(i_N)} \sigma_{i_1} \otimes \ldots \otimes \sigma_{i_N},
\end{equation}
where $\{\sigma_{i}\}_{i=1}^{d^2}$ is a basis for operators on a single site.
The parameter $\chi$ is called the \textit{bond dimension} of the MPO.
\end{definition}

However, MPO representations have the issue that specifying a set of matrices $A_j^{(i)}$ does not always lead to an operator $\sigma$ that is positive semi-definite, which is a requirement for the MPO to correspond to a valid quantum state. Checking positivity of an MPO in general is $\mathsf{NP}$-hard for chains of length $N$ and undecidable for infinite chains \cite{kliesch2014matrix}.

\subsection{Notions of approximation}

We are interested in the existence of an MPS that approximates the ground state $\ket \psi$. We will have both a global and a local notion of approximation, which we define here. We will employ two different distance measures at different points in our theorems and proofs, the purified distance \cite{gilchrist2005distance, tomamichel2009fully} and the trace distance.

\begin{definition}[Purified distance]
If $\sigma$ and $\sigma'$ are two normalized states on the same system, then
\begin{equation}
D(\sigma, \sigma') = \sqrt{1-\mathcal{F}(\sigma,\sigma')^2}
\end{equation}
is the purified distance between $\sigma$ and $\sigma'$, where $\mathcal{F}(\sigma,\sigma') = \Tr(\sqrt{\sigma^{1/2}\sigma'\sigma^{1/2}})$ denotes the fidelity between $\sigma$ and $\sigma'$.
\end{definition}

\begin{definition}[Trace distance]
If $\sigma$ and $\sigma'$ are two normalized states on the same system, then
\begin{equation}
D_1(\sigma, \sigma') = \frac{1}{2}\lVert \sigma - \sigma' \rVert_1 = \frac{1}{2}\Tr(\lvert \sigma-\sigma' \rvert)
\end{equation}
is the trace distance between $\sigma$ and $\sigma'$.
\end{definition}

\begin{lemma}[\cite{tomamichel2009fully}]\label{lem:purifiedvstrace}
\begin{equation}
D_1(\sigma,\sigma') \leq D(\sigma,\sigma') \leq \sqrt{2D_1(\sigma,\sigma')}.
\end{equation}
\end{lemma}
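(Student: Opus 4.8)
The plan is to derive both inequalities from the Fuchs--van de Graaf inequalities, which bound the trace distance directly in terms of the fidelity:
\begin{equation}
1 - \mathcal{F}(\sigma, \sigma') \leq D_1(\sigma, \sigma') \leq \sqrt{1 - \mathcal{F}(\sigma, \sigma')^2}.
\end{equation}
Since the purified distance is defined as $D(\sigma, \sigma') = \sqrt{1 - \mathcal{F}(\sigma, \sigma')^2}$, the right-hand inequality is literally the statement $D_1(\sigma, \sigma') \leq D(\sigma, \sigma')$, giving the lower bound of the lemma at once. For the upper bound I would use the left-hand inequality in the form $\mathcal{F}(\sigma, \sigma') \geq 1 - D_1(\sigma, \sigma')$ together with the factorization $1 - \mathcal{F}^2 = (1 - \mathcal{F})(1 + \mathcal{F})$. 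Bounding $1 + \mathcal{F} \leq 2$ (the fidelity is at most $1$) and $1 - \mathcal{F} \leq D_1$ yields $1 - \mathcal{F}^2 \leq 2 D_1$, hence $D = \sqrt{1 - \mathcal{F}^2} \leq \sqrt{2 D_1}$, as required. Thus the reduction from the Fuchs--van de Graaf bounds to the stated lemma is purely algebraic.

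It then remains to establish the Fuchs--van de Graaf inequalities themselves. For the upper bound $D_1 \leq \sqrt{1 - \mathcal{F}^2}$, I would invoke Uhlmann's theorem, which expresses the fidelity as $\mathcal{F}(\sigma, \sigma') = \max |\braket{\phi}{\phi'}|$ over all purifications $\ket{\phi}, \ket{\phi'}$ of $\sigma, \sigma'$ onto a common enlarged space. For any two pure states the trace distance evaluates exactly to $\sqrt{1 - |\braket{\phi}{\phi'}|^2}$, and since the partial trace is a quantum channel, the data-processing inequality guarantees that tracing out the purifying system cannot increase the trace distance. Choosing the optimal pair of purifications then delivers $D_1(\sigma, \sigma') \leq \sqrt{1 - \mathcal{F}(\sigma, \sigma')^2}$.

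For the lower bound $1 - \mathcal{F} \leq D_1$, I would use the variational characterization of the trace distance as the maximum classical total-variation distance over all measurements, together with the fact that fidelity is monotone non-decreasing under the partial trace and takes a simple closed form on commuting (effectively classical) states. Selecting a measurement that diagonalizes the relevant operators reduces the claim to the corresponding statement for classical probability distributions, which is elementary.

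I expect the lower Fuchs--van de Graaf inequality to be the main obstacle. The upper inequality follows cleanly from Uhlmann's theorem and data processing, whereas the lower one is the less elementary direction and requires care in choosing the right measurement (or, equivalently, a spectral argument that correctly handles the non-commuting case). Once both directions are secured, everything else is the short algebraic manipulation described above.
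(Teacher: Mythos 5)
Your proposal is correct, but there is nothing in the paper to compare it against: the lemma is imported verbatim from \cite{tomamichel2009fully} with no internal proof, so your derivation is a self-contained reconstruction of what the cited reference does. The reduction step is exactly right: with the paper's convention $\mathcal{F}(\sigma,\sigma')=\Tr\sqrt{\sigma^{1/2}\sigma'\sigma^{1/2}}$, the Fuchs--van de Graaf inequalities $1-\mathcal{F}\leq D_1\leq\sqrt{1-\mathcal{F}^2}$ give the left inequality of the lemma by definition of $D$, and the right one via $D^2=(1-\mathcal{F})(1+\mathcal{F})\leq 2(1-\mathcal{F})\leq 2D_1$. Your sketch of the upper FvG bound (Uhlmann plus the pure-state formula $D_1(\phi,\phi')=\sqrt{1-\lvert\braket{\phi}{\phi'}\rvert^2}$ plus contractivity of trace distance under partial trace) is the standard argument and is sound. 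One ingredient in your sketch of the lower bound should be stated more carefully: monotonicity of fidelity under measurements gives $F(p,q)\geq\mathcal{F}(\sigma,\sigma')$ for the outcome distributions $p,q$, which points the \emph{wrong} way for this direction. What you actually need is the Fuchs--Caves achievability statement: there exists a measurement (the one diagonalizing $\sigma^{-1/2}\lvert\sigma'^{1/2}\sigma^{1/2}\rvert\sigma^{-1/2}$, which is presumably what you mean by ``the relevant operators'') whose statistics satisfy $F(p,q)=\mathcal{F}(\sigma,\sigma')$ exactly; then $1-\mathcal{F}=1-\sum_m\sqrt{p_mq_m}=\frac{1}{2}\sum_m\left(\sqrt{p_m}-\sqrt{q_m}\right)^2\leq\frac{1}{2}\sum_m\lvert p_m-q_m\rvert\leq D_1(\sigma,\sigma')$. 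Alternatively, the Powers--St\o rmer inequality $\lVert\sigma-\sigma'\rVert_1\geq\lVert\sqrt{\sigma}-\sqrt{\sigma'}\rVert_2^2$ yields $D_1\geq 1-\Tr(\sqrt{\sigma}\sqrt{\sigma'})\geq 1-\mathcal{F}$ with no measurement optimization at all, which may be the cleaner route if you want to write the proof out in full.
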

We also note that $D_1(\sigma,\sigma') = D(\sigma,\sigma')$ if $\sigma$ and $\sigma'$ are both pure. If the trace distance between $\rho$ and $\sigma$ is small then we would say $\sigma$ is a good global approximation to $\rho$. We are also interested in a notion of distance that is more local. 

\begin{definition}[$k$-local purified distance]
If $\sigma$ and $\sigma'$ are two normalized states on the same system, then the $k$-local purified distance between $\sigma$ and $\sigma'$ is
\begin{equation}
D^{(k)}(\sigma, \sigma') = \max_{X: \lvert X \rvert = k}D(\sigma_X,\sigma'_X),
\end{equation}
where the max is taken over all contiguous regions $X$ consisting of $k$ sites.
\end{definition}

\begin{definition}[$k$-local trace distance]
If $\sigma$ and $\sigma'$ are two normalized states on the same system, then the $k$-local trace distance between $\sigma$ and $\sigma'$ is
\begin{equation}
D_1^{(k)}(\sigma, \sigma') := \max_{X: \lvert X \rvert = k}D_1(\sigma_X,\sigma'_X),
\end{equation}
where the max is taken over all contiguous regions $X$ consisting of $k$ sites.
\end{definition}

Note that these quantities lack the property that $0=D^{(k)}(\sigma,\sigma')=D_1^{(k)}(\sigma,\sigma')$ implies $\sigma = \sigma'$,\footnote{To see this consider the simple counterexample where $k=2$, $\sigma = \ket{\eta}\bra{\eta}$, $\sigma' = \ket{\nu}\bra{\nu}$, with $\ket{\eta}= (\ket{000}+\ket{111})/\sqrt{2}$, $\ket{\nu} = (\ket{000}-\ket{111})/\sqrt{2}$. In fact here $\braket{\nu}{\eta}=0$. This counterexample can be generalized to apply for any $k$.} but they do satisfy the triangle inequality. It is also clear that taking $k=N$ recovers our notion of global distance: $D^{(N)}(\sigma,\sigma') = D(\sigma,\sigma')$ and $D_1^{(N)}(\sigma,\sigma') = D_1(\sigma,\sigma')$.

\begin{definition}[Local approximation]
We say a state $\sigma$ on a chain of $N$ sites is a $(k,\epsilon)$-local approximation to another state $\sigma'$ if $D_1^{(k)}\left(\sigma, \sigma'\right) \leq \epsilon$.
\end{definition}

As we discuss in the next subsection, previous results show that $\ket\psi$ has a good global approximation $\kettilde{\psi}$ that is an MPS with bond dimension that scales like a polynomial in $N$. We will be interested in the question of what bond dimension is required when what we seek is merely a good local approximation. 

\subsection{Previous results}

\subsubsection{Exponential decay of correlations and area laws}

A key fact shown by Hastings \cite{hastings2004lieb} (see also \cite{hastings2006spectral,hastings2004locality, nachtergaele2006lieb} for improvements and extensions) about nearest-neighbor 1D Hamiltonians with a non-zero energy gap is that the ground state $\ket{\psi}$ has exponential decay of correlations.

\begin{definition}[Exponential decay of correlations]
A pure state $\sigma = \ket \eta \bra \eta$ on a chain of sites is said to have $(t_0,\xi)$-exponential decay of correlations if for every $t \geq t_0$ and every pair of regions $A$ and $C$ separated by at least $t$ sites
\begin{align}
& \Cor(A:C)_{\ket{\eta}} \nonumber\\
& := \max_{\lVert M \rVert, \lVert N \rVert \leq 1} \Tr\left((M \otimes N) (\sigma_{AC}-\sigma_A \otimes \sigma_C)\right) \nonumber\\
& \leq \exp(-t/\xi).
\end{align}
The smallest $\xi$ for which $\sigma$ has $(t_0,\xi)$-exponential decay of correlations for some $t_0$ is called the correlation length of $\sigma$.
\end{definition}

\begin{lemma}
If $\ket \psi$ is the unique ground state of a Hamiltonian $H = \sum_i H_{i,i+1}$ with spectral gap $\Delta$, then $\ket\psi$ has $(t_0,\xi)$-exponential decay of correlations for some $t_0 = O(1)$ and $\xi = O(1/\Delta)$.
\end{lemma}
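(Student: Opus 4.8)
The plan is to combine two ingredients standard for local Hamiltonians. The first is a Lieb--Robinson bound, which states that information propagates through a local Hamiltonian at a finite velocity $v$ and holds irrespective of the gap; for nearest-neighbor terms with $\lVert H_{i,i+1}\rVert \le 1$ one has $v = \bigoh{1}$. The second is the gap $\Delta$ itself, which I would exploit through an energy filter. I would first reduce the claim to a bound on a single connected correlator: fixing regions $A$ and $C$ separated by $\ell$ sites (the separation called $t$ in the definition) and operators $M,N$ with $\lVert M\rVert,\lVert N\rVert\le 1$ supported on $A$ and $C$, I may replace $M$ by $M-\bra{\psi}M\ket{\psi}$ and likewise for $N$, changing norms by at most a factor of $2$ and leaving $\Cor(A:C)$ unchanged. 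Hence it suffices to bound $\lvert\bra{\psi}MN\ket{\psi}\rvert$ assuming $\bra{\psi}M\ket{\psi}=\bra{\psi}N\ket{\psi}=0$. Under this assumption, expanding in energy eigenstates $\ket{n}$ of energy $E_n$ shows that both $M\ket{\psi}$ and $N\ket{\psi}$ lie entirely in the excited subspace, so every frequency $\omega_n = E_n-E$ that appears is at least $\Delta$; write $M\ket{\psi}=\sum_n M_n\ket{n}$ and $N\ket{\psi}=\sum_n N_n\ket{n}$ for the expansion coefficients.

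Next I would introduce the Heisenberg evolution $M(s)=e^{iHs}Me^{-iHs}$ and a filter function $f(s)$, defining the smeared operator $\tilde M=\int_{-\infty}^{\infty} f(s)M(s)\,ds$. A short computation gives $\bra{\psi}M(s)N\ket{\psi}=\sum_{n\ne 0} N_n \bra{\psi}M\ket{n}e^{-i\omega_n s}$ and $\bra{\psi}NM(s)\ket{\psi}=\sum_{n\ne 0}M_n\bra{\psi}N\ket{n}e^{+i\omega_n s}$, so the two operator orderings have Fourier content supported on $\omega\ge\Delta$ and on $\omega\le-\Delta$ respectively. I would therefore choose $f$ so that its Fourier transform $\hat f$ approximates the indicator of $[\Delta,\infty)$ --- close to $1$ above the gap and close to $0$ below $-\Delta$ --- while $f$ itself is concentrated on $\lvert s\rvert\lesssim\tau$. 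With such a filter, $\bra{\psi}\tilde M N\ket{\psi}$ reproduces the target $\bra{\psi}MN\ket{\psi}$ up to the filtering error $\sup_{\omega\ge\Delta}\lvert1-\hat f(\omega)\rvert$, whereas $\bra{\psi}N\tilde M\ket{\psi}$ is driven to zero up to $\sup_{\omega\le-\Delta}\lvert\hat f(\omega)\rvert$; for a Gaussian-smoothed step of width $\tau$ both errors decay like $e^{-\Omega(\Delta^2\tau^2)}$.

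The two orderings are then tied together by locality. Because $f$ is concentrated on $\lvert s\rvert\lesssim\tau$, the Lieb--Robinson bound certifies that $\tilde M$ is approximately supported within distance $\sim v\tau$ of $A$, with $\lVert[\tilde M,N]\rVert \lesssim e^{-\Omega(\ell-v\tau)}$. Hence $\bra{\psi}\tilde MN\ket{\psi}\approx\bra{\psi}N\tilde M\ket{\psi}$, and chaining the three approximations forces $\bra{\psi}MN\ket{\psi}$ to be small. It remains to optimize the single parameter $\tau$: the filtering error $e^{-\Omega(\Delta^2\tau^2)}$ pushes $\tau$ large while the Lieb--Robinson leakage $e^{-\Omega(\ell-v\tau)}$ requires $v\tau$ safely below $\ell$. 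Choosing $\tau\sim\ell/(2v)$ balances them and yields a bound of the form $\exp(-\Omega(\Delta\ell/v))$, so that with $v=\bigoh{1}$ the correlation length is $\xi=\bigoh{1/\Delta}$; the requirement that $\tau$ be large enough for the filter to be meaningful is what sets the threshold $t_0=\bigoh{1}$.

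The step I expect to be the main obstacle is the construction and analysis of the filter function, where a genuine tension from the uncertainty principle appears: a function whose transform is exactly the indicator of $[\Delta,\infty)$ cannot be concentrated in time, so one must trade off the sharpness of $\hat f$ at the gap (which controls the filtering error and improves with large $\Delta\tau$) against the temporal spread of $f$ (which controls the Lieb--Robinson locality and worsens with large $\tau$). Making the two error estimates rigorous --- in particular bounding the neglected spectral weight by Cauchy--Schwarz against $\lVert M\ket{\psi}\rVert,\lVert N\ket{\psi}\rVert\le 1$ rather than against an uncontrolled sum over eigenstates, and verifying that the optimal balance indeed produces a clean exponential in $\Delta\ell/v$ --- is the technical heart of the argument.
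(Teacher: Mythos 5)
Your overall strategy --- Lieb--Robinson bounds plus a spectral filter that separates the two operator orderings using the gap --- is the right one; in fact, the paper does not prove this lemma at all but simply cites Theorem 4.1 of Nachtergaele and Sims, whose proof is exactly this filtering scheme. The problem is in the one step you yourself flag as the technical heart, and it is a genuine gap rather than a routine verification: the two properties you assume for the filter are mutually exclusive, and your final choice of $\tau$ fails outright. If $f$ is genuinely concentrated on $|s|\lesssim\tau$ in the strong sense needed for $\lVert[\tilde M,N]\rVert\lesssim e^{-\Omega(\ell-v\tau)}$ (negligible mass outside $[-\tau,\tau]$), then $\hat f$ is essentially entire of exponential type $\tau$, and the best achievable deviation from the step function on $\{|\omega|\ge\Delta\}$ is exponential in $\Delta\tau$, not in $\Delta^2\tau^2$. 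Conversely, the Gaussian-smoothed step that does achieve error $e^{-\Omega(\Delta^2\tau^2)}$ has time profile $f(s)=\tfrac12\delta(s)+e^{-s^2/2\tau^2}/(2\pi i s)$, whose Gaussian tails extend to all times; the contribution to $\int|f(s)|\,\lVert[M(s),N]\rVert\,ds$ from times beyond the light cone, $|s|>\ell/v$, is of order $e^{-\ell^2/(2v^2\tau^2)}$, which at your choice $\tau\sim\ell/(2v)$ is $e^{-O(1)}=\Theta(1)$. So as written, the chained bound gives no decay at all --- not a weaker exponent, but none.

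The repair is what Nachtergaele--Sims (and Hastings) actually do. Keep the Gaussian-regularized step filter, but introduce a second, independent parameter: split the commutator integral at a time $T\sim\ell/(2v)$, using the Lieb--Robinson bound for $|s|\le T$ and the Gaussian tail for $|s|>T$. This yields three error terms, $e^{-\Omega(\Delta^2\tau^2)}$ (filtering), $e^{-\Omega(\ell-vT)}$ (commutator inside the cone), and $e^{-\Omega(T^2/\tau^2)}$ (filter mass outside the cone), and the three-way balance forces $\tau\sim\sqrt{\ell/(\Delta v)}$ --- the geometric mean of $1/\Delta$ and $\ell/v$, not $\ell/(2v)$ --- giving an overall bound $e^{-\Omega(\Delta\ell/v)}$ and hence $\xi=O(1/\Delta)$. (Alternatively, use a compactly supported filter with the correct $e^{-\Theta(\Delta\tau)}$ approximation rate; then your balancing equation $\Delta\tau\sim\ell-v\tau$ does work and gives the same exponent.) Two further details you would need: the $1/s$ singularity of the filter must be tamed using $[M,N]=0$ at $s=0$, which is where disjointness of the supports enters beyond Lieb--Robinson; and since all these estimates carry $O(1)$ prefactors, the prefactor-free form $\exp(-t/\xi)$ required by the paper's definition only holds once $t$ exceeds a threshold proportional to $\xi$ times the log of the prefactor --- that, not the meaningfulness of the filter, is what really sets $t_0$.
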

\begin{proof}
This statement is implied by Theorem 4.1 of \cite{nachtergaele2006lieb}.
\end{proof}

While the exponential decay of correlations holds for lattice models in any spatial dimension, the other results we discuss are only known to hold in one dimension.

For example, in one dimension it has been shown that ground states of gapped Hamiltonians obey an area law, that is, the entanglement entropy of any contiguous region is bounded by a constant times the length of the boundary of that region, which in one dimension is just a constant. This statement was also first proven by Hastings in \cite{hastings2007area} where it was shown that for any contiguous region $X$
\begin{equation}
    S(\rho_X) \leq \exp(O(\log(d)/\Delta)),
\end{equation}
which is independent of the number of sites in $X$, where $S$ denotes the von Neumann entropy $S(\sigma) = -\Tr(\sigma \log \sigma)$. The area law has since been improved \cite{arad2013area, arad2017rigorous} to 
\begin{equation}
    S(\rho_X) \leq \tilde{O}(\log^3(d)/\Delta),
\end{equation}
where the $\tilde{O}$ signifies a suppression of factors that scale logarithmically with the quantity stated. 

It was also discovered that an area law follows merely from the assumption of exponential decay of correlations in one dimension: if a pure state $\rho$ has $(t_0,\xi)$-exponential decay of correlations, then it satisfies \cite{brandao2015exponential}
\begin{equation}
    S(\rho_X) \leq t_0\exp(\tilde{O}(\xi \log(d))).
\end{equation}

\subsubsection{Efficient global MPS approximations}
The area law is closely related to the existence of an efficient MPS approximation to the ground state. To make this implication concrete, one needs an area law using the $\alpha$-Renyi entropy for some value of $\alpha$ with $0 < \alpha < 1$ \cite{verstraete2006matrix}, where the Renyi entropy is given by $S_\alpha(\rho_X) = -\Tr(\log(\rho_X^\alpha))/(1-\alpha)$. An area law for the von Neumann entropy (corresponding to $\alpha=1$) is not alone sufficient \cite{schuch2008entropy}. However, for all of the area laws mentioned above, the techniques used are strong enough to also imply the existence of efficient MPS approximations, and, moreover, area laws have indeed been shown for the $\alpha$-Renyi entropy \cite{huang2014area} with $0 < \alpha < 1$.

Hastings' \cite{hastings2007area} original area law implied the existence of a global $\epsilon$-approximation $\kettilde{\psi}$ for $\ket{\psi}$ with bond dimension
\begin{equation}
\chi = e^{\tilde{O}\left(\frac{\log(d)}{\Delta}\right)} \left(\frac{N}{\epsilon}\right)^{O\left(\frac{\log(d)}{\Delta}\right)}.
\end{equation}

The improved area law in \cite{arad2013area, arad2017rigorous} yields a better scaling for $\chi$ which is asymptotically sublinear in $N$:
\begin{equation}
\chi = e^{\tilde{O}\left(\frac{\log^3(d)}{\Delta}\right)} \left(\frac{N}{\epsilon}\right)^{\tilde{O}\left(\frac{\log(d)}{(\Delta\log(N/\epsilon))^{1/4}}\right)}.
\end{equation}

Finally, the result implied only from exponential decay of correlations \cite{brandao2015exponential} is
\begin{equation}
\chi = e^{t_0e^{\tilde{O}\left(\xi \log(d)\right)}} \left(\frac{N}{\epsilon}\right)^{\tilde{O}\left(\xi\log(d)\right)}.
\end{equation}

Crucially, if the local Hilbert space dimension $d$ and the gap $\Delta$ (or alternatively, the correlation length $\xi$) are taken to be constant, then all three results read $\chi = \text{poly}(N, 1/\epsilon)$.

\subsubsection{Existence of MPS approximations in the thermodynamic limit}

The aforementioned results, which describe explicit bounds on the bond dimension needed for good MPS approximations, improved upon important prior work that characterized which states can be approximated by MPS in the first place. Of course, any state on a finite chain can be exactly described by an MPS, but the question of states living on the infinite chain is more subtle. In \cite{fannes1992finitely}, the proper mathematical framework was developed to study MPS, which they call \textit{finitely correlated states}, in the limit of infinite system size, and in \cite{fannes1992abundance} it was shown that any translationally invariant state on the infinite chain can be approximated arbitrarily well by a uniform (translationally invariant) MPS in the following sense: for any translationally invariant pure state $\rho$ there exists a \textit{net} --- a generalization of a sequence --- of translationally invariant MPS $\rho_\alpha$ for which the expectation value $\Tr(\rho_\alpha A)$ of any finitely supported observable $A$ converges to $\Tr(\rho A)$. An implication of this is that if we restrict to observables $A$ with support on only $k$ contiguous sites, there exists a translationally invariant MPS that approximates the expectation value of all $A$ to arbitrarily small error $\epsilon$. Thus, they established that local approximations for translationally invariant states exist within the set of translationally invariant MPS, but provided no discussion of the bond dimension required for such an approximation, and did not explicitly consider the case where the state is the ground state of a gapped, local Hamiltonian. 

Our Theorem \ref{thm:improvedbd}, which is stated in the following section, may be viewed as a generalization and improvement on this work in several senses. Most importantly, we present a construction for which a bound on the bond dimension can be explicitly obtained. This bound scales like $\text{poly}(k,1/\epsilon)$ when the state is a ground state of a gapped nearest-neighbor Hamiltonian, and exponentially in $k/\epsilon$ when it is a general state.  Furthermore, our method works for states on the finite chain that are not translationally invariant, where it becomes unclear how the methods of this previous work would generalize.

\subsubsection{Constant-bond-dimension MPO local approximations}

The problem of finding matrix product operator representations that capture all the local properties of a state has been studied before. Huang \cite{huang2015computing} showed the existence of a positive semi-definite MPO $\rho^\chi$ with bond dimension
\begin{equation}\label{eq:HuangMPO}
    \chi = e^{\tilde{O}\left(\frac{\log^3(d)}{\Delta}+\frac{\log(d)\log^{3/4}(1/\epsilon)}{\Delta^{1/4}}\right)}  = (1/\epsilon)^{o(1)}
\end{equation}
that is a $(2,\epsilon)$-local approximation to the true ground state $\rho$, where $o(1)$ indicates the quantity approaches 0 as $1/\epsilon \rightarrow \infty$. Crucially, this is independent of the length of the chain $N$. Additionally, because the Hamiltonian is nearest-neighbor, we have $\Tr(H\rho^\chi)-\Tr(H\rho) \leq (N-1)\epsilon$, i.e., the energy per site (energy density) of the state $\rho^\chi$ is within $\epsilon$ of the ground state energy density. Huang constructs this MPO explicitly and notes it is a convex combination over pure states which themselves are MPS with bond dimension independent of $N$. Thus, one of these MPS must have energy density within $\epsilon$ of the 
ground state energy density. However, it is not guaranteed (nor is it likely) that one of these constant-bond-dimension MPS is also a good local approximation to the ground state; thus our result may be viewed as an improvement on this front as we show the existence not only of a low-energy-density constant-bond-dimension MPS, but also one that is a good local approximation to the ground state.

An alternative MPO construction achieving the same task was later given in \cite{schuch2017matrix}. In this case, the MPO is a $(k,\epsilon)$-local approximation to the ground state and has bond dimension 
\begin{equation}
    \chi=(k/\epsilon)e^{\tilde{O}\left(\frac{\log^3(d)}{\Delta}+\frac{\log(d)\log^{3/4}(k/\epsilon^3)}{\Delta^{1/4}}\right)} = (k/\epsilon)^{1+o(1)}.
\end{equation}

The idea they use is simple. They break the chain into blocks of size $l$, which is much larger than $k$. On each block they construct a constant bond dimension MPO that closely approximates the reduced density matrix of the ground state on that block, which is easy since each block has constant length and they must make only a constant number of bond truncations to the exact state. The tensor product of these MPO will be an MPO on the whole chain that is a good approximation on any length-$k$ region that falls within one of the larger length $l$ blocks, but not on a region that crosses the boundary between blocks. To remedy this, they take the mixture of MPO formed by considering all $l$ translations of the boundaries between the blocks. Now as long as $l$ is much larger than $k$, any region of length $k$ will only span the boundary between blocks for a small fraction of the MPO that make up this mixture, and the MPO will be a good local approximation.

This same idea underlies our proof of Theorem \ref{thm:improvedbd}, with the complication that we seek a pure state approximation and cannot take a mixture of several MPS. Instead, we combine the translated MPS in superposition, which brings new but manageable challenges.

\section{Statement of results} \label{sec:results}

\subsection{Existence of local approximation}

\begin{theorem}\label{thm:improvedbd}
Let $\ket \psi$ be a state on a chain of $N$ sites of local dimension $d$. For any $k$ and $\epsilon$ there exists an MPS $\kettilde{\psi}$ with bond dimension at most $\chi$ such that
\begin{enumerate} [(1)]
\item $\kettilde{\psi}$ is a $(k,\epsilon)$-local approximation to $\ket{\psi}$
\item $
    \chi = e^{O(k\log(d)/\epsilon)}
$
\end{enumerate}
provided that $N$ is larger than some constant $N_0 = O(k^3/\epsilon^3)$ that is independent of $\ket{\psi}$.

If $\ket{\psi}$ has $(t_0,\xi)$-exponential decay of correlations, then the bound on the bond dimension can be improved to
\begin{enumerate} [(2')]
\item $
    \chi = e^{t_0e^{\tilde{O}\left(\xi\log(d)\right)}} (k/\epsilon^3)^{O\left(\xi\log(d)\right)}
$
\end{enumerate}
with $N_0 = O(k^2/\epsilon^2) + t_0\exp(\tilde{O}(\xi \log(d))$
and if, additionally, $\ket{\psi}$ is the unique ground state of a nearest-neighbor 1D Hamiltonian $H$ with spectral gap $\Delta$, it can be further improved to
\begin{enumerate}[(2'')]
    \item $\chi =  (k/\epsilon)e^{\tilde{O}\left(\frac{ \log^{3}(d)}{\Delta}+\frac{\log(d)}{\Delta^{1/4}}\log^{3/4}(k/\epsilon^3)\right)}$
\end{enumerate}
with $N_0 = O(k^2/\epsilon^2) + \tilde{O}(\log(d)/\Delta^{3/4})$. Here $\chi$ is asymptotically equivalent to $(k/\epsilon)^{1+o(1)}$ where $o(1)$ indicates that the quantity approaches 0 as $(k/\epsilon) \rightarrow \infty$.
\end{theorem}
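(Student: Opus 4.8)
The plan is to adapt the block-and-translate construction of Schuch and Verstraete, turning their \emph{mixture} of MPO into a \emph{superposition} of MPS, exactly as the introduction anticipates. First I would fix a block length $l$ of order $k/\epsilon$ and, for each offset $t\in\{0,1,\dots,l-1\}$, cut the chain into consecutive blocks whose boundaries sit at the sites congruent to $t$ modulo $l$. For each offset I build a pure state $\ket{\phi_t}$ that is a genuine tensor product across these boundaries: at every boundary cut I replace the exact Schmidt decomposition of $\ket\psi$ by a truncation retaining at most $\chi_0$ Schmidt vectors and renormalize, so $\ket{\phi_t}=\bigotimes_B\ket{\chi_B^{(t)}}$ factorizes over the blocks of offset $t$. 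Because of this factorization, any length-$k$ region $X$ lying strictly inside one block of offset $t$ sees a reduced state $\rho^{(t)}_X$ that differs from $\rho_X$ only through the two boundary truncations of its block. For a general $\ket\psi$ I keep the full bulk, $\chi_0=d^{\lfloor l/2\rfloor}$; assuming exponential decay of correlations, or the spectral gap, I instead invoke the area-law truncation bounds quoted in the Background section to keep only $\chi_0=\mathrm{poly}(l)$ (respectively the Arad-et-al.\ scaling) Schmidt vectors while still approximating every such $\rho_X$ to within $\epsilon$.

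Next I would set $\kettilde\psi\propto\sum_{t=0}^{l-1}\ket{\phi_t}$ and analyze its local reduced states. If the cross terms were absent, the reduced state on a fixed $X$ would be the uniform mixture $\frac1l\sum_t\rho^{(t)}_X$. The combinatorial heart of the argument is that a fixed length-$k$ region contains only $k-1$ internal cuts, so it straddles a block boundary for at most $k-1$ of the $l$ offsets; hence for a $1-(k-1)/l$ fraction of offsets $\rho^{(t)}_X$ is $\epsilon$-close to $\rho_X$, and choosing $l=O(k/\epsilon)$ makes the mixture $O(\epsilon)$-close in trace distance. This is where the factor $1/\epsilon$ in the block length, and thus the $d^{O(k/\epsilon)}$ of bound (2), is born.

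For the bond dimension I would write $\kettilde\psi$ as an MPS whose bond carries, in direct sum, both the global offset label $t$ and, within each offset, the at-most-$\chi_0$-dimensional block bond, giving $\chi\le l\,\chi_0$. Substituting $\chi_0=d^{O(l)}$ with $l=O(k/\epsilon)$ yields (2); substituting the polynomial and Arad-et-al.\ truncation bounds (with a correspondingly smaller per-block error budget, which is the source of the $\epsilon^3$) yields (2') and (2''). The $N$-independence is then immediate, since $l$ and $\chi_0$ depend only on $k$, $\epsilon$ and the Hamiltonian parameters.

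The genuinely new difficulty, with no analogue in the MPO mixture, is controlling the off-diagonal contributions $\Tr_{X^c}(\ket{\phi_t}\bra{\phi_{t'}})$ for $t\ne t'$. I would bound their trace norm by the overlap of $\ket{\phi_t}$ and $\ket{\phi_{t'}}$ on the complement $X^c$, and estimate that overlap by a transfer-operator sweep along the chain: because the two product structures are misaligned, each block outside $X$ applies a contraction of norm at most one, so the overlap decays as the number of such blocks grows and becomes negligible once $N\ge N_0$. Turning this into a usable quantitative contraction, and summing the resulting bound over the $O(l^2)$ pairs $(t,t')$, is the crux of the proof; it is what forces the threshold $N_0=O(k^3/\epsilon^3)$ and its analogues. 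I expect this cross-term estimate, rather than the block approximation or the counting argument, to be the main obstacle.
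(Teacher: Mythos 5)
Your high-level plan (turn the Schuch--Verstraete mixture into a superposition of translated block-product states) is indeed the paper's strategy, but the proposal has two genuine gaps, one in the construction of the $\ket{\phi_t}$ and one fatal gap in the cross-term analysis.

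First, the construction: truncating the Schmidt decomposition of $\ket\psi$ at each block boundary to $\chi_0>1$ Schmidt vectors does \emph{not} produce a state that factorizes over the blocks --- after truncation there is still entanglement (Schmidt rank up to $\chi_0$) across every boundary cut, so $\ket{\phi_t}=\bigotimes_B\ket{\chi_B^{(t)}}$ never materializes unless $\chi_0=1$, which destroys the approximation. To get a \emph{pure product} state over blocks whose marginals inside each block match $\rho$, one must purify the (truncated) block marginals, and this needs room: the paper reserves the rightmost $t$ sites of each block as a purifying register, which is only possible because the area law bounds the rank of the truncated marginal by $\mathrm{poly}(\chi_0)\leq d^{t}$. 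For general states (your case (2)) there is no area law and a length-$l$ block's marginal can have rank $d^{l}$, so it cannot be purified inside its own block at all; the paper instead needs a separate device (its Lemma 4) that lets distant blocks absorb each other's entropy while keeping every cut's Schmidt rank at most $d^{2l}$ and all block marginals exact. Your proposal has no substitute for either mechanism.

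Second, and more fundamentally, the cross terms $\Tr_{X^c}(\ket{\phi_t}\bra{\phi_{t'}})$ cannot be handled the way you propose. The claimed transfer-operator decay is false in general: if $\ket\psi$ is a product state, every $\ket{\phi_t}$ equals $\ket\psi$ and all overlaps are exactly $1$ no matter how long the chain is, so no $N_0$ threshold rescues the estimate. Worse, even \emph{exact orthogonality} $\braket{\phi_{t'}}{\phi_t}=0$ does not make $\lVert\Tr_{X^c}(\ket{\phi_t}\bra{\phi_{t'}})\rVert_1$ small --- take $\ket{\phi_t}=(\ket{000}+\ket{111})/\sqrt 2$ and $\ket{\phi_{t'}}=(\ket{000}-\ket{111})/\sqrt 2$ (the paper's own footnote example): they are orthogonal, yet the partial trace of the cross term over one site has trace norm $1/2$... and analogous examples reach trace norm $1$. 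What is actually needed, and what the paper engineers, is that for every pair $(t,t')$ the two states have \emph{orthogonally supported reduced density matrices on some region that is guaranteed to lie inside $X^c$} for every length-$k$ window $X$; then each cross term vanishes \emph{identically} rather than being ``small.'' The paper does this with sparsely placed single-site tag regions $B_{j,j'}$ (set to $\ket 0$ in $\ket{\phi_j}$ and $\ket 1$ in $\ket{\phi_{j'}}$, and vice versa on $B_{j',j}$, with the two tags separated by more than $k$ sites so one always survives the partial trace), and in the correlated/gapped case with orthogonality conditions imposed on the purifying registers. This gadgetry --- not a decay estimate --- is also the true origin of the threshold $N_0=O(k^3/\epsilon^3)$: the chain must be long enough to place all $l^2-l$ tag sites with mutual spacing at least $l$. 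Without some such exact-orthogonality mechanism, the step you yourself identify as the crux does not go through.
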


However, the state $\kettilde{\psi}$ that we construct in the proof of Theorem \ref{thm:improvedbd} is long-range correlated and cannot be generated from a constant-depth quantum circuit. Thus, while $\kettilde{\psi}$ is a good local approximation to the ground state $\ket{\psi}$ of $H$, it is not the exact ground state of any gapped local 1D Hamiltonian. 

Next, we show that it remains possible to approximate the state even when we require the approximation to be produced by a constant-depth quantum circuit; the scaling of the bond dimension is faster in $k$ and $1/\epsilon$, but it is still polynomial. 

\begin{theorem}\label{thm:mainthm}
Let $\ket \psi$ be a state on a chain of $N$ sites of local dimension $d$. If $\ket{\psi}$ has $(t_0,\xi)$-exponential decay of correlations, then, for any $k$ and $\epsilon$, there is an MPS $\kettilde{\psi}$ with bond dimension at most $\chi$ such that 
\begin{enumerate}[(1)]
    \item $\kettilde{\psi}$ is a $(k,\epsilon)$-local approximation to $\ket{\psi}$
    \item $
    \chi = e^{t_0e^{\tilde{O}\left(\xi\log(d)\right)}} \left(k/\epsilon^2\right)^{O\left(\xi^2\log^2(d)\right)}$
\item $\kettilde{\psi}$ can be prepared from the state $\ket{0}^{\otimes N}$ by a quantum circuit that has depth $\tilde{O}(\chi^2)$ and consists only of unitary gates acting on neighboring pairs of qubits
\end{enumerate}

If, additionally, $\ket{\psi}$ is the unique ground state of a nearest-neighbor 1D Hamiltonian with spectral gap $\Delta$, then the bound on the bond dimension can be improved to
\begin{enumerate}[(2')]
  \item  $\chi = e^{\tilde{O}\left(\frac{\log^{4}(d)}{\Delta^{2}}\right)} \left(k/\epsilon^2\right)^{O\left(\frac{\log(d)}{\Delta}\right)}$
\end{enumerate}
\end{theorem}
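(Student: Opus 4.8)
The plan is to construct $\kettilde{\psi}$ via the \emph{disentangle-project-reentangle} procedure sketched in the introduction, following the strategy of \cite{verstraete2006matrix}, and to track the $k$-local error rather than the global fidelity. First I would partition the chain into contiguous blocks of length $\ell$ (with $\ell \gg k$, to be chosen at the end), and across each cut between adjacent blocks perform a Schmidt decomposition of $\ket\psi$. The exponential decay of correlations, combined with the area law it implies (Lemma from \cite{brandao2015exponential}, giving $S(\rho_X) \le t_0 e^{\tilde O(\xi \log d)}$), controls the entanglement spectrum across each cut: the Schmidt coefficients decay fast enough that truncating to bond dimension $\chi$ incurs only a small two-norm error per cut. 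The ``disentangle'' step applies, near each cut, a unitary supported on a region of width $\sim\ell$ that rotates the dominant Schmidt vectors into a fixed low-dimensional subspace; the ``project'' step applies a rank-$\chi$ projector $P$ (a product of projectors, one per cut) zeroing out the truncated tail; and the ``reentangle'' step applies the inverse unitaries. Because each local unitary and each projector is supported on a bounded region and the projectors at well-separated cuts commute, the composite map is a constant-depth circuit of nearest-neighbor gates, which gives property (3), and the resulting state is an MPS of bond dimension $\chi$.

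The crux of the argument is bounding the $k$-local trace distance $D_1^{(k)}(\kettilde\psi,\rho)$. Fix a length-$k$ region $X$. The key point is that only the projectors at cuts \emph{within distance $O(k)$} of $X$ can appreciably affect $\rho_X$; projectors at cuts far from $X$ act on degrees of freedom that, by exponential decay of correlations, are nearly uncorrelated with $X$, so their effect on the reduced state on $X$ is suppressed like $e^{-\dist/\xi}$. This is the step where I would have to work carefully: I cannot simply bound the global error by summing the per-cut truncation errors (that sum grows with $N$ and is exactly why the global fidelity decays), so instead I would argue that each projection $P_j$ at a nearby cut, acting on the \emph{unnormalized} post-projection vector, changes $\rho_X$ by an amount controlled by the local truncation error at that cut, and that there are only $O(k/\ell)$ such nearby cuts contributing. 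Telescoping over the $O(k/\ell)$ relevant cuts, each contributing a truncation error, and using the triangle inequality for $D$, gives a local error of roughly (number of nearby cuts) $\times$ (per-cut truncation error). Choosing $\ell$ large compared to $k$ makes the number of boundary-crossing cuts relative to $X$ small, while choosing $\chi$ large enough drives each per-cut truncation error below the required threshold.

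The main obstacle, as flagged in the introduction, is precisely that the naive error accumulation is catastrophic: the global fidelity between $\kettilde\psi$ and $\ket\psi$ tends to zero as $N \to \infty$, so any argument that passes through global distance is doomed. The nontrivial content is showing that the \emph{local} reduced density matrix is nonetheless preserved. I expect to handle this by a locality/decoupling argument: writing $\rho_X$ after all projections as a sequence of insertions of $P_j$, and bounding the perturbation from each insertion by the overlap of the truncated Schmidt tail with the support near $X$, which decays exponentially in the distance from the cut to $X$ because of the $(t_0,\xi)$-exponential decay of correlations. Concretely I would quantify how a rank-$\chi$ projection at a cut distance $m$ from $X$ perturbs $\sigma_X$, show this is at most $e^{-\Omega(m/\xi)}$ plus the intrinsic truncation error, and sum the resulting geometric-type series so that it converges to a finite bound independent of $N$. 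For the gapped case, property (2$'$), I would substitute the improved area-law scaling $S(\rho_X) \le \tilde O(\log^3(d)/\Delta)$ from \cite{arad2013area,arad2017rigorous}, which sharpens the Schmidt-tail decay and hence the bond dimension $\chi$ needed per cut; the constraint $N \ge N_0 = O(k^2/\epsilon^2) + \tilde O(\log(d)/\Delta^{3/4})$ arises from requiring enough room for the block structure and enough cuts that the fraction of $X$-crossing boundaries is small.
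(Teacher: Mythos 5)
Your high-level plan (disentangle, project, re-entangle, then track only the local error) is the same strategy the paper uses, but the two technical pillars that make that strategy work are missing or replaced by steps that fail. The most serious problem is your choice of projectors. You project, at each cut, onto the rank-$\chi$ dominant-Schmidt subspace, whereas the paper's projectors $\Pi_i = I_{L_iM_i^L} \otimes \ket{\beta_i}\bra{\beta_i}_{M_i^RR_i}$ are \emph{rank one} on $M_i^RR_i$: they project onto the purification $\ket{\beta_i}$ of the (truncated) right-half state. This rank-1 property is not cosmetic --- it is the entire locality mechanism. After a rank-1 projection the chain is exactly severed: everything to the left is in a product state with everything to the right, so all later projections and unitaries, which act only to the right, cannot change any reduced density matrix on the left at all (they only rescale the norm). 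That is why, for $X \subseteq Y = M_a^R\cdots M_b^L$, every operator with index outside $[a,b]$ drops out of $\Tr_{Y^c}(\kettilde{\phi}\bratilde{\phi})$ \emph{exactly}, and the local error is simply $\delta \le \bigl(\sum_{j=a}^b \delta_j^2\bigr)^{1/2} \le C\sqrt{k/l+3}\,e^{-l/\xi''}$ --- no contribution from distant cuts needs to be estimated at all. With your rank-$\chi$ truncation projectors the post-projection state remains entangled across every cut, so distant projections genuinely act on degrees of freedom correlated with $X$, and your proposed fix --- suppression by $e^{-\dist/\xi}$ via exponential decay of correlations --- has no proof: correlation decay is a hypothesis about the \emph{original} state $\ket{\psi}$, it bounds correlation functions of bounded observables, and it is neither inherited by the intermediate states produced by a sequence of non-commuting projections nor directly applicable to the back-action of a renormalized projection on a faraway reduced density matrix. (Note also that Schmidt-truncation projectors at different cuts are each supported on an entire half-chain, so "projectors at well-separated cuts commute" is unjustified.)

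The second gap is that you simply assert the existence of a unitary of width $\sim\ell$ near each cut that rotates the dominant Schmidt vectors into a fixed low-dimensional subspace. The left Schmidt vectors live on the whole half-chain; a unitary localized near the cut can compress them only if they all look nearly identical far from the cut, i.e., only if $\rho_{L_iR_i} \approx \rho_{L_i}\otimes\rho_{R_i}$. Establishing this approximate Markov property is the technical heart of the paper (its Lemma 6): the entropy area law you invoke does not give it --- one must truncate $\rho_{L_i}$ to low rank, expand the projected observable into $\chi^2$ product operators, apply the correlation-decay hypothesis term by term, and balance the rank against $\dist(L,R)$; only then does Uhlmann's theorem convert closeness of $\rho_{L_iR_i}$ to $\rho_{L_i}\otimes\rho_{R_i}$ into a disentangling unitary $U_i$ supported on $M_i$. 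Without that lemma your construction has no localized unitaries to work with, and without rank-1 projectors your error analysis does not close. Three smaller points: your block length $\ell \gg k$ is also off --- the paper takes $l = O\bigl(\xi''\log(k/\epsilon)\bigr)$, and with $\ell \gg k$ each width-$\ell$ unitary needs $\mathrm{poly}(d^{\ell}) = d^{\Omega(k)}$ nearest-neighbor gates, violating the depth bound $\tilde{O}(\chi^2)$ of item (3); the threshold $N_0 = O(k^2/\epsilon^2)+\cdots$ you cite belongs to Theorem 1, while this theorem requires no such constraint; and you never address why the repeatedly projected (unnormalized) vector has nonvanishing norm, which the paper must and does prove separately as its Claim 1.
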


The sort of constant-depth quantum circuit that can generate the state $\kettilde{\psi}$ in Theorem \ref{thm:mainthm} is shown in Figure \ref{fig:constantdepthcircuit}. Proof summaries as well as full proofs of Theorems \ref{thm:improvedbd} and \ref{thm:mainthm} appear in Section \ref{sec:proofs}.

We also note that, unlike Theorem \ref{thm:improvedbd}, Theorem \ref{thm:mainthm} does not require that the chain be longer than some threshold $N_0$; the statement holds regardless of the chain length, although this should be considered a technical detail and not an essential aspect of the constructions. 

\begin{figure}[ht]
    \centering
    \includegraphics[width=\columnwidth]{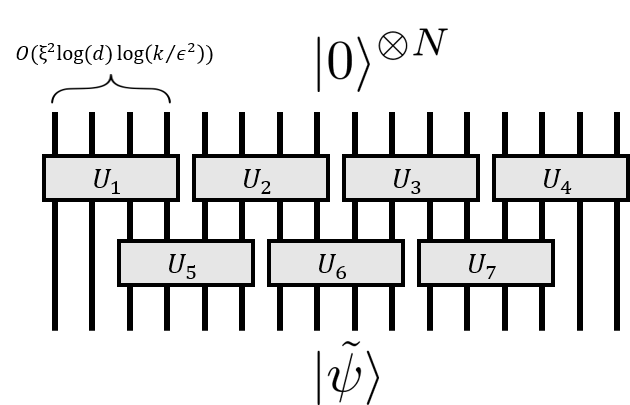}
    \caption{ \label{fig:constantdepthcircuit}Constant-depth quantum circuit that constructs the $(k,\epsilon)$-local approximation $\kettilde{\psi}$ in Theorem \ref{thm:mainthm} starting from the initial state $\ket{0}^{\otimes N}$. It is drawn here as a depth-2 circuit where unitaries $U_j$ act on segments consisting of $O(\xi^2 \log(d) \log(k/{\epsilon^2}))$ contiguous sites. Each of these unitaries could themselves be decomposed into a sequence of nearest-neighbor gates with depth $\text{poly}(k,1/\epsilon)$.}
\end{figure}

\subsection{Reduction from estimating energy density to finding local properties}

The previously stated results show that there exists a state that is both a $(k,\epsilon)$-local approximation and an MPS with bond dimension $\text{poly}(k,1/\epsilon)$. They say nothing of the algorithmic complexity required to \textit{find} a $(k,\epsilon)$-local approximation. The proofs describe how to construct the local approximation from a description of the exact ground state, but following this strategy would require first finding a description of the exact ground state (or perhaps a global approximation to it). One might hope that a different strategy would allow the local approximation to be found much more quickly than the global approximation, since the bond dimension needed to represent the approximation is much smaller. However, the following result challenges the validity of this intuition, at least in the case that the Hamiltonian is translationally invariant, by showing a relationship between the problem of finding a local approximation and the problem of estimating the energy density.

\begin{problem}[Estimating energy density]\label{prob:energydensity}
Given a nearest-neighbor translationally invariant 1D Hamiltonian $H$ on $N\geq 2$ sites and error parameter $\epsilon$, produce an estimate $\tilde{u}$ such that $\lvert u- \tilde{u}\rvert \leq \epsilon$ where $u = E/(N-1)$ is the ground state energy density.
\end{problem}

\begin{problem}[Approximating local properties]\label{prob:localprops}
Given a nearest-neighbor translationally invariant Hamiltonian $H$, an error parameter $\delta$, and an operator $O$  whose support is contained within a contiguous region of length $k$, produce an estimate $\tilde{v}$ such that $\lvert v - \tilde{v} \rvert \leq \delta$, where $v = \bra \psi O \ket \psi/ \lVert O \rVert$ is the expectation value of the operator $O/\lVert O \rVert$ in the ground state $\ket{\psi}$ of $H$. 
\end{problem}

Problem \ref{prob:energydensity} is the restriction of Problem \ref{prob:localprops} to the case where $k=2$ and the operator $O$ is the energy interaction term. Thus, there is a trivial reduction from Problem \ref{prob:energydensity} to Problem \ref{prob:localprops} with $\delta = \epsilon$. However, the next theorem, whose proof is presented in Section \ref{sec:proofs}, states a much more powerful reduction.

\begin{theorem}\label{thm:reduction}
Suppose one has an algorithm that solves Problem \ref{prob:localprops} for any single-site ($k=1$) operator $O$ and $\delta = 0.9$ in $f(\Delta,d,N)$ time, under the promise that the Hamiltonian $H$ has spectral gap at least $\Delta$. Here $d$ denotes the local dimension of $H$ and $N$ the length of the chain. Then there is an algorithm for Problem \ref{prob:energydensity} under the same promise that runs in time 
\begin{equation}
    f\left(\frac{\min(2\Delta,(N-1)\epsilon,2)}{12},2d,N\right)O(\log(1/\epsilon)).
\end{equation} 
\end{theorem}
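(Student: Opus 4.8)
The plan is to reduce estimating the energy density $u=E/(N-1)$ to a sequence of coarse comparison queries of the form \emph{``is $u>a$?''}, resolved by binary search over $a\in[0,1]$; since $u\in[0,1]$ (as $H$ is positive semi-definite with $\lVert H_{i,i+1}\rVert\le 1$), $O(\log(1/\epsilon))$ comparisons suffice to pin $u$ down to precision $\epsilon$. The entire content of the reduction is therefore to implement a single comparison using the assumed algorithm for single-site local properties.

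To implement the comparison at threshold $a$, I would append one ancilla qubit to every site (raising the local dimension to $2d$) and build a translationally invariant nearest-neighbor Hamiltonian $H'=\sum_i h'_{i,i+1}$ with $h'_{i,i+1}\propto P_iP_{i+1}(H_{i,i+1}-a)+J\,(\text{ancilla domain-wall penalty})$, where $P_i=\ket0\bra0$ acts on the ancilla and $J$ is a constant. Every term is diagonal in the ancilla computational basis, so the total ancilla configuration is conserved and $H'$ is block-diagonal across configurations. The two uniform blocks are the meaningful ones: with all ancillas in $\ket1$ the system terms switch off and the energy is $0$, while with all ancillas in $\ket0$ the system feels $H-a(N-1)$ and has ground energy $(N-1)(u-a)$. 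Hence the global ground state sits in the all-$\ket0$ block when $u<a$ and the all-$\ket1$ block when $u>a$, and the comparison can be read off \emph{exactly} by measuring $Z$ on any single ancilla: $v=\langle Z\rangle=\pm1$. Because $v\in\{+1,-1\}$, an estimate with the crude precision $\delta=0.9$ still determines $\mathrm{sign}(v)$ and thus the answer.

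The crux is the gap analysis of $H'$, which fixes the promise we may hand to the assumed algorithm. There are three kinds of excitation above the ground state: internal excitations within the active block, controlled by the original gap $\Delta$; the competing uniform block, lying at energy $(N-1)\lvert u-a\rvert$ above; and non-uniform (domain-wall) configurations, penalized at the scale $J$. The main work is to show that with a suitable \emph{constant} $J$ no non-uniform configuration ever becomes the ground state --- this is where finite-size and boundary effects on the block energy densities must be bounded --- so that after normalizing $\lVert h'_{i,i+1}\rVert\le 1$ the spectral gap of $H'$ is at least $\tfrac{1}{12}\min\!\big(2\Delta,(N-1)\lvert u-a\rvert,2\big)$. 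I expect this domain-wall/boundary estimate to be the principal obstacle.

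Finally, I would run the binary search so that it only ever \emph{relies} on comparisons made at thresholds $a$ with $\lvert u-a\rvert$ bounded below by $\Theta(\epsilon)$: maintaining an interval $[lo,hi]\ni u$, querying its midpoint, and retracting the updated endpoint by a safety margin of order $\epsilon$ each step, so that a query landing within $O(\epsilon)$ of $u$ (where the gap, and hence the promise, degrades) may return an arbitrary answer within its time bound without ever excluding the true $u$. The smallest gap we genuinely depend on is then $\Theta((N-1)\epsilon)$, which is exactly why the promise passed to the algorithm is $\Delta'=\tfrac{1}{12}\min(2\Delta,(N-1)\epsilon,2)$; the $2d$ reflects the appended ancilla, and the $O(\log(1/\epsilon))$ factor counts the binary-search steps, giving total running time $f\big(\tfrac{1}{12}\min(2\Delta,(N-1)\epsilon,2),2d,N\big)\,O(\log(1/\epsilon))$.
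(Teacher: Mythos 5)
Your overall strategy matches the paper's: an ancilla qubit per site, an ancilla-controlled combination of Hamiltonians so that the ground state's ancilla configuration encodes the answer to the comparison ``is $u>a$?'', a single-ancilla $Z$ measurement whose sign survives precision $\delta=0.9$, and a binary search that tolerates wrong answers at thresholds within $O(\epsilon)$ of $u$ (your safety-margin bookkeeping is sound and equivalent to the paper's windowing argument). But your comparison gadget has a genuine flaw. In the all-$\ket{1}$ ancilla sector you ``switch off'' the system terms, so \emph{every} state of the form $\ket{1^N}_A\otimes\ket{\eta}$ has energy exactly $0$: whenever $u>a$ --- precisely the case in which the ground state is supposed to sit in that sector --- the ground space of $H'$ is $d^N$-fold degenerate and the spectral gap is zero. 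The promise required by Problem~\ref{prob:localprops}, a unique ground state with gap at least $\Delta'$, therefore fails in one of the two cases your comparison must distinguish, and the assumed algorithm cannot be invoked there; your claimed gap bound $\tfrac{1}{12}\min(2\Delta,(N-1)\lvert u-a\rvert,2)$ cannot hold for this $H'$. (The fact that $P_iP_{i+1}(H_{i,i+1}-a)$ is not positive semi-definite is by contrast a cosmetic issue, fixable by an overall shift.) The repair is exactly the paper's device: the competing sector must carry its own translationally invariant, gapped Hamiltonian with a \emph{unique} ground state whose energy density equals the threshold. The paper uses $H^Z(t)=\sum_i \bigl(I_i\otimes I_{i+1}-(1-t)\ket{0}\bra{0}_i\otimes\ket{0}\bra{0}_{i+1}\bigr)$, whose unique ground state $\ket{0}^{\otimes N}$ has energy density $t$ and gap $1-t\geq 1/2$, and combines $H/2$ with $H^Z(s/2)$ via Lemma~\ref{lem:Kcombined}.

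Separately, the step you defer as ``the principal obstacle'' --- showing no domain-wall configuration ever becomes the ground state --- is indeed where the work lies, and the paper's resolution shows no tunable coupling $J$ is needed: the identity term on mismatched neighboring ancillas already suffices. The key input is Claim~\ref{claim:partialchain}, proved by a tiling/variational argument (cover the chain with copies of the reduced state on a segment), which gives $\bra{\eta}\sum_{i=a}^{b-2}H^{(j)}_{i,i+1}\ket{\eta}\geq \frac{b-a}{N-1}E_0^{(j)}-1$ for any partial chain. With this, each uniform stretch of ancillas loses at most $1/3$ to boundary effects while each domain wall pays $+1$, so every non-uniform configuration lies at least $1/3$ above the ground energy. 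Without this estimate (or the unique-ground-state fix above), the gap promise handed to the local-properties oracle --- and hence the stated runtime --- is unjustified.
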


Estimating the energy density to precision $\epsilon$ is equivalent to measuring the total energy to precision $\epsilon(N-1)$, so the quantity $\min(2\Delta,(N-1)\epsilon,2)$ is equivalent to the global energy resolution, twice the gap, or two, whichever is smallest. Thus, one may take $\epsilon = O(1/N)$ and understand the theorem as stating that finding local properties to within $O(1)$ precision can be done at most $O(\log(N))$ faster than finding an estimate to the total ground state energy to $O(1)$ precision. If local properties can be found in time independent of $N$ (i.e.~there is an $N$-independent upper bound to $f$), then the ground state energy can be estimated to $O(1)$ precision in time $O(\log(N))$, which would be optimal since the ground state energy scales extensively with $N$, and $\Omega(\log(N))$ time would be needed simply to write down the output. 

Another way of understanding the significance of the theorem is in the thermodynamic limit. Here it states that if one could estimate expectation values of local observables in the thermodynamic limit to $O(1)$ precision in some finite amount of time (for constant $\Delta$ and $d$), then one could compute the ground state energy density of such Hamiltonians to precision $\epsilon$ in $O(\log(1/\epsilon))$ time. This would be an exponential speedup over the best-known algorithm for computing the energy density given in \cite{huang2015computing}, which has runtime $\text{poly}(1/\epsilon)$. Taking the contrapositive, if one could show that $\text{poly}(1/\epsilon)$ time is necessary for computing the energy density, this would imply that Problem \ref{prob:localprops} with $\delta = O(1)$ is in general uncomputable in the thermodynamic limit, even given the promise that the input Hamiltonian is gapped. It is already known that Problem \ref{prob:localprops} is uncomputable when there is no such promise \cite{bausch2018undecidability}. 

It is not clear whether a $O(\log(1/\epsilon))$ time algorithm for computing the energy density is possible. The $\text{poly}(1/\epsilon)$ algorithm in \cite{huang2015computing} works even when the Hamiltonian is not translationally invariant, but it is not immediately apparent to us how one might exploit translational invariance to yield an exponential speedup.

\section{Proofs}\label{sec:proofs}
\subsection{Important lemmas for Theorems \ref{thm:improvedbd} and \ref{thm:mainthm}}

The pair of lemmas stated here are utilized in both Theorem \ref{thm:improvedbd} and Theorem \ref{thm:mainthm}. The first lemma captures the essence of the area laws stated previously, and will be essential when we want to bound the error incurred by truncating a state along a certain cut.

\begin{lemma}[Area laws \cite{brandao2015exponential}, \cite{arad2013area}]\label{lem:lowrank}
If $\sigma = \ket{\psi}\bra{\psi}$ has $(t_0, \xi)$-exponential decay of correlations then for any $\chi$ and any region of the chain $A$, there is a state $\tilde{\sigma}_A$ with rank at most $\chi$ such that
\begin{equation}\label{eq:lemmaarealaw}
D(\sigma_A, \tilde{\sigma}_A) \leq C_1\exp\left(-\frac{\log(\chi)}{8\xi\log(d)}\right),
\end{equation}
where $C_1 = \exp(t_0\exp(\tilde{O}(\xi\log(d))))$ is a constant independent of $N$.

If $\sigma$ is the unique ground state of a nearest-neighbor Hamiltonian with spectral gap $\Delta$, then this can be improved to
\begin{equation}\label{eq:lemmabetterarealaw}
D(\sigma_A, \tilde{\sigma}_A) \leq C_2\exp\left(-\tilde{O}\left(\frac{\Delta^{1/3}\log^{4/3}(\chi)}{\log^{4/3}(d)}\right)\right),
\end{equation}
where $C_2 = \exp(\tilde{O}(\log^{8/3}(d)/\Delta))$.
\end{lemma}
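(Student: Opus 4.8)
The plan is to reduce the statement to a known area law together with a generic spectral-truncation estimate, so that the only genuine content is converting an entropy bound into a purified-distance bound on a rank-$\chi$ approximation. Fix the region $A$ and diagonalize $\sigma_A = \sum_i p_i \ket{i}\bra{i}$ with $p_1 \ge p_2 \ge \cdots$. Let $P$ be the projector onto the eigenvectors of the $\chi$ largest eigenvalues and set $\tilde\sigma_A = P \sigma_A P / \Tr(P\sigma_A P)$, which has rank at most $\chi$ by construction. Writing $\delta_\chi = \sum_{i>\chi} p_i$ for the discarded tail weight, the fact that $\sigma_A$ and $\tilde\sigma_A$ are simultaneously diagonal gives $\mathcal{F}(\sigma_A,\tilde\sigma_A) = \sum_{i\le\chi}\sqrt{p_i\cdot p_i/(1-\delta_\chi)} = \sqrt{1-\delta_\chi}$, so that $D(\sigma_A,\tilde\sigma_A) = \sqrt{\delta_\chi}$. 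The entire problem therefore reduces to bounding the tail $\delta_\chi$ of the spectrum of $\sigma_A$.

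The key point --- and the crux of the argument --- is that a von Neumann entropy area law does \emph{not} by itself control $\delta_\chi$ (there are states with small $S(\sigma_A)$ yet slowly decaying spectra \cite{schuch2008entropy}); one needs an $\alpha$-Renyi area law with $0<\alpha<1$. For the exponential-decay-of-correlations case I would invoke the Renyi area law implied by \cite{brandao2015exponential} (see also \cite{huang2014area}), of the form $S_\alpha(\sigma_A) \le t_0\exp(\tilde{O}(\xi\log d))$ at the value of $\alpha$ fixed below. Using $\sum_i p_i^\alpha = \exp((1-\alpha)S_\alpha(\sigma_A))$ and monotonicity of the $p_i$ (which gives $p_\chi \le (\sum_i p_i^\alpha/\chi)^{1/\alpha}$), the standard truncation estimate reads
\[
\delta_\chi = \sum_{i>\chi} p_i \;\le\; p_\chi^{\,1-\alpha}\sum_{i>\chi} p_i^{\alpha} \;\le\; \exp\!\left(\frac{1-\alpha}{\alpha}S_\alpha(\sigma_A)\right)\chi^{-(1-\alpha)/\alpha}.
\]
Combining with $D = \sqrt{\delta_\chi}$ and choosing $\alpha$ so that $(1-\alpha)/(2\alpha) = 1/(8\xi\log d)$ yields $D(\sigma_A,\tilde\sigma_A) \le C_1\exp(-\log\chi/(8\xi\log d))$, where the prefactor $C_1 = \exp(t_0\exp(\tilde{O}(\xi\log d)))$ absorbs the harmless polynomial factor $1/(\xi\log d)$ multiplying $S_\alpha$.

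For the gapped case I would replace the input area law by the stronger one of \cite{arad2013area}, whose approximate-ground-state-projector construction controls the decay of the Schmidt/eigenvalue spectrum directly rather than through a single fixed-$\alpha$ Renyi bound. This yields, for any contiguous $A$, a stretched-exponential tail $\delta_\chi \le \exp(\tilde{O}(\log^{8/3}(d)/\Delta))\exp(-\tilde{O}(\Delta^{1/3}\log^{4/3}(\chi)/\log^{4/3}(d)))$; taking $\sqrt{\delta_\chi}$ and absorbing factors of $1/2$ into the $\tilde{O}$ gives the second claimed bound with $C_2 = \exp(\tilde{O}(\log^{8/3}(d)/\Delta))$.

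The step I expect to require the most care is precisely this hand-off to the cited results: one must invoke them in their Renyi / truncation-error form rather than their weaker von Neumann form, and one must check that bounds most naturally stated for a single cut still apply to a bulk interval $A$ with two boundaries. The latter costs at most a constant factor in the exponent, which is harmless inside the $\tilde{O}$ and the constants $C_1,C_2$; everything else is routine optimization of $\alpha$ and bookkeeping of constants.
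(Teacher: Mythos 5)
Your proposal is correct and follows essentially the same route as the paper: the paper's entire proof of this lemma is a two-sentence citation of \cite{brandao2015exponential} and \cite{arad2013area, arad2017rigorous}, and your spectral-truncation computation (reducing the purified distance to the eigenvalue tail $\delta_\chi$ and optimizing the Renyi index $\alpha$) is precisely the standard glue by which those cited results yield the stated low-rank approximation. Your emphasis that the von Neumann form of the area law is insufficient and that one must invoke the references in their Renyi/eigenvalue-tail (equivalently, smooth max-entropy) form is exactly right, and is in fact more careful on this point than the paper's own citation-style proof.
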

\begin{proof}
The first part follows from the main theorem of \cite{brandao2015exponential}. The second follows from the 1D area law presented in \cite{arad2013area}, and $\log(d)$ dependence explicitly stated in \cite{arad2017rigorous}.
\end{proof}

In both proofs we will also utilize the well-known fact that Schmidt ranks cannot differ by more than a factor of $d$ between any two neighboring cuts on the chain.

\begin{lemma}\label{lem:rankrelations}
Any state $\sigma_{AB}$ on a bipartite system $AB$ satisfies the following relations.
\begin{align}
\rank(\sigma_{AB})\rank(\sigma_B) &\geq \rank(\sigma_A) \\
\rank(\sigma_{AB})\rank(\sigma_A) &\geq \rank(\sigma_B) \\
\rank(\sigma_A) \rank(\sigma_B) &\geq \rank(\sigma_{AB})
\end{align}
\end{lemma}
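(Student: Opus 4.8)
\textbf{Proof proposal for Lemma \ref{lem:rankrelations}.}

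The plan is to prove all three inequalities from a single structural fact: the rank of a reduced density matrix equals the Schmidt rank across the corresponding cut, and the Schmidt rank of a bipartite pure state is unchanged whether computed from one side or the other. First I would purify the state $\sigma_{AB}$ by introducing a reference system $R$, writing $\sigma_{AB} = \Tr_R(\ket{\Psi}\bra{\Psi})$ for some pure $\ket{\Psi}$ on $ABR$. This converts statements about ranks of mixed reduced density matrices into statements about Schmidt ranks of a single pure state across various cuts, and it is the Schmidt-rank picture that makes the subadditivity transparent. The key identity I would invoke repeatedly is that for a pure tripartite state, $\rank(\sigma_A) = \rank(\sigma_{BR})$, $\rank(\sigma_B) = \rank(\sigma_{AR})$, and $\rank(\sigma_{AB}) = \rank(\sigma_R)$, since matching cuts of a pure state share the same nonzero Schmidt spectrum.

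The central inequality I would establish is the third one, $\rank(\sigma_A)\rank(\sigma_B) \geq \rank(\sigma_{AB})$, which is just the statement that the Schmidt rank across the $AB\,|\,R$ cut is at most the product of the local dimensions of the two factors $A$ and $B$ that make up the $AB$ side, after replacing those dimensions by the effective supported dimensions $\rank(\sigma_A)$ and $\rank(\sigma_B)$. Concretely, $\sigma_{AB}$ is supported on $\mathrm{supp}(\sigma_A)\otimes\mathrm{supp}(\sigma_B)$ (its marginals determine the supporting subspaces of each factor), so its rank cannot exceed $\dim(\mathrm{supp}(\sigma_A))\cdot\dim(\mathrm{supp}(\sigma_B)) = \rank(\sigma_A)\rank(\sigma_B)$. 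With this in hand, the other two inequalities follow by relabeling: applying the third inequality to the bipartition $A\,|\,BR$ of the pure state $\ket{\Psi}$ gives $\rank(\sigma_A) \leq \rank(\sigma_{AB})\rank(\sigma_{AR})$, and then using the purity identities $\rank(\sigma_A)=\rank(\sigma_{BR})$ and $\rank(\sigma_{AR})=\rank(\sigma_B)$ to rewrite the factors yields $\rank(\sigma_A)\leq\rank(\sigma_{AB})\rank(\sigma_B)$. The second inequality is obtained identically by swapping the roles of $A$ and $B$.

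I do not expect a serious obstacle here, since the result is essentially the submultiplicativity of tensor-product ranks dressed up in quantum-information language; the only point requiring minor care is verifying the support identity $\mathrm{supp}(\sigma_{AB}) \subseteq \mathrm{supp}(\sigma_A)\otimes\mathrm{supp}(\sigma_B)$, which follows because any vector in the kernel of $\sigma_A$ annihilates $\sigma_{AB}$ when extended trivially on $B$ (and symmetrically on $B$). An alternative, even more elementary route avoids purification entirely: write a spectral decomposition of $\sigma_{AB}$, observe directly that each eigenvector lives in $\mathrm{supp}(\sigma_A)\otimes\mathrm{supp}(\sigma_B)$ to get the third inequality, and derive the first two from the Schmidt-decomposition observation that for the eigenvectors of $\sigma_{AB}$ the reduced operators have ranks controlled by $\rank(\sigma_{AB})$ times the rank on the traced-out side. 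I would present the purification argument as the cleanest, relegating the support verification to a single sentence.
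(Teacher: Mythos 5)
Your proposal is correct, and it shares the paper's overall skeleton---purify $\sigma_{AB}$ and use the fact that complementary marginals of a pure state have equal rank, so that the three inequalities become permutations of one another---but your core step is genuinely different from the paper's. The paper proves the \emph{first} inequality directly: it Schmidt-decomposes the purification across the $AB\,|\,C$ cut into $\rank(\sigma_{AB})$ terms, re-expands each $AB$ Schmidt vector in the eigenbasis of $\sigma_B$, and observes that the support of $\sigma_A$ is then spanned by at most $\rank(\sigma_{AB})\rank(\sigma_B)$ vectors $\ket{\tau_{jk}}$. You instead prove the \emph{third} inequality directly, via the support containment $\mathrm{supp}(\sigma_{AB}) \subseteq \mathrm{supp}(\sigma_A)\otimes\mathrm{supp}(\sigma_B)$ (justified correctly by the kernel argument, since $\Tr[(P\otimes I)\sigma_{AB}]=0$ with both operators positive semi-definite forces $(P\otimes I)\sigma_{AB}=0$ for $P$ the projector onto $\ker\sigma_A$), and then recover the first two by applying it to the pairs $(B,R)$ and $(A,R)$ inside the purification. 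Both base arguments are elementary and valid; yours isolates a reusable, purification-free fact about supports (indeed your ``alternative route'' at the end is the same argument with the purification stripped away), while the paper's nested-decomposition argument avoids needing any statement about kernels of marginals. One presentational detour worth ironing out: your intermediate inequality $\rank(\sigma_A) \leq \rank(\sigma_{AB})\rank(\sigma_{AR})$ is really the third inequality applied to the pair $(B,R)$, namely $\rank(\sigma_{BR})\leq\rank(\sigma_B)\rank(\sigma_R)$, with only some of the purity identities substituted; writing one factor as $\rank(\sigma_{AR})$ only to immediately rewrite it back as $\rank(\sigma_B)$ is circuitous---cleaner to apply the inequality to $(B,R)$ and substitute $\rank(\sigma_{BR})=\rank(\sigma_A)$, $\rank(\sigma_R)=\rank(\sigma_{AB})$ in one step.
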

\begin{proof}
We can purify $\sigma_{AB}$ with an auxiliary system $C$ into the state $\ket{\eta}$. We can let $\sigma = \ket{\eta}\bra{\eta}$ and note that $\rank(\sigma_{AB}) = \rank(\sigma_C)$. Thus each of these three equations say the same thing with permutations of $A$, $B$, and $C$. We will show the first equation. Write Schmidt decomposition
\begin{equation}
    \ket{\eta} = \sum_{j=1}^{\rank(\sigma_{AB})} \lambda_j \ket{\nu_j}_{AB} \otimes \ket{\omega_j}_C
\end{equation}
and then decompose $\ket{\nu_j}$ to find
\begin{equation}
    \ket{\eta} = \sum_{j=1}^{\rank(\sigma_{AB})}\sum_{k=1}^{\rank(\sigma_B)} \lambda_j \gamma_{jk}\ket{\tau_{jk}}_A \otimes \ket{\mu_k}_{B} \otimes \ket{\omega_j}_C,
\end{equation}
where $\{\ket{\mu_k}\}_{k=1}^{\rank(\sigma_B)}$ are the eigenvectors of $\sigma_{B}$. This shows that the support of $\sigma_A$ is spanned by the set of $\ket{\tau_{jk}}$ and thus its rank can be at most $\rank(\sigma_{AB})\rank(\sigma_B)$.
\end{proof}

\begin{corollary}\label{cor:schmidtrankincrease}
If $\ket{\eta}$ is a state on a chain of $N$ sites with local dimension $d$, and the Schmidt rank of $\ket{\eta}$ across the cut between sites $m$ and $m+1$ is $\chi$, then the Schmidt rank of $\ket{\eta}$ across the cut between sites $m'$ and $m'+1$ is at most $\chi d^{\lvert m - m' \rvert}$. 
\end{corollary}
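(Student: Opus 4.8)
The plan is to reduce the statement to a single-site version and then iterate. First I would recall the standard fact, noted earlier in the excerpt, that the Schmidt rank of $\ket{\eta}$ across the cut between sites $m$ and $m+1$ equals $\rank(\sigma_L)$, where $\sigma = \ket{\eta}\bra{\eta}$ and $L = \{1,\ldots,m\}$ is the region to the left of the cut. So the claim is purely a statement about how ranks of reduced density matrices change as we enlarge the region $L$ one site at a time. By the left-right symmetry of the Schmidt decomposition it suffices to treat the case $m' > m$; the case $m' < m$ is identical after reflecting the chain.

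The crux is the single-site step: moving the cut from position $m$ to position $m+1$ increases the Schmidt rank by a factor of at most $d$. To see this I would apply Lemma \ref{lem:rankrelations} with the bipartition $A = \{1,\ldots,m\}$ and $B = \{m+1\}$, so that $AB = \{1,\ldots,m+1\}$ is the left region for the cut at $m+1$. The third relation of that lemma gives
\begin{equation}
\rank(\sigma_{AB}) \leq \rank(\sigma_A)\rank(\sigma_B).
\end{equation}
Since $B$ is a single site of local dimension $d$, we have $\rank(\sigma_B) \leq d$, and therefore the Schmidt rank across the cut at $m+1$ is at most $d$ times the Schmidt rank across the cut at $m$.

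Finally I would promote this one-step bound to the full statement by induction on the separation $\lvert m - m'\rvert$: starting from Schmidt rank $\chi$ at the cut between $m$ and $m+1$ and applying the single-site estimate $\lvert m-m'\rvert$ times yields a Schmidt rank of at most $\chi d^{\lvert m-m'\rvert}$ at the cut between $m'$ and $m'+1$, which is exactly Corollary \ref{cor:schmidtrankincrease}. I do not expect any genuine obstacle here; the only point requiring care is selecting the correct one of the three inequalities in Lemma \ref{lem:rankrelations} (the subadditive direction $\rank(\sigma_A)\rank(\sigma_B) \geq \rank(\sigma_{AB})$, rather than one of the two divisive relations) and confirming that appending a single site, rather than removing one, is what the induction needs.
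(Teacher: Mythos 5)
Your proof is correct and takes essentially the same approach as the paper: both rest on the third relation of Lemma \ref{lem:rankrelations} combined with the dimension bound on the rank of a reduced state. The only difference is that the paper applies the lemma once, taking $B$ to be the entire intermediate block $[m+1,m']$ with $\rank(\sigma_B) \leq d^{\lvert m'-m\rvert}$, whereas you iterate the single-site case $\lvert m-m'\rvert$ times; the two are interchangeable.
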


\begin{proof}
Without loss of generality, assume $m \leq m'$. The reduced density matrix of $\ket{\eta}\bra{\eta}$ on sites $[m+1,m']$ has rank at most $d^{\lvert m'-m\rvert}$ since this is the dimension of the entire Hilbert space on that subsystem. Meanwhile the rank of the reduced density matrix on sites $[1,m]$ is $\chi$. So by the previous lemma, the rank over sites $[1,m']$ is at most $\chi d^{\lvert m'-m\rvert}$.
\end{proof}


\subsection{Proof of Theorem \ref{thm:improvedbd}}
First we state and prove a lemma that will be essential for showing the first part of Theorem \ref{thm:improvedbd}. Then we provide a proof summary of Theorem \ref{thm:improvedbd}, followed by its full proof.

\begin{lemma}\label{lem:absorbentropy}
Given two quantum systems $A$ and $B$ and states $\tau_A$ on $A$ and $\tau_B$ on $B$, there exists a state $\sigma_{AB}$ on the joint system $AB$ such that $\sigma_A = \tau_A$, $\sigma_B = \tau_B$, and $\rank(\sigma_{AB}) \leq \max(\rank(\tau_A), \rank(\tau_B))$.
\end{lemma}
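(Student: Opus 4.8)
The plan is to build an explicit purification rather than correlate the two systems directly. Assume without loss of generality that $r_A := \rank(\tau_A) \geq \rank(\tau_B) =: r_B$ (otherwise swap the roles of $A$ and $B$), so the target rank is $r_A$, and write the spectral decompositions $\tau_A = \sum_{i=1}^{r_A} p_i \ket{a_i}\bra{a_i}$ and $\tau_B = \sum_{j=1}^{r_B} q_j \ket{b_j}\bra{b_j}$. Since I will use this eigendata directly, the marginals will come out exactly equal to $\tau_A$ and $\tau_B$, so the only real content is matching the eigenvalue data while keeping the joint rank at $r_A$.

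The naive attempt is to correlate the systems classically in their eigenbases, i.e.\ to take $\sigma_{AB} = \sum_i p_i \ket{a_i}\bra{a_i} \otimes \omega_B^{(i)}$; its $A$-marginal is automatically $\tau_A$, and since the blocks are orthogonal its rank is $\sum_i \rank(\omega_B^{(i)})$, which is $\leq r_A$ only if every $\omega_B^{(i)}$ is pure. But then matching the $B$-marginal means writing $\tau_B$ as a $p$-weighted ensemble of $r_A$ pure states, which by the standard ensemble/majorization characterization (Schr\"odinger--HJW) is possible only when $p$ is majorized by $q$ --- a condition that fails for generic spectra (and symmetrically the $B$-classical version needs $q \prec p$). \emph{This is the crux of the difficulty: purely classical correlations are too rigid, and one must inject coherence between $A$ and $B$ to beat this majorization barrier.}

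To get around it I would purify across the cut $A \,|\, BR$ with an ancilla $R \cong \bbC^{r_A}$ and set $\ket{\Psi}_{ABR} = \sum_{i=1}^{r_A} \sqrt{p_i}\, \ket{a_i}_A \ket{f_i}_{BR}$ for an orthonormal family $\{\ket{f_i}\}$ in $BR$ to be chosen. The Schmidt form gives $\sigma_A = \tau_A$ for free, and because $\dim R = r_A$ we automatically get $\rank(\sigma_{AB}) = \rank(\sigma_R) \leq r_A$, so the rank budget is met regardless of the $\ket{f_i}$. The remaining task is to arrange $\sigma_B = \sum_i p_i \Tr_R \ket{f_i}\bra{f_i} = \tau_B$, and I would do this by taking each $\ket{f_i}$ to be a purification of $\tau_B$, so that every $\Tr_R \ket{f_i}\bra{f_i} = \tau_B$ and the convex combination collapses to $\tau_B$ since $\sum_i p_i = 1$. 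Concretely, fix the reference purification $\ket{f_0} = \sum_{j=0}^{r_B-1} \sqrt{q_{j+1}}\, \ket{b_{j+1}}_B \ket{j}_R$ and put $\ket{f_i} = (I_B \otimes S^i)\ket{f_0}$, where $S$ is the cyclic shift $\ket{k} \mapsto \ket{k+1 \bmod r_A}$ on $R$; finding $r_A$ mutually orthogonal purifications is exactly what forces $\dim R = r_A = \max(r_A,r_B)$.

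The one step that needs care, and which I view as the main obstacle, is orthonormality of the $\ket{f_i}$, since naively shifted purifications share $R$-basis vectors. The computation $\braket{f_i}{f_{i'}} = \bra{f_0}(I \otimes S^{i'-i})\ket{f_0} = \sum_j q_{j+1} \braket{j}{j+i'-i} = \delta_{ii'}$ resolves it: the rigid pairing $\braket{b_{j+1}}{b_{j'+1}} = \delta_{jj'}$ locks the $B$-labels together, so the $B$-parts attached to any shared $R$-ket are distinct eigenvectors and the cross terms vanish. Given this, I would finish with the two one-line marginal traces confirming $\sigma_A = \tau_A$ and $\sigma_B = \tau_B$, invoke $\rank(\sigma_{AB}) = \rank(\sigma_R) \leq r_A$, and appeal to the $r_A \leftrightarrow r_B$ symmetry to cover the case $r_B > r_A$.
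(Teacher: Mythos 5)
Your proof is correct, but it takes a genuinely different route from the paper's. The paper's argument is iterative: in each round it pairs the residual eigenvectors of the two marginals (smallest eigenvalues first), forms the subnormalized entangled vector $\ket{u_j}=\sum_i\sqrt{\min(\lambda_{j,i},\mu_{j,i})}\ket{s_{j,i}}\otimes\ket{r_{j,i}}$, subtracts its marginals from the residuals, and finally sets $\sigma_{AB}=\sum_{j}\ket{u_j}\bra{u_j}$; the rank bound then rests on a combinatorial count showing the sum of residual ranks drops by at least $\min(a_j,b_j)$ in each round, with a separate observation about the last round to get $m\le\max(a_1,b_1)$. Your construction is one-shot: you purify into $ABR$ with an ancilla of dimension $r_A=\max(\rank(\tau_A),\rank(\tau_B))$, choose the $BR$-vectors $\ket{f_i}$ to be mutually orthogonal purifications of $\tau_B$ obtained by cyclic shifts on $R$, and trace out $R$. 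This makes the rank bound automatic, since $\rank(\sigma_{AB})=\rank(\sigma_R)\le\dim R$, and concentrates all the work in the orthogonality check $\braket{f_i}{f_{i'}}=\delta_{ii'}$, which you verify correctly (orthogonality of the $\ket{b_j}$ kills every cross term sharing an $R$-basis ket); the marginal checks are also right, as orthonormality of the $\ket{f_i}$ yields $\sigma_A=\tau_A$ from the Schmidt form, and each $\ket{f_i}$ purifying $\tau_B$ yields $\sigma_B=\sum_i p_i\tau_B=\tau_B$. One cosmetic point: your superposition runs over $i=1,\dots,r_A$ while the shift family is naturally indexed $i=0,\dots,r_A-1$; since $S^{r_A}=I$ these index sets give the same orthonormal family, so nothing breaks, but the indexing should be made consistent. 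In terms of what each approach buys: yours is arguably cleaner and needs no counting argument, while the paper's iterative scheme builds $\sigma_{AB}$ directly inside $A\otimes B$ with an explicit eigenvector pairing and never invokes an auxiliary system. Both deliver exactly the statement of the lemma, and since the paper only ever uses the statement (iteratively, to absorb entropy across segments), your proof is a full substitute.
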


\begin{proof}
We'll apply an iterative procedure. For round 1 let $\alpha_1 = \tau_A$ and $\beta_1 = \tau_B$. In round $j$ write spectral decomposition \begin{align}
    \alpha_j &= \sum_{i=1}^{a_j} \lambda_{j,i} \ket{s_{j,i}}\bra{s_{j,i}}_A\\
    \beta_j &= \sum_{i=1}^{b_j} \mu_{j,i} \ket{r_{j,i}}\bra{r_{j,i}}_B,
\end{align}
where $a_j$ and $b_j$ are the ranks of states $\alpha_j$ and $\beta_j$, eigenvectors $\{s_{j,i}\}_{i=1}^{a_j}$ and $\{r_{j,i}\}_{i=1}^{b_j}$ form orthonormal bases of the Hilbert spaces of systems $A$ and $B$, respectively, and eigenvalues $\{\lambda_{j,i}\}_{i=1}^{a_j}$ and $\{\mu_{j,i}\}_{i=1}^{b_j}$ are non-decreasing with increasing index $i$ (i.e.~smallest eigenvalues first). Then define 
\begin{equation}\label{eq:uj}
    \ket{u_j} = \sum_{i=1}^{\min(a_j,b_j)} \sqrt{\min(\lambda_{j,i},\mu_{j,i})} \ket{s_{j,i}}_A \otimes \ket{r_{j,i}}_B,
\end{equation}
which may not be a normalized state. Define recursion relation
\begin{align}
    \alpha_{j+1} &= \alpha_j - \Tr_{B}(\ket{u_j}\bra{u_j})\nonumber \\
    \beta_{j+1} &= \beta_j - \Tr_{A}(\ket{u_j}\bra{u_j}) \label{eq:recursion}
\end{align}
and repeat until round $m$ when $\alpha_{m+1} = \beta_{m+1} = 0$. Let
\begin{equation}\label{eq:sigmaAB}
    \sigma_{AB} = \sum_{j=1}^m \ket{u_j} \bra{u_j}.
\end{equation}
Clearly $\rank(\sigma_{AB}) \leq m$. We claim that $m \leq \max(\rank(\tau_A), \rank(\tau_B))$. To show this we note that
\begin{equation}\label{eq:rankdecrease}
    a_{j+1} + b_{j+1} \leq \max(a_j, b_j).
\end{equation}
We can see this is true by inspecting the $i$th term in the Schmidt decomposition of $\ket{u_j}$ in Eq.~\eqref{eq:uj}, and noting that either its reduced density matrix on system $A$ is $\lambda_{j,i}\ket{s_{j,i}} \bra{s_{j,i}}$ or its reduced density matrix on system $B$ is $\mu_{j,i}\ket{r_{j,i}} \bra{r_{j,i}}$ (or both). So when the reduced density matrices of $\ket{u_j}\bra{u_j}$ are subtracted from $\alpha_j$ and $\beta_j$ to form $\alpha_{j+1}$ and $\beta_{j+1}$ in Eqs.~\eqref{eq:recursion}, each of the $\min(a_j, b_j)$ terms causes the combined rank $a_{j+1} + b_{j+1}$ to decrease by at least one in comparison to $a_j+b_j$.

This alone implies that $m \leq \max(a_1, b_1)+1$, since by Eq.~\eqref{eq:rankdecrease}, the sequence $\{a_j+b_j\}_j$ must decrease by at least $\min(a_1,b_1)$ after the first round, and then by at least $1$ in every other round, reaching 0 when $j = \max(a_1,b_1)+1$. However, we can also see that the last round must see a decrease by at least 2, because it is impossible for $a_{m} = 0$ and $b_m =1$ or vice versa (since $\Tr(\alpha_j)$ must equal $\Tr(\beta_j)$ for all $j$). Thus $m \leq \max(a_1,b_1)$.

Moreover, Eqs.~\eqref{eq:recursion} and \eqref{eq:sigmaAB} imply that $\sigma_A = \alpha_1 = \tau_A$ and $\sigma_B = \beta_1 = \tau_B$. 
\end{proof}

\begin{figure}[ht]
\includegraphics[width=\linewidth]{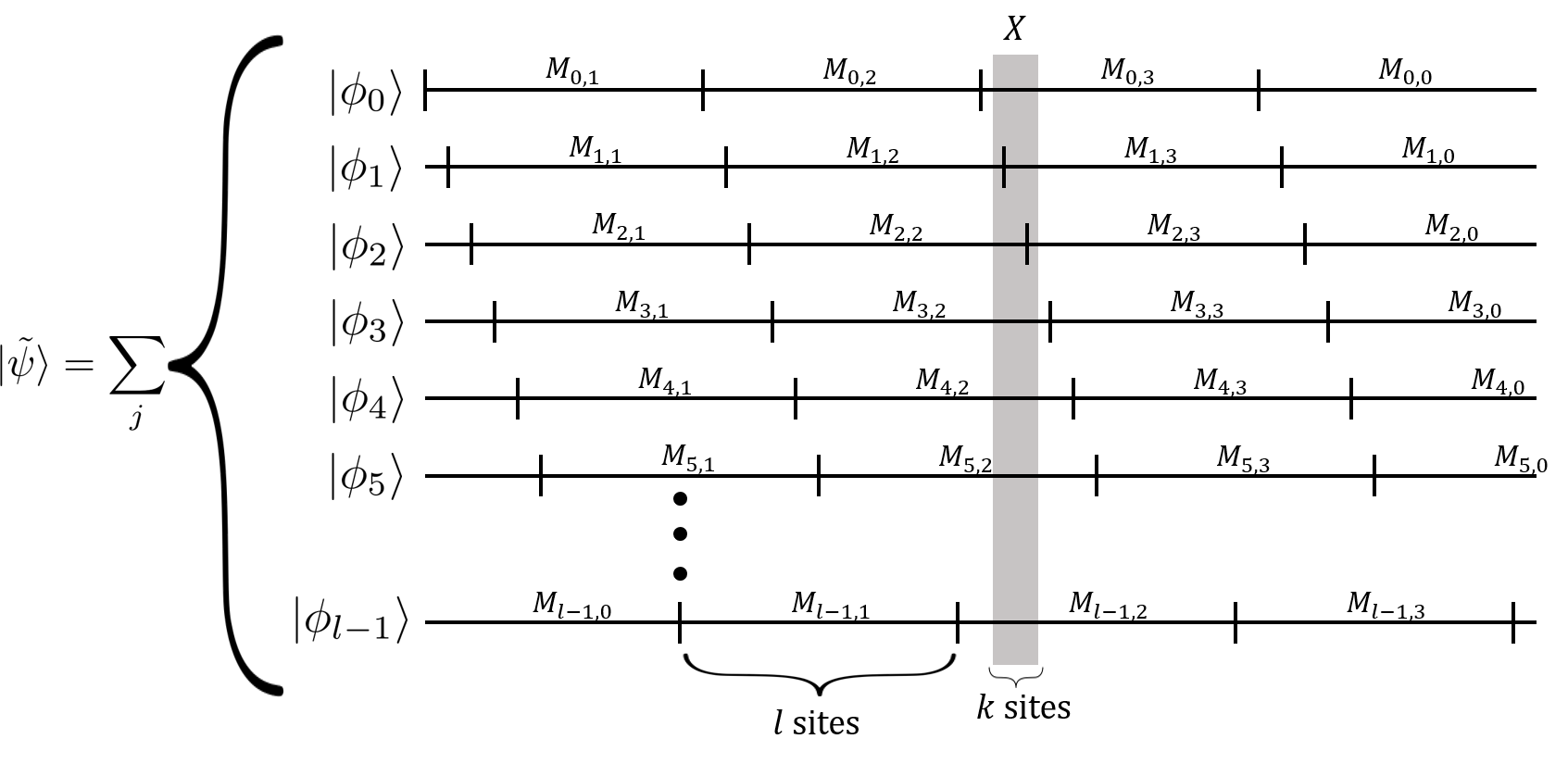}

\caption{\label{fig:improvedbd}Schematic overview of the proof of Theorem \ref{thm:improvedbd}. Many states $\ket{\phi_j}$ are constructed with staggered divisions between regions $M_{j,i}$ of length $l$, then the $\ket{\phi_j}$ are summed in superposition. Properties supported within a length-$k$ region $X$ are faithfully captured by $\ket{\phi_j}$ for values of $j$ such that $X$ does not overlap the boundaries between regions $M_{j,i}$. Most values of $j$ qualify under this criteria as long as $l$ is much larger than $k$. Additional structure is defined (the regions $B_{j,j'}$ in Part 1, and $B_{j,i}$ in Part 2) in order to force $\braket{\phi_j}{\phi_{j'}} =\delta_{jj'}$, but this structure is not reflected on the schematic.}
\end{figure}

\begin{proof}[Proof summary of Theorem \ref{thm:improvedbd}]

In Ref.~\cite{schuch2017matrix}, an MPO that is a $(k,\epsilon)$-local approximation to a given state $\ket{\psi}$ was formed by dividing the chain into many length-$l$ segments, tensoring together a low-bond-dimension approximation of the exact state reduced to each segment, and then summing over (as a mixture) translations of locations for the divisions between the segments. We follow the same idea but for pure states: for each integer $j = 0,\ldots, l-1$, we divide the state into many length-$l$ segments and create a pure state approximation $\ket{\phi_j}$ that captures any local properties that are supported entirely within one of the segments. Then to form $\kettilde{\psi}$, we sum in superposition over all the $\ket{\phi_j}$, where each $\ket{\phi_j}$ has boundaries between segments occurring in different places (see Figure \ref{fig:improvedbd}). Thus, for any length-$k$ region $X$, a large fraction of the terms $\ket{\phi_j}$ in the superposition are individually good approximations on region $X$. The fact that a small fraction of the terms are not necessarily a good approximation creates additional, but small, error in the local approximation.

In order to avoid interaction between different terms in the superposition, we add additional structure to make the $\ket{\phi_j}$ in the superposition exactly orthogonal to one another. In our construction for general states, this additional structure consists of a set of disjoint, sparsely distributed, single-site regions $B_{j,j'}$, one for each pair of integers $j \neq j'$ with $0\leq j, j' < l$. We force $\ket{\phi_j}$ to be the pure state $\ket{0}$ and $\ket{\phi_{j'}}$ to be $\ket{1}$ when reduced to $B_{j,j'}$ to guarantee that $\braket{\phi_j}{\phi_{j'}}=0$. Our construction for states that have exponential decay of correlations is similar: we define a series of regions $B_{j,i}$ and for each pair $(j,j')$ force $\ket{\phi_j}$ and $\ket{\phi_{j'}}$ to have orthogonal supports when reduced to one of these regions. 

Our approach for constructing $\ket{\phi_j}$ differs if $\ket{\psi}$ is a general state (Part 1), or if it is a state that either has exponentially decaying correlations or (additionally) is the ground state of a nearest-neighbor Hamiltonian (Part 2). In the latter case, we examine each length-$l$ segment individually and truncate the bonds of the exact state on all but a few of the rightmost sites within that segment. The area law implies these truncations have minimal effect. We use those few rightmost sites to purify the mixed state on the rest of the segment. Then $\ket{\phi_j}$ is a tensor product over pure states on each of the segments. The bond dimension can be bounded within each segment of $\ket{\phi_j}$ which is sufficient to bound the bond dimension of $\kettilde{\psi}$.

This does not work for general states because without an area law, bond truncations result in too much error, and without the truncations we do not have enough room to purify the state. For a general state, there is simply too much entropy in the length-$l$ segment to fully absorb with only a few sites at the edge of the segment.  Instead, we have the various segments absorb each other's entropy by developing a procedure to engineer entanglement between different length-$l$ segments that exactly preserves the reduced density matrix on each segment and keeps the Schmidt rank constant (albeit exponential in $k/\epsilon$) across any cut. The crux of this procedure is captured in Lemma \ref{lem:absorbentropy}.
\end{proof}

\begin{proof}[Proof of Theorem \ref{thm:improvedbd}]

\textit{Part 1: Items (1) and (2)}
\\
First we consider the case of a general state $\ket{\psi}$ and construct an approximation $\kettilde{\psi}$ satisfying items (1) and (2). A different construction is given in Part 2 to show items (2') and (2''), but it is similar in approach. Throughout this proof, we use a convention where sites are numbered $0$ to $N-1$, which differs from the rest of the paper.

We choose integer $l$ to be specified later. We require $l > k$. To construct $\kettilde{\psi}$, we will first construct states $\ket{\phi_j}$ for $j=0,\ldots, l-1$ and then sum these states in superposition. Any reference to the index $j$ will be taken mod $l$. Consider a fixed value of $j$. To construct $\ket{\phi_j}$ we break the chain into $n = \lfloor N/l \rfloor$ blocks where $n-1$ blocks $\{M_{j,i}\}_{i=1}^{n-1}$ have length exactly $l$ and the final block $M_{j,0}$ has length at least $l$ and less than $2l$. We arrange the blocks so that the leftmost site of block $M_{j,1}$ is site $j$; thus block $M_{j,i}$ contains the sites $[j+l(i-1),j+li-1]$ for $i=1,\ldots, n-1$, and the last block $M_{j,0}$ ``rolls over'' to include sites at both ends of the chain: sites $[j+l(n-1),N-1]$ and $[0,j-1]$. A schematic of the arrangement is shown in Figure \ref{fig:improvedbd}. Any reference to the index $i$ will be taken mod $n$.

We also define $l^2-l$ single-site blocks on the chain which we label $B_{j,j'}$ for all pairs $j \neq j'$. $B_{j,j'}$ consists of the site with index $3l^2(j+j')+l(j-j')+2l^2$. This definition is possible as long as their are sufficiently many sites: $N \geq N_0 = O(l^3)$. It can be verified that since $0 \leq j,j' \leq l-1$, the distance between any $B_{j,j'}$ and $B_{j'',j'''}$ for any distinct pairs $(j,j')$ and $(j'',j''')$ is at least $l$. For each $j$, let $B_{j} = \cup_{j' \neq j} B_{j,j'} \cup B_{j',j}$. For each $j,i$, let $A_{j,i} = M_{j,i} \setminus M_{j,i}\cap B_j$. Thus, in most cases $A_{j,i} = M_{j,i}$ since $B_j$ is a relatively small set of sites. Let $A_j = \cup_{i=0}^{n-1} A_{j,i}$ = $B_j^c$, the complement of $B_j$.

The state $\ket{\phi_j}$ will have the form
\begin{equation}\label{eq:phij1}
    \ket{\phi_j} = \ket{Q_{j}}_{A_j} \otimes \bigotimes_{j'\neq j}\left( \ket{0}_{B_{j,j'}}\otimes  \ket{1}_{B_{j',j}}\right),
\end{equation}
where $\ket{0}$ and $\ket{1}$ are two of the $d$ computational basis states located on a single site. In other words $\ket{\phi_j}$ is a product state over all single site regions $B_{j,j'}$ with some other (yet to be specified) state $\ket{Q_j}$ on the remainder of the chain $A_j$. 

To construct $\ket{Q_j}$, we apply Lemma \ref{lem:absorbentropy} iteratively as follows. We let $\sigma_1 = \rho_{A_{j,j+2}}$ (the reduced matrix of the exact state $\rho$ on region $A_{j,j+2}$). We combine $\sigma_1$ with $\rho_{A_{j,j+3}}$ using Lemma \ref{lem:absorbentropy} to form a state $\sigma_2$ on region $A_{j,j+2}A_{j,j+3}$ such that $\rank(\sigma_2) \leq \max( \rank(\sigma_1), \rank(\rho_{A_{j,j+3}}))$. For any $j,i$, the rank of the state on any region $A_{j,i}$ is less than $d^{2l}$ since any region contains at most $2l$ sites. So if we apply this process iteratively, forming $\sigma_{p+1}$ by combining $\sigma_p$ and the state on region $A_{j,j+p+2}$ ($j+p+2$ is taken mod $n$), then we end up with a state $\sigma_{n-2}$ with rank at most $d^{2l}$ defined over all of $A_j$ except $A_{j,j}$ and $A_{j,j+1}$. Since by construction $B_j$ contains no sites with index smaller than $2l^2$ and $A_{j,j}$ and $A_{j,j+1}$ are contained within the first $2l^2$ sites, we have $A_{j,j}=M_{j,j}$ and $A_{j,j+1} = M_{j,j+1}$ meaning each of these two regions each contain $l$ sites and the total dimension of the Hilbert space over $A_{j,j}A_{j,j+1}$ is at least $d^{2l}$. Thus we may use regions $A_{j,j}$ and $A_{j,j+1}$ to purify the state $\sigma_{n-2}$. We let $\ket{Q_j}$ be any such purification.

The key observation is that the state $\ket{\phi_j}$, as defined by Eq.~\eqref{eq:phij1}, will get any local properties exactly correct as long as they are supported entirely within a segment $A_{j,i}$ for some $i \neq j, j+1$. As long as $l$ is large, this will be the case for most length-$k$ regions, but it will not be the case for some regions that cross the boundaries between regions $A_{j,i}$ or for regions that contain one of the single site regions $B_{j,j'}$ or $B_{j',j}$. 

To fix this we sum in \textit{superposition} over the states $\ket{\phi_j}$ for each value of $j$. The motivation to do this is so that every length-$k$ region will be contained within $A_{j,i}$ for some value of $i$ in most, but not all, of the terms in the superposition. We will show that most is good enough. We let
\begin{equation}
    \kettilde{\psi} = \frac{1}{\sqrt{l}}\sum_{j=0}^{l-1}\ket{\phi_j}.
\end{equation}
We note that $\braket{\phi_j}{\phi_{j'}} = \delta_{jj'}$ since $\ket{\phi_j}$ is simply $\ket{0}$ when reduced to region $B_{j,j'}$ and $\ket{1}$ when reduced to region $B_{j',j}$, while $\ket{\phi_{j'}}$ is $\ket{1}$ when reduced to region $B_{j,j'}$ and $\ket{0}$ when reduced to $B_{j',j}$. Thus $\kettilde{\psi}$ is normalized:
\begin{equation}
    \langle \tilde{\psi} | \tilde{\psi} \rangle = \frac{1}{l}\sum_{j,j'} \braket{\phi_j}{\phi_{j'}} = 1.
\end{equation}

This completes the construction of the approximation. We now wish to show it has the desired properties. To show item (1), we compute the (local) distance from $\ket{\psi}$ to $\kettilde{\psi}$. Let $\tilde{\rho} = \kettilde{\psi}\bratilde{\psi}$ and consider an arbitrary length-$k$ region $X$. We may write
\begin{align}
   & D_1(\rho_X, \tilde{\rho}_X) \nonumber\\
   ={}&\frac{1}{2}\left\lVert\left(\frac{1}{l} \sum_{j=0}^{l-1}\sum_{j'=0}^{l-1}\Tr_{X^c}(\ket{\phi_j}\bra{\phi_{j'}})\right)-\Tr_{X^c}(\ket{\psi}\bra{\psi})\right\rVert_1. 
\end{align}
First we examine terms in the sum for which $j \neq j'$. Since $B_{j,j'}$ and $B_{j',j}$ are separated by at least $l$ sites and $l > k$, $X^c$ must include either $B_{j,j'}$ or $B_{j',j}$ (or both). Since $\ket{\phi_j}$ and $\ket{\phi_{j'}}$ have orthogonal support on both those regions, and at least one of them is traced out, the term vanishes. 

Thus we have
\begin{align}
    D_1(\rho_X, \tilde{\rho}_X) &= \frac{1}{2}\left\lVert\frac{1}{l} \sum_{j=0}^{l-1}\Tr_{X^c}(\ket{\phi_j}\bra{\phi_{j}}-\ket{\psi}\bra{\psi})\right\rVert_1 \nonumber\\
    &\leq\frac{1}{l} \sum_{j=0}^{l-1}D_1\left(\rho_X, \Tr_{X^c}(\ket{\phi_j}\bra{\phi_j})\right). \label{eq:tracesumj2}
\end{align}

For a particular $j$, there are two cases. Case 1 includes values of $j$ for which $X$ falls completely within the $A_{j,i}$ for some $i$ with $i \neq j, j+1$. For these values of $j$ the term vanishes because the reduced density matrix of $\ket{\phi_j}\bra{\phi_j}$ on $X$ is exactly $\rho_X$. Case 2 includes all other values of $j$. For this to be the case, either $X$ spans the boundary between two regions $M_{j,i}$ and $M_{j,i+1}$ (at most $k-1$ different values of $j$), $X$ contains a site $B_{j,j'}$ or $B_{j',j}$ for some $j'$ (at most 2 values of $j$, since the separation between sites $B_{j,j'}$ implies only one may lie within $X$), or $X$ is contained within $A_{j,j}$ or $A_{j,j+1}$ (at most 2 values of $j$). In this case, the term will not necessarily be close to zero, but we can always upper bound the trace distance by 1. The number of terms in the sum that qualify as Case 2 is therefore at most $k+3$, and the total error can be bounded:
\begin{equation}
   D_1(\rho_X, \tilde{\rho}_X)  \leq \frac{k+3}{l}.
\end{equation}
Choosing $l = (k+3)/\epsilon$ shows item (1), that $\kettilde{\psi}$ is a $(k,\epsilon)$-local approximation to $\ket{\psi}$.

To show item (2), we bound the Schmidt rank of the state $\kettilde{\psi}$ across every cut that bipartitions the chain into two contiguous regions. Since $\kettilde{\psi}$ is a superposition over $l$ terms $\ket{\phi_j}$, the Schmidt rank can be at most $l$ times greater than that of an individual $\ket{\phi_j}$. Fix some value of $j$ and some cut of the chain at site $s$. Since $\ket{\phi_j}$ is a product state between regions $A_j$ and $B_j$, and moreover it is a product state on each individual site in $B_j$, we may ignore $B_j$ when calculating the Schmidt rank (it has no entanglement), and focus merely on $\ket{Q_j}_{A_j}$. We constructed the state $\ket{Q_j}$ by building up mixed states $\sigma_p$ on region $A_{j,j+2}\ldots A_{j,j+p+1}$ until $p = n-2$, then purified with the remaining two regions. Each $\sigma_p$ has $\rank(\sigma_p) \leq d^{2l}$. Now consider an integer $b$ with $1 \leq b \leq n-1$ and $b \neq j,j+1$. Denote $\sigma = \sigma_{n-2}$ and note that
\begin{align}
    &\rank(\sigma_{A_{j,1}\ldots A_{j,b}}) \nn
    \leq{}&\rank(\sigma_{A_{j,j+2}\ldots A_{j,b}})\rank(\sigma_{A_{j,j+2}\ldots A_{j,0}}) \nn
    ={}&\rank(\sigma_{b-j-1})\rank(\sigma_{n-j-1})
    \leq d^{4l},
\end{align}
where the first inequality follows from Lemma \ref{lem:rankrelations}.

Moreover, we may choose $b$ such that the cut between $A_{j,b}$ and $A_{j,b+1}$ falls within $l$ sites of site $s$. We find that the region $A_{j,1}\ldots A_{j,b}$ differs from the region containing sites $[0,s]$ by at most $l$ sites at each edge. Thus the Schmidt rank on the region left of the cut can be at most $d^{2l}$ larger than that of $A_{j,1}\ldots A_{j,b}$ (Corollary \ref{cor:schmidtrankincrease}), giving a bound of $d^{6l}$ for the the Schmidt rank of $\ket{\phi_j}$.  This implies the Schmidt rank of $\kettilde{\psi}$ is at most $ld^{6l}$, which proves item (2). This applies whenever $N \geq N_0 = O(l^3) = O(k^3/\epsilon^3)$, a bound which must be satisfied in order for the chain to be long enough to fit all the regions $B_{j,j'}$ as defined above. This completes Part 1.


\vspace{12 pt}

\textit{Part 2: Items (2') and (2'')} \\

This construction is mostly similar to the previous one with a few key differences. We choose integers $l$, $t$, and $\chi'$ to be specified later. We require $t$ be even and $l \geq 2k$, $l \geq 2t$. We assume $N \geq N_0 = 2l^2$. We also require that $d \geq 4$. If this is not the case, we coarse-grain the system by combining neighboring sites, and henceforth we assume $d \geq 4$.

As in Part 1, to construct $\kettilde{\psi}$, we will first construct states $\ket{\phi_j}$ for $j=0,\ldots, l-1$ and then sum these states in superposition. Consider a fixed value of $j$. To construct $\ket{\phi_j}$ we break the chain into $n = \lfloor N/l \rfloor$ blocks and we arrange them exactly as in Part 1.

Now the construction diverges from Part 1: the state $\ket{\phi_j}$ will be a product state over each of these blocks
\begin{equation}
    \ket{\phi_j} = \ket{\phi_{j,0}}_{M_{j,0}} \otimes \ldots \otimes \ket{\phi_{j,n-1}}_{M_{j,n-1}}
\end{equation}
with states $\ket{\phi_{j,i}}$ for $i=0,\ldots, n-1$ that we now specify.

The idea is to create a state $\ket{\phi_{j,i}}$ that has nearly the same reduced density matrix as $\ket{\psi}$ on the leftmost $l-t$ sites of region $M_{j,i}$. It uses the rightmost $t$ sites to purify the reduced density matrix on the leftmost $l-t$ sites. First we denote the leftmost $l-t$ sites of block $M_{j,i}$ by $A_{j,i} = [j+l(i-1),j+li-t-1]$ and the rightmost $t$ sites by $B_{j,i} = [j+li-t,j+li-1]$ (or appropriate ``roll over'' definitions when $i=0$). We write $\ket{\psi}$ as an exact MPS with exponential bond dimension and form $\ket{\psi_{j,i}}$ by truncating to bond dimension $\chi'$ (i.e.~projecting onto the span of the right Schmidt vectors associated with the largest $\chi'$ Schmidt coefficients, then normalizing the state) at the cut to the left of region $A_{j,i}$, every cut within $A_{j,i}$, and at the cut to the right of $A_{j,i}$, for a total of at most $2l-t$ truncations (recall the final region $M_{j,0}$ may have as many as $2l-1$ sites). We denote the pure density matrix of this state by $\rho^{(j,i)} = \ket{\psi_{j,i}}\bra{\psi_{j,i}}$. We can bound the effect of these truncations using the area law given by Lemma \ref{lem:lowrank}:
\begin{equation}
    D(\rho^{(j,i)}, \rho) \leq \sqrt{2l-t}\epsilon^{\chi'},
\end{equation}
where $\epsilon^{\chi'}$ is the cost (in purified distance) of a single truncation, given by the right hand side of Eq.~\eqref{eq:lemmaarealaw}, or by Eq.~\eqref{eq:lemmabetterarealaw} in the case $\ket{\psi}$ is the ground state of a gapped nearest-neighbor 1D Hamiltonian. These truncations were not possible in Part 1 because we could not invoke the area law for general states.

Because of the truncations, we can express the reduced density matrix $\rho_{A_{j,i}}^{(j,i)}$ as a mixture of $\chi'^2$ pure states $\{\ket{\phi_{j,i,z}}\}_{z=0}^{\chi'^2-1}$ each of which can be written as an MPS with bond dimension $\chi'$: %
\begin{equation}
    \rho_{A_{j,i}}^{(j,i)} = \sum_{z=0}^{\chi'^2-1}p_{j,i,z} \ket{\phi_{j,i,z}}\bra{\phi_{j,i,z}},
\end{equation}
for some probability distribution $\{p_{j,i,z}\}_{z=0}^{\chi'^2-1}$. We now form $\ket{\phi_{j,i}}$ by purifying $\rho_{A_{j,i}}^{(j,i)}$ onto the region $M_{j,i}$ using the space $B_{j,i}$, which contains $t$ sites, as the purifying subspace:
\begin{equation}
    \ket{\phi_{j,i}}_{M_{j,i}}= \sum_{z=0}^{\chi'^2-1}\sqrt{p_{j,i,z}} \ket{\phi_{j,i,z}}_{A_{j,i}} \otimes \ket{r_{j,i,z}}_{B_{j,i}},
\end{equation}
where the set of states $\{\ket{r_{j,i,z}}\}_{z=0}^{\chi'^2-1}$ is an orthonormal set defined on region $B_{j,i}$. This purification will only be possible if the dimension of $B_{j,i}$ is sufficiently large, and we comment later on this fact, as well as how exactly to choose the set $\{\ket{r_{j,i,z}}\}_{z=0}^{\chi'^2-1}$.

The key observation is that state $\ket{\phi_j}$ will get any local properties approximately correct as long as they are supported entirely within a segment $A_{j,i}$ for some $i$, and as long as $\chi'$ is large enough that the $2l-t$ truncations do not have much effect on the reduced density matrix there. Thus we will choose our parameters so that $\chi'$ is large (but independent of $N$), such that $l$ is much larger than $t$ and $k$ (so that most regions fall within a region $A_{j,i}$), and such that $t$ is large enough that it is possible to purify states on $A_{j,i}$ onto $M_{j,i}$. But, as in Part 1, we have the issue that some regions will not be contained entirely within region $A_{j,i}$ for some $i$. 

We again deal with this issue by summing in superposition:
\begin{equation}
    \kettilde{\psi} = \frac{1}{\sqrt{l}}\sum_{j=0}^{l-1}\ket{\phi_j}.
\end{equation}

To complete the construction we also must specify the orthonormal states $\{\ket{r_{j,i,z}}\}_{z=0}^{\chi'^2-1}$ defined on the $t$ sites in region $B_{j,i}$. We choose a set that satisfies the following requirements.

\begin{enumerate}[(1)]
    \item The reduced density matrix of $\ket{r_{j,i,z}}$ on any single site among the leftmost $t/2$ sites of $B_{j,i}$ (recall we have assumed $t$ is even) is entirely supported on basis states $1, \ldots, \lfloor d/2 \rfloor$.
    \item The reduced density matrix on any single site among the rightmost t/2 sites is entirely supported on basis states $\lfloor d/2 \rfloor +1, \ldots, d$.
    \item Let $j' = j+i \mod l$, and let $i_0 = i \mod l$. If $t \leq i_0 \leq l-t$ then for all $z$, $\ket{r_{j,i,z}}$ is orthogonal to the support of the reduced density matrix of $\ket{\phi_{j'}}$ on region $B_{j,i}$.
\end{enumerate}

We assess how large $t$ must be for it to be possible to satisfy these three conditions. The third item specifically applies only for values of $i$ that lead to values of $j'$ that are at least $t$ away from $j$ (modulo $l$) so that the purifying system $B_{j,i}$ does not overlap with $B_{j',i'}$ for any $i'$. The support of the reduced density matrix of any $\ket{\phi_{j'}}$ on region $B_{j,i}$ has dimension at most $\chi'^2$. Thus, if the dimension of $B_{j,i}$ is more than $2\chi'^2$ it will always be possible to choose an orthonormal set $\{\ket{r_{j,i,z}}\}_{z=0}^{\chi'^2-1}$ satisfying the third condition. The first and second conditions cut the accessible part of the local dimension of the purifying system in half, so a purification that satisfies all three conditions will be possible if $\lfloor d/2 \rfloor^t \geq 2\chi'^2$. Any choice of set that meets all three conditions is equally good for our purposes.

We now demonstrate that the three conditions imply that for any pair $(j,j')$ there are regions of the chain on which the supports of the reduced density matrices of states $\ket{\phi_j}$ and $\ket{\phi_{j'}}$ are orthogonal. If it is the case that $j-j' \mod l < t$ or $j'-j \mod l < t$, then for every $i$ the region $B_{j,i}$ overlaps with $B_{j',i'}$ for some $i'$. Because $B_{j,i} \neq B_{j',i'}$, there will be some site that is in the right half of one of the two regions, but in the left half of the other, and items 1 and 2 imply that the two states will be orthogonal when reduced to this site. If this is not the case, then as long as there is some $i$ for which $j' = j+i \mod l$, then item 3 implies the orthogonality of the supports of $\ket{\phi_j}$ and $\ket{\phi_{j'}}$. In fact because $n \geq 2l$, there will be at least 2 such values of $i$. We conclude that $\braket{\phi_j}{\phi_{j'}} = \delta_{jj'}$, which implies that $\kettilde{\psi}$ is normalized as shown by the computation
\begin{equation}
    \langle \tilde{\psi} | \tilde{\psi} \rangle = \frac{1}{l}\sum_{j,j'} \braket{\phi_j}{\phi_{j'}} = 1.
\end{equation}

We have now shown how to define the approximation $\kettilde{\psi}$ and discussed the conditions for the parameters $t$, $\chi'$, and $d$ that make the construction possible. Now we assess the error in the approximation (locally). Let $\tilde{\rho} = \kettilde{\psi}\bratilde{\psi}$ and consider an arbitrary length-$k$ region $X$. We may write
\begin{align}
   & D_1(\rho_X, \tilde{\rho}_X) \nonumber\\
   ={}&\frac{1}{2}\left\lVert\left(\frac{1}{l} \sum_{j=0}^{l-1}\sum_{j'=0}^{l-1}\Tr_{X^c}(\ket{\phi_j}\bra{\phi_{j'}})\right)-\Tr_{X^c}(\ket{\psi}\bra{\psi})\right\rVert_1.
\end{align}
For the same reason that led to the conclusion $\braket{\phi_j}{\phi_{j'}} = \delta_{jj'}$, we can conclude that $\Tr_{X^c}(\ket{\phi_j}\bra{\phi_{j'}}) = \delta_{jj'}$, so long as $k$ is smaller than $l/2$. To see that this holds, it is sufficient to show that there is a region lying completely outside of $X$ with the property that $\ket{\phi_j}$ and $\ket{\phi_j'}$ share no support on the region. Since $k  = \lvert X \rvert \leq l/2$ and $\lvert B_{j,i}\rvert = t \leq l/2 $, for any $j$, $X$ can overlap the region $B_{j,i}$ for at most one value of $i$. We showed before that for any $j$ there would be at least two values of $i$ for which a subregion of $B_{j,i}$ has this property, implying one of them must lie outside $X$.

Thus we have
\begin{align}
    D_1(\rho_X, \tilde{\rho}_X) &= \frac{1}{2}\left\lVert\frac{1}{l} \sum_{j=0}^{l-1}\Tr_{X^c}(\ket{\phi_j}\bra{\phi_{j}}-\ket{\psi}\bra{\psi})\right\rVert_1 \nonumber\\
    &\leq\frac{1}{l} \sum_{j=0}^{l-1}D_1\left(\rho_X, \Tr_{X^c}(\ket{\phi_j}\bra{\phi_j})\right). \label{eq:tracesumj}
\end{align}

For a particular $j$, there are two cases. Case 1 occurs if $X$ falls completely within the $A_{j,i}$ for some $i$, in which case the only error is due to the $2l-t$ truncations to bond dimension $\chi'$. Since the trace norm $D_1$ is smaller than the purified distance $D$ (Lemma \ref{lem:purifiedvstrace}), the contribution for these values of $j$ is at most $\sqrt{2l-t}\epsilon^{\chi'}$. Case 2 includes values of $j$ for which $X$ does not fall completely within a region $A_{j,i}$ for any $i$. In this case, the term will not necessarily be close to zero, but we can always upper bound the trace distance by 1. The number of terms in the sum that qualify as Case 1 is of course bounded by $l$ (there are only $l$ terms) and the number of Case 2 terms is at most  $t+k-1$. Thus the total error can be bounded:
\begin{align}
   & D_1(\rho_X, \tilde{\rho}_X)  \nonumber \\
    \leq{}&\frac{1}{l} (l\sqrt{2l-t}\epsilon^{\chi'} + (t+k-1)) \nonumber\\
    \leq{} & \sqrt{2l} \epsilon^{\chi'} + (k+t)/l.
\end{align}

We will choose parameters so this quantity is less than $\epsilon$. Parameters $l$ and $t$ will be related to $\chi'$ as follows.
\begin{align}
    t &= \log(2\chi'^2)/\log(\lfloor d/2\rfloor)\\
    l &= 2(k+t)/\epsilon
\end{align}
If $\ket{\psi}$ is known only to have exponentially decaying correlations, then we choose
\begin{equation}
    \log(\chi') = 16 \xi \log(d)\log(16C_1\sqrt{ \xi \log(d)(k+3)}/\epsilon^{3/2}),
\end{equation}
where $C_1 = \exp(t_0\exp(\tilde{O}(\xi\log(d))))$ is the constant from Eq.~\eqref{eq:lemmaarealaw}. We note that $t \leq 3\log(\chi')$, so we can bound
\begin{align}
        D_1(\rho_X, \tilde{\rho}_X) &\leq  \frac{\sqrt{4(k+t)}C_1}{\sqrt{\epsilon}}e^{-\frac{\log(\chi')}{8\xi \log(d)}}+\frac{\epsilon}{2} \nonumber \\
        &\leq \frac{\sqrt{4(k+3)\log(\chi')}C_1}{\sqrt{\epsilon}}e^{-\frac{\log(\chi')}{8\xi \log(d)}}+\frac{\epsilon}{2} \nonumber \\
        &\leq \frac{\sqrt{64\xi\log(d)(k+3)}C_1}{\sqrt{\epsilon}}e^{-\frac{\log(\chi')}{16\xi \log(d)}}+\frac{\epsilon}{2} \nonumber \\
        &\leq \frac{\epsilon}{2} + \frac{\epsilon}{2}  = \epsilon,
\end{align}
where in the third line we have used the (crude) bound $\sqrt{u} \leq e^u$ with $u = \log(\chi')/(16\xi\log(d))$.

In the case that $\ket{\psi}$ is known to be the ground state of a gapped local Hamiltonian, we may choose
\begin{equation}
    \log(\chi') = \tilde{O}(\Delta^{-1/4} \log(d)\log^{3/4}(C_2 \sqrt{k+3}/\epsilon^{3/2})),
\end{equation}
where $C_2 = \exp(\tilde{O}(\log^{8/3}(d)/\Delta)$ is the constant in Eq.~\eqref{eq:lemmabetterarealaw} and the same analysis will follow. This proves item (1) of the theorem for the construction in Part 2.

Items (2') and (2'') assert that $\kettilde{\psi}$ can be written as an MPS with constant bond dimension, which we now show. Each state $\ket{\phi_j}$ is a product state with pure state $\ket{\phi_{j,i}}$ on each each block $M_{j,i}$, and $\ket{\phi_{j,i}}$ has bond dimension at most $2\chi'^2$. Thus, if we cut the state $\ket{\phi_j}$ at a certain site, the bond dimension will be at most $4 \chi'^2$ (recall that block $M_{j,0}$ may have sites at both ends of the chain and can contribute to the bond dimension). Since $\kettilde{\psi}$ is a sum over $\ket{\phi_j}$ for $l$ values of $j$, the bond dimension $\chi$ of $\kettilde{\psi}$ is at most $4l \chi'^2$. For the case of exponentially decaying correlations, this evaluates to
\begin{align}
    \chi &= 4l(256C_1^2 \xi \log(d)(k+3)/\epsilon^3)^{16\xi\log(d)} \nonumber\\
    &\leq e^{t_0e^{\tilde{O}(\xi\log(d))}} (k/\epsilon^3)^{O(\xi\log(d))},
\end{align}
proving item (2'), and for the case of ground state of gapped Hamiltonian, we find
\begin{align}
    \chi &= 4l\exp(\tilde{O}(\Delta^{-1/4} \log(d)\log^{3/4}(C_2\sqrt{k+3}/\epsilon^{3/2}))) \nonumber\\
    &\leq (k/\epsilon)e^{\tilde{O}\left(\frac{\log^3(d)}{\Delta}+\frac{\log(d)}{\Delta^{1/4}}\log^{3/4}(k/\epsilon^3)\right)},
\end{align}
proving item (2''), where the second factor is asymptotically $(k/\epsilon)^{o(1)}$. For completeness, we note that if we combined neighboring sites because $d < 4$, we can now uncombine them possibly incurring a factor of 2 or 3 increase in the bond dimension, which has no effect on the stated asymptotic forms for $\chi$.

These results hold as long as $N \geq 2 l^2$, which translates to $N \geq O(k^2/\epsilon^2) + t_0\exp(\tilde{O}(\xi \log (d)))$ in the case of exponentially decaying correlations, and $N \geq O(k^2/\epsilon^2) + \tilde{O}(\log(d)/\Delta^{3/4})$ in the case that $\ket{\psi}$ is a ground state of a local Hamiltonian. This completes the proof.
\end{proof}

Now we demonstrate that the state $\kettilde{\psi}$ constructed in the proof of Theorem \ref{thm:improvedbd}, Part 2, is long-range correlated. Given an integer $m$, consider the pair of regions $A = [0,l(l-1)-1]$ and $C=[l(l-1+m),N-1]$, which are separated by $ml$ sites. Assume $n \geq 2l+m$, so that $A$ and $C$ both contain at least $l^2$ sites. Define the following operators.
\begin{align}
Q_1 =& \ket{\phi_{0,1}}\bra{\phi_{0,1}}\otimes \ldots \otimes  \ket{\phi_{0,l}}\bra{\phi_{0,l}}   \nonumber \\
Q_2 =& \ket{\phi_{0,l+m}}\bra{\phi_{0,l+m}}\otimes \ldots \nonumber \\
& \ldots \otimes  \ket{\phi_{0, n-1}}\bra{\phi_{0, n-1}} \otimes \ket{\phi_{0,0}}\bra{\phi_{0,0}}
\end{align}
The operator $Q_1$ is supported on $A$ and $Q_2$ is supported on $C$. Since $A$ and $C$ each contain blocks $M_i$ for at least $l$ values of $i$, conditions (1), (2), and (3) above imply that $Q_1\kettilde{\psi} = Q_2\kettilde{\psi} = \ket{\phi_{0}}/\sqrt{l}$. Thus
\begin{align}
    \Cor(A:C)_{\kettilde{\psi}} \geq& \Tr((Q_1 \otimes Q_2)(\tilde{\rho}_{AC}-\tilde{\rho}_A \otimes \tilde{\rho}_C)) \nonumber\\
    =& \Tr(Q_1 \otimes Q_2 \kettilde{\psi}\bratilde{\psi}) \nonumber \\
    &- \Tr(Q_1 \kettilde{\psi}\bratilde{\psi})\Tr(Q_2 \kettilde{\psi}\bratilde{\psi}) \nonumber \\
    =&1/l-1/l^2.
\end{align}

The choice of $l$ is independent of the chain length $N$, so the above quantity is independent of $N$ and independent of the parameter $m$ measuring the distance between $A$ and $C$. Thus, the correlation certainly does not decay exponentially in the separation between the regions.


\subsection{Proof of Theorem \ref{thm:mainthm}}
In this section, we first state a pair of lemmas that will be essential for the proof of Theorem \ref{thm:mainthm}, then we give a proof summary of Theorem \ref{thm:mainthm}, and finally we provide the full proof of Theorem \ref{thm:mainthm}.

First, an important and well-known tool we use is Uhlmann's theorem \cite{uhlmann1976transition}, which expresses the fact that if two states are close, their purifications will be equally close up to a unitary acting on the purifying auxiliary space. 

\begin{lemma}[Uhlmann's theorem \cite{uhlmann1976transition}]\label{lem:uhlmann}
Suppose $\tau_A$ and $\sigma_A$ are states on system $A$. Suppose $B$ is an auxiliary system and $\ket{T}_{AB}$ and $\ket{S}_{AB}$ are purifications of $\tau_A$ and $\sigma_A$, respectively. Then

\begin{equation}
D(\tau_A,\sigma_A) = \min_{U}\sqrt{1-\lvert\bra{S}_{AB}(I_A \otimes U) \ket T_{AB} \rvert^2},
\end{equation}
where the min is taken over unitaries on system $B$.
\end{lemma}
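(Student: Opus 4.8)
The plan is to recognize this as the standard Uhlmann theorem and to observe that the stated minimization over $U$ is equivalent, via the definition $D(\tau_A,\sigma_A) = \sqrt{1-\mathcal{F}(\tau_A,\sigma_A)^2}$, to the fidelity identity $\mathcal{F}(\tau_A,\sigma_A) = \max_U \lvert\bra{S}(I_A\otimes U)\ket{T}\rvert$. Since $x \mapsto \sqrt{1-x^2}$ is decreasing on $[0,1]$, minimizing $\sqrt{1-\lvert\cdots\rvert^2}$ over $U$ is the same as maximizing $\lvert\cdots\rvert$, so it suffices to prove the fidelity identity and then read the result off at the end.

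First I would fix a canonical purification. Introducing an orthonormal basis $\{\ket{i}\}$ on $A$ and the unnormalized maximally entangled vector $\ket{\Omega} = \sum_i \ket{i}_A\ket{i}_B$, the vectors $\ket{T_0} = (\sqrt{\tau_A}\otimes I_B)\ket{\Omega}$ and $\ket{S_0} = (\sqrt{\sigma_A}\otimes I_B)\ket{\Omega}$ are purifications of $\tau_A$ and $\sigma_A$. The structural fact I would invoke is that any two purifications of the same state onto $AB$ differ by an isometry (a unitary when $\dim B$ matches the rank) acting on the purifying system $B$; hence $\ket{T} = (I_A\otimes V_T)\ket{T_0}$ and $\ket{S} = (I_A\otimes V_S)\ket{S_0}$ for appropriate $V_T,V_S$ on $B$. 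Because these extra factors can be absorbed into the free unitary $U$ appearing in the overlap, it is enough to prove the identity for the canonical pair $\ket{T_0},\ket{S_0}$.

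Next I would compute the overlap explicitly. Using the transpose identity $\bra{\Omega}(M\otimes N)\ket{\Omega} = \Tr(M N^{\mathsf{T}})$ (in the fixed basis), I obtain $\bra{S_0}(I_A\otimes U)\ket{T_0} = \Tr(\sqrt{\sigma_A}\sqrt{\tau_A}\,U^{\mathsf{T}})$. Maximizing the modulus over unitaries $U$ — equivalently over $W = U^{\mathsf{T}}$, which also ranges over all unitaries — then reduces to the variational characterization of the trace norm, $\max_{W}\lvert\Tr(XW)\rvert = \lVert X\rVert_1$, with the optimum attained at $W$ equal to the adjoint of the unitary factor in the polar decomposition of $X$. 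Applying this with $X = \sqrt{\sigma_A}\sqrt{\tau_A}$ gives $\max_U\lvert\bra{S_0}(I_A\otimes U)\ket{T_0}\rvert = \lVert\sqrt{\sigma_A}\sqrt{\tau_A}\rVert_1 = \mathcal{F}(\tau_A,\sigma_A)$, which is the desired identity.

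The main obstacle — really the only nontrivial ingredient — is establishing the variational characterization of the trace norm and verifying that the optimum is attained, which I would get from the singular-value/polar decomposition together with the matrix Hölder inequality $\lvert\Tr(XW)\rvert \leq \lVert X\rVert_1\lVert W\rVert_\infty = \lVert X\rVert_1$ for unitary $W$. A secondary point demanding care is the bookkeeping when $\dim B$ exceeds the ranks of $\tau_A,\sigma_A$, so that ``purifications differ by a unitary on $B$'' should be phrased as ``by an isometry''; since we only ever take a maximum, this does not change the optimal value. Converting back, $D(\tau_A,\sigma_A) = \sqrt{1-\mathcal{F}(\tau_A,\sigma_A)^2} = \min_U\sqrt{1-\lvert\bra{S}(I_A\otimes U)\ket{T}\rvert^2}$, which completes the argument.
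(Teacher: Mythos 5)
The paper never proves Lemma~\ref{lem:uhlmann}; it is imported as a known result, with the citation to Uhlmann \cite{uhlmann1976transition} standing in place of a proof, so there is no in-paper argument to compare yours against. Your proof is the standard textbook argument and it is correct: monotonicity of $x\mapsto\sqrt{1-x^2}$ converts the stated minimization into the fidelity identity, the transpose trick $\bra{\Omega}(M\otimes N)\ket{\Omega}=\Tr(MN^{\mathsf{T}})$ reduces the overlap to $\Tr\bigl(\sqrt{\sigma_A}\sqrt{\tau_A}\,U^{\mathsf{T}}\bigr)$, and the variational characterization $\max_{W}\lvert\Tr(XW)\rvert=\lVert X\rVert_1$ (H\"older plus polar decomposition, which also gives attainment, so the minimum in the statement is genuinely achieved) yields $\lVert\sqrt{\sigma_A}\sqrt{\tau_A}\rVert_1=\Tr\sqrt{\sqrt{\tau_A}\,\sigma_A\sqrt{\tau_A}}$, which is exactly $\mathcal{F}(\tau_A,\sigma_A)$ under the paper's definition. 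One correction to your closing remark on bookkeeping: the case you flag as delicate, $\dim B$ exceeding the ranks, is in fact harmless, since two purifications of the same state on the \emph{same} system $B$ are always related by a genuine unitary on $B$ (extend the partial isometry matching their Schmidt vectors). The case that actually strains your construction is $\dim B<\dim A$, where $\ket{\Omega}=\sum_i\ket{i}_A\ket{i}_B$ cannot pair every basis vector of $A$; note that this is precisely the regime in which the paper applies the lemma (in the proof of Theorem~\ref{thm:mainthm}, $B=M_i$ is a single block of $l$ sites while $A=L_iR_i$ is the rest of the chain). The repair is routine and preserves your structure: since $\ket{T}$ and $\ket{S}$ exist on $AB$ by hypothesis, both $\tau_A$ and $\sigma_A$ have rank at most $\dim B$, so one can write $\ket{T}=(\sqrt{\tau_A}V_1\otimes I_B)\ket{\Omega_B}$ and $\ket{S}=(\sqrt{\sigma_A}V_2\otimes I_B)\ket{\Omega_B}$ for partial isometries $V_1,V_2$ on $A$ and a maximally entangled vector $\ket{\Omega_B}$ of Schmidt rank $\dim B$; the same transpose-trick computation and polar-decomposition argument then go through, with $V_1,V_2$ absorbed harmlessly because they act as the identity on the supports of $\tau_A$ and $\sigma_A$.
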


Second, we prove the following essential statement about states with exponential decay of correlations.

\begin{lemma}\label{lem:tracebound}
If $L$ and $R$ are disjoint regions of a 1D lattice of $N$ sites and the state $\tau = \ket\eta \bra \eta$ has $(t_0, \xi)$-exponential decay of correlations, then
\begin{equation}
D(\tau_{LR}, \tau_L \otimes \tau_R) \leq C_3\exp(-\dist(L,R)/\xi')
\end{equation}
whenever $\dist(L,R) \geq t_0$, where $\xi' = 16\xi^2\log(d)$ and $C_3=\exp(t_0\exp(\tilde{O}(\xi\log(d))))$.

If $\tau$ is the unique ground state of a gapped nearest-neighbor 1D Hamiltonian with spectral gap $\Delta$, then this can be improved to
\begin{equation}
D(\tau_{LR}, \tau_L \otimes \tau_R) \leq C_4\exp(-\dist(L,R)/\xi')
\end{equation}
whenever $\dist(L,R) \geq \Omega(\log^4(d)/\Delta^2)$, where $\xi' = O(1/\Delta)$ and $C_4 = \exp(\tilde{O}(\log^3(d)/\Delta))$.
\end{lemma}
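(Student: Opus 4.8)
The plan is to convert the statement about exponential decay of correlations --- which bounds the expectation of product observables $M \otimes N$ --- into a bound on the purified distance $D(\tau_{LR}, \tau_L \otimes \tau_R)$. The natural bridge is to first bound a more tractable distance-like quantity, namely the operator norm or trace norm of the difference $\tau_{LR} - \tau_L \otimes \tau_R$, and then upgrade it to a bound on the purified distance using Lemma~\ref{lem:purifiedvstrace}. First I would observe that the definition of $\Cor(L:R)_{\ket\eta}$ is exactly a supremum over $\|M\|,\|N\| \le 1$ of $\Tr((M\otimes N)(\tau_{LR} - \tau_L \otimes \tau_R))$, which is a variational expression closely related to a norm of $\tau_{LR} - \tau_L \otimes \tau_R$. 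The difficulty is that this is a bound over \emph{product} operators, whereas the trace norm $\|\tau_{LR} - \tau_L \otimes \tau_R\|_1$ is a supremum over \emph{all} operators $P$ on $LR$ with $\|P\|\le 1$, which may be entangled across the $L:R$ cut. Bridging this gap is the crux.

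The key step, and the one I expect to be the main obstacle, is controlling the loss incurred when passing from product operators to general operators. A general operator $P$ on $LR$ can be expanded in an operator basis as $P = \sum_{a,b} c_{ab}\, M_a \otimes N_b$, but the number of terms (and hence the accumulated factor) is governed by the Schmidt rank of $P$ across the $L:R$ cut, which is at most $d^{\min(|L|,|R|)}$ in general --- far too large. To avoid this exponential blowup I would instead exploit the area law. Specifically, Lemma~\ref{lem:lowrank} guarantees that $\tau_L$ and $\tau_R$ are each well-approximated by states of low rank $\chi$, with error decaying like $\exp(-\log(\chi)/(8\xi\log d))$. The strategy is therefore to restrict attention to the low-rank (high-weight) subspaces of $\tau_L$ and $\tau_R$: on these effective subspaces the relevant operators live in a space of dimension only $\chi$ rather than $d^{|L|}$, so the conversion factor from the product-operator bound to a full-operator bound is only $\mathrm{poly}(\chi)$ rather than exponential. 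Choosing $\chi$ so that the area-law truncation error roughly matches the raw correlation bound $\exp(-\dist(L,R)/\xi)$ balances the two sources of error.

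Carrying this out, I would (i) truncate $\tau_L$ and $\tau_R$ to rank $\chi$ via Lemma~\ref{lem:lowrank}, incurring error $C_1 \exp(-\log(\chi)/(8\xi\log d))$ on each side; (ii) on the truncated subspaces, bound $\|\tau_{LR} - \tau_L\otimes\tau_R\|_1$ by a factor of order $\chi^2$ times the correlation quantity $\Cor(L:R) \le \exp(-\dist(L,R)/\xi)$, since decomposing an operator supported on the rank-$\chi$ subspaces into product operators costs at most $\chi^2$ terms; (iii) optimize over $\chi$, setting $\log\chi \sim \dist(L,R)/(8\xi\log d)$ up to constants so that the two error contributions are comparable; and (iv) apply Lemma~\ref{lem:purifiedvstrace} to convert the resulting trace-norm bound $\sim \exp(-\dist(L,R)/(c\,\xi^2\log d))$ into a purified-distance bound, which merely squares the exponent's prefactor and yields the claimed rate $\xi' = 16\xi^2\log(d)$. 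The constant $C_3 = \exp(t_0\exp(\tilde O(\xi\log d)))$ is inherited directly from the area-law constant $C_1$. For the gapped ground-state case I would repeat the argument verbatim but feed in the improved area law \eqref{eq:lemmabetterarealaw} from Lemma~\ref{lem:lowrank} instead of \eqref{eq:lemmaarealaw}; the stronger decay there, roughly $\exp(-\tilde O(\Delta^{1/3}\log^{4/3}\chi/\log^{4/3} d))$, allows a much smaller $\chi$ for the same target error, which is what produces the improved length scale $\xi' = O(1/\Delta)$ (free of the extra $\xi$ and $\log d$ factors) and the constant $C_4 = \exp(\tilde O(\log^3(d)/\Delta))$, with the threshold $\dist(L,R) \ge \Omega(\log^4(d)/\Delta^2)$ arising as the regime where this balancing is valid.
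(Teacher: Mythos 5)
Your overall strategy coincides with the paper's: truncate using the area law (Lemma~\ref{lem:lowrank}), exploit the resulting low rank to decompose a general norm-one operator into at most $\chi^2$ product operators so the correlation hypothesis can be applied term by term, balance $\chi$ against the correlation decay, convert the trace-norm bound to purified distance via Lemma~\ref{lem:purifiedvstrace}, and close with the triangle inequality (and the gapped case is indeed handled by swapping in Eq.~\eqref{eq:lemmabetterarealaw}). However, there are two concrete gaps in how you carry this out.

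First, in your step (ii) you apply the correlation bound $\Cor(L:R)\le \exp(-\dist(L,R)/\xi)$ to the \emph{truncated} state, but the hypothesis only provides this bound for the original state $\ket\eta$. Once you restrict to the low-rank subspace --- the paper projects on one side only, forming $\ket\nu = \sqrt{q}\,P_L\ket\eta$ with $q = 1/\Tr(P_L\tau_L)$ --- the quantity that actually appears after decomposing $P_LTP_L=\sum_{j=1}^{\chi^2}T_{L,j}\otimes T_{R,j}$ is $\Cor(L:R)_{\ket\nu}$, the correlation function of the truncated state. This does not transfer for free: one must prove a statement like $\Cor(L:R)_{\ket\nu}\le (q+q^2)\,\Cor(L:R)_{\ket\eta}$, which the paper establishes by an explicit computation expanding $\bra\nu AB\ket\nu - \bra\nu A\ket\nu\bra\nu B\ket\nu$ in terms of $\ket\eta$-expectations and absorbing $P_L$ into the $L$-side observables (this is exactly where it matters that the projector acts on one side only, so that $P_LAP_L$ is again a norm-one $L$-observable). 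Without this step, the chain of inequalities in (ii) is unjustified; note also that truncating both $L$ and $R$, as you propose, makes this intermediate argument messier while buying nothing, since operator Schmidt rank $\chi^2$ already follows from low rank on a single side.

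Second, your calibration of $\chi$ contradicts your own (correct) balancing principle. With $\log\chi\sim \dist(L,R)/(8\xi\log d)$, the truncation error is $\exp\bigl(-\log(\chi)/(8\xi\log d)\bigr)=\exp\bigl(-\dist(L,R)/(64\xi^2\log^2 d)\bigr)$, which dominates and yields only $\xi'=O(\xi^2\log^2 d)$ --- an extra factor of $\log d$ relative to the claimed $\xi'=16\xi^2\log(d)$. Genuinely balancing the truncation error against the square-rooted correlation term $\chi\exp(-\dist(L,R)/(2\xi))$ forces $\log\chi\approx \dist(L,R)/(2\xi)$, i.e.\ $\chi$ must be taken much larger, with the exponent scaled by $\xi$ rather than $\xi\log d$; then both contributions decay at rate $\dist(L,R)/(16\xi^2\log d)$, as the lemma claims. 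The same recalibration is needed in the gapped case to obtain $\xi'=O(1/\Delta)$ and the threshold $\dist(L,R)\ge\Omega(\log^4(d)/\Delta^2)$.
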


For pure states $\sigma$, we call $\sigma$ a Markov chain for the tripartition $L/M/R$ if $\sigma_{LR} = \sigma_L \otimes \sigma_R$. Thus Lemma \ref{lem:tracebound} states that exponential decay of correlations implies that the violation of the Markov condition, as measured by the purified distance (or alternatively, trace distance) decays exponentially with the size of $M$.

\begin{proof}[Proof of Lemma \ref{lem:tracebound}]

The goal is to show that an exponential decay of correlations in $\tau = \ket{\eta}\bra{\eta}$ implies that $\tau_{LR}$ is close to $\tau_L \otimes \tau_R$. We will do this by truncating the rank of $\tau$ on the region $L$ to form $\sigma$, arguing that $\sigma_{LR}$ is close to $\sigma_L \otimes \sigma_R$, and finally using the triangle inequality to show the same holds for $\tau$.

Lemma \ref{lem:lowrank} says that there is a state $\sigma_L$ with rank $\chi$ defined on region $L$ such that
\begin{equation}
    D(\tau_L,\sigma_L) \leq C_1 e^{-\frac{\log(\chi)}{8 \xi \log(d)}}.
\end{equation}
In fact, the choice of $\sigma_L$ of rank $\chi$ that minimizes the distance to $\tau_L$ is the state $P_L\tau_L/\Tr(P_L\tau_L)$ where $P_L$ is the projector onto the eigenvectors of $\tau_L$ associated with the largest $\chi$ eigenvalues. Accordingly, we define a normalization constant $q=1/\Tr(P_L \tau_L)$ and let $\ket{\nu} = \sqrt{q}P_L\ket{\eta}$ and $\sigma = \ket{\nu}\bra{\nu} = qP_L \tau P_L$ be normalized states.

We first need to show that
\begin{equation}
    \Cor(L:R)_{\ket{\nu}} := \max_{\lVert A \rVert, \lVert B \rVert \leq 1}\Tr((A \otimes B) (\sigma_{LR}-\sigma_L \otimes \sigma_R))
\end{equation}
is small, given only that $\Cor(L:R)_{\ket{\eta}}$ is small. Suppressing tensor product symbols, we can write
\begin{align}
    &{}\Tr((A B) (\sigma_{LR}-\sigma_L \sigma_R)) \nonumber\\
    &= \bra\nu AB \ket \nu - \bra\nu A \ket\nu\bra\nu B \ket\nu \nonumber  \\
    &= q\bra\eta P_LABP_L \ket \eta - q^2\bra\eta P_LAP_L \ket\eta\bra\eta P_L B P_L\ket\eta \nonumber \\
    &= q\bra\eta (P_LAP_L)B \ket \eta - q^2\bra\eta P_LAP_L \ket\eta\bra\eta P_L B \ket\eta \nonumber \\
    &= q\bra\eta (P_LAP_L)B\ket\eta \nonumber \\ 
    &\hspace{6pt}-q^2\bra\eta P_LAP_L \ket\eta\left(\bra\eta P_L B \ket\eta-\bra\eta P_L\ket\eta \bra\eta B \ket\eta\right) \nonumber\\
    &\hspace{6pt}- q^2\bra\eta P_LAP_L \ket\eta\ \bra\eta P_L\ket\eta \bra\eta B \ket\eta \nonumber\\
    &=q\left(\bra\eta (P_LAP_L)B\ket\eta - \bra\eta P_LAP_L\ket\eta \bra\eta B \ket\eta\right) \nonumber \\ 
    &\hspace{6pt}-q^2\bra\eta P_LAP_L \ket\eta\left(\bra\eta P_L B \ket\eta-\bra\eta P_L\ket\eta \bra\eta B \ket\eta\right), 
\end{align}
from which we can conclude
\begin{equation}
    \Cor(L:R)_{\ket{\nu}} \leq (q+q^2) \Cor(L:R)_{\ket{\eta}}.
\end{equation}
The normalization constant $q$ is $1/(1-D(\tau_L,\sigma_L)^2)$ which will be close to 1 as long as $\chi$ is sufficiently large. If we choose $\log(\chi) = 8 \xi \log(d)(1+\log(C_1))$ or larger, then $q$ will certainly be smaller than 2 and $q+q^2 \leq 6$.

The combination of the fact that $\sigma_L$ has small rank and that $\sigma$ has small correlations between $L$ and $R$ will allow us to show that $\sigma_{LR}$ is close to $\sigma_L \otimes \sigma_R$. We do this by invoking Lemma 20 of \cite{brandao2015exponential}, although we reproduce the argument below. We can express the trace norm as

\begin{align}
&\lVert \sigma_{LR}-\sigma_L \otimes \sigma_R \rVert_1 \nonumber\\
={}& \max_{\lVert T \rVert \leq 1} \Tr(T(\sigma_{LR}-\sigma_L \otimes \sigma_R)) \nonumber\\
={}&
\max_{\lVert T \rVert \leq 1} \Tr(P_LTP_L(\sigma_{LR}-\sigma_L \otimes \sigma_R)) ,
\end{align}
where the second equality follows from the fact that $P_L$ fixes the state $\sigma$. We can perform a Schmidt decomposition of the operator $P_LTP_L$ into a sum of at most $\chi^2$ terms which are each a product operator across the $L/R$ partition
\begin{equation}
    P_LTP_L = \sum_{j=1}^{\chi^2} T_{L,j} \otimes T_{R,j}
\end{equation}
and also such that $\lVert T_{L,j} \rVert, \lVert T_{R,j} \rVert \leq 1$ (see Lemma 20 of \cite{brandao2015exponential} for full justification of this). Then we may write
\begin{align}
&\lVert \sigma_{LR}-\sigma_L \otimes \sigma_R \rVert_1 \nonumber\\
\leq{}& \max_{\lVert T \rVert \leq 1} \Tr\left(\left(\sum_{j=1}^{\chi^2}T_{L,j} \otimes T_{R,j}\right)(\sigma_{LR}-\sigma_L \otimes \sigma_R)\right) \nonumber \\
\leq{}& \sum_{j=1}^{\chi^2}\max_{\lVert T_{L,j} \rVert,\lVert T_{R,j} \rVert \leq 1} \Tr\left(\left(T_{L,j} \otimes T_{R,j}\right)(\sigma_{LR}-\sigma_L \otimes \sigma_R)\right) \nonumber \\
\leq{}& \chi^2 \Cor(L:R)_{\ket{\nu}} \nonumber \\
\leq{}& 6\chi^2 \Cor(L:R)_{\ket{\eta}} \leq 6 \chi^2 \exp(-\dist(L,R)/\xi)
\end{align}
as long as $\chi \geq 8 \xi \log(d)(1+\log(C))$ and $\dist(L,R) \geq t_0$.

Moreover the purified distance is bounded by the square root of the trace norm of the difference (Lemma \ref{lem:purifiedvstrace}), allowing us to say
\begin{equation}
D(\sigma_{LR}, \sigma_L \otimes \sigma_R) \leq \sqrt{6}\chi \exp(-\dist(L,R)/(2\xi)).
\end{equation}

Then, by the triangle inequality, we can bound
\begin{align}
&D(\tau_{LR},\tau_L \otimes \tau_R) \nonumber\\
\leq{}& D(\tau_{LR}, \sigma_{LR}) + D(\sigma_{LR},\sigma_L \otimes \sigma_R) \nonumber \\
&+D(\sigma_L \otimes \sigma_R, \tau_L \otimes \tau_R) \nonumber \\
\leq{}& D(\tau_{LR}, \sigma_{LR}) + D(\sigma_{LR},\sigma_L \otimes \sigma_R) \nonumber \\
&+D(\sigma_L, \tau_L) + D( \sigma_R , \tau_R) \nonumber\\
\leq{}& 3C_1\exp\left(-\frac{\log(\chi)}{8\xi\log(d)}\right)+\sqrt{6}\chi \exp\left(-\frac{\dist(L,R)}{2\xi}\right).
\end{align}
We can choose 
\begin{equation}
    \log(\chi) = 8\xi\log(d)(1+\log(3C_1)) + \dist(L,R)/(4\xi).
\end{equation}
Then each term can be bounded so that
\begin{equation}
    D(\tau_{LR},\tau_L \otimes \tau_R) \leq 2\sqrt{6}(3eC_1)^{8\xi\log(d)}e^{-\frac{\dist(L,R)}{32\xi^2\log(d)}},
\end{equation}
which proves the first part of the Lemma.

If $\tau$ is the unique ground state of a gapped Hamiltonian, then we may use the second part of Lemma \ref{lem:lowrank}, and bound

\begin{align}
&D(\tau_{LR},\tau_L \otimes \tau_R) \nonumber\\
\leq{}& 3C_2 e^{-\tilde{O}\left(\frac{\Delta^{1/3}\log^{4/3}(\chi)}{\log^{4/3}(d)}\right)}+\sqrt{6}\chi e^{-O\left(\Delta\dist(L,R)\right)}.
\end{align}
Here we can choose
\begin{equation}
    \log(\chi) = O\left(\frac{\log(d)(1+\log(3C_2))^{\frac{3}{4}}}{\Delta^{1/4}} + \Delta\dist(L,R)\right),
\end{equation}
and then each term is small enough to make the bound
\begin{align}
    &D(\tau_{LR},\tau_L \otimes \tau_R)\nonumber\\
    \leq{}& e^{\tilde{O}\left(\frac{\Delta^{5/3}\dist(L,R)^{4/3}}{\log^{4/3}(d)}\right)}\nonumber\\
   & +\sqrt{6}e^{O\left(\frac{\log(d)\log^{3/4}(3eC_2)}{\Delta^{1/4}}\right)}e^{-O\left(\Delta\dist(L,R)\right)} \nonumber \\
    \leq{}& e^{\tilde{O}\left(\frac{\log^3(d)}{\Delta}\right)}e^{-O\left(\Delta\dist(L,R)\right)}
\end{align}
as long as $\dist(L,R) \geq \Omega(\log^4(d)/\Delta^2)$, so that the first term in the second line is dominated by the second term. Also note we have used $\log(C_2) = \tilde{O}(\log^{8/3}(d)/\Delta)$ in the last line. This proves the second part of the lemma.
\end{proof}


\begin{figure}[ht]
    \centering
    \includegraphics[width = \columnwidth]{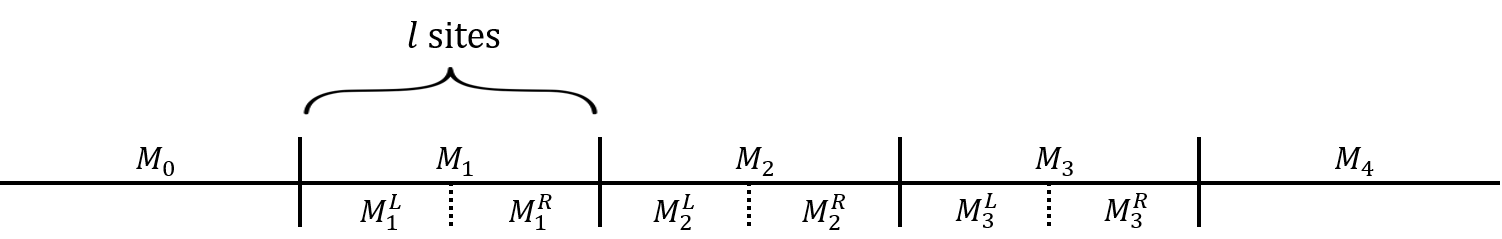}
    \caption{Schematic for the proof of Theorem \ref{thm:mainthm}. The chain is divided into regions $M_i$ of length $l$, which are themselves divided into left and right halves $M_i^L$ and $M_i^R$. The state $\kettilde{\psi}$ is constructed by starting with $\ket{\psi}$, applying unitaries that act only on $M_i$ for each $i$ to disentangle the state across the $M_i^L/M_i^R$ cut, projecting onto a product state across those cuts, and finally applying the inverse unitaries on regions $M_i$.}
    \label{fig:mainthm}
\end{figure}

\begin{proof}[Proof summary for Theorem \ref{thm:mainthm}]

First, we make the following observation about tripartitions of the chain into contiguous regions $L$, $M$, and $R$: since $\ket{\psi}$ has exponential decay of correlations, the quantity $D(\rho_{LR}, \rho_L \otimes \rho_R)$ is exponentially small in $\lvert M \rvert / \xi'$, where $\xi' = O(\xi^2 \log(d))$. This is captured in Lemma \ref{lem:tracebound} and requires the area law result from \cite{brandao2015exponential}. One can truncate $\rho_L$ and $\rho_R$ to bond dimension $d^{\lvert M \rvert/2}$, incurring small error, then purify $\rho_L$ into $\ket\alpha$ using the left half of region $M$ as the purifying auxiliary space and $\rho_R$ into $\ket\beta$ using the right half. Since $\ket{\alpha} \otimes \ket{\beta}$ and $\ket{\psi}$ are nearly the same state after tracing out $M$, there is a unitary $U$ acting only on $M$ that nearly disentangles $\ket\psi$ across the central cut of $M$, with $U\ket{\psi}\approx \ket{\alpha}\otimes\ket{\beta}$ (Uhlmann's theorem, Lemma \ref{lem:uhlmann}).

The proof constructs the approximation $\kettilde{\psi}$ by applying three steps of operations on the exact state $\ket{\psi}$. First, the chain is broken up into many regions $\{M_i\}_{i=0}^{n+1}$ of length $l$, and  disentangling unitaries $U_i$ as described above are applied to each region $M_i$ in parallel. The state is close to, but not exactly, a product state across the center cut of each region $M_i$. To make it an exact product state, the second step is to apply rank-1 projectors $\Pi_i$ onto the right half of the state across each of these cuts, starting with the leftmost cut and working our way down the chain. Then, the third step is to apply the reverse unitaries $U_i^\dagger$ that we applied in step 1. 

The projection step is the cause of the error between $\ket{\psi}$ and $\kettilde{\psi}$. The number of projections is $O(N)$, but the error accrued locally is only constant. This follows from the fact that the projectors are rank 1, so once we apply projector $\Pi_i$, region $M_j$ is completely decoupled from $M_{j'}$ when $j<i$ and $j' > i$. Thus any additional projections, which act only on the regions to the right of $M_i$, have no effect on the reduced density matrix on $M_j$ (except to reduce its norm). Using this logic, we show that the number of errors that actually affect the state locally on a region $X$ is proportional to the number of sites in $X$, and not the number of sites in the whole chain. To make this error less than $\epsilon$ for any region of at most $k$ sites, we can choose $l = O(\xi'\log(k/\epsilon))$.

After step 2, the state is a product state on blocks of $l$ sites each, and in step 3, unitaries are applied that couple neighboring blocks, so the maximum Schmidt rank across any cut cannot exceed $d^l$. This yields the scaling for $\chi$.

The result is improved when $\ket{\psi}$ is the ground state of a gapped local Hamiltonian by using the improved area law \cite{arad2013area,arad2017rigorous} in the proof of Lemma \ref{lem:tracebound} and in the truncation of the states $\rho_L$ and $\rho_R$ before purifying into $\ket{\alpha}$ and $\ket{\beta}$.

Finally, it can be seen that $\kettilde{\psi}$ is formed by a constant-depth quantum circuit with two layers of unitaries, where each unitary acts on $l$ qubits. The first layer prepares the product state over blocks of length $l$ that is attained after applying projections in step 2, and the second layer applies the inverse unitaries from step 3. Each unitary in this circuit can be decomposed into a sequence of nearest-neighbor gates with depth $\tilde{O}(d^{2l})$. 
\end{proof}

\begin{proof}[Proof of Theorem \ref{thm:mainthm}]

We fix an even integer $l$, which we will specify later, and divide the $N$ sites of the chain into $n+2$ segments of length $l$, which we label, from left to right: $M_0,M_1,\ldots, M_{n+1}$. If $N$ does not divide $l$ evenly, then we allow segment $M_{n+1}$ to have fewer than $l$ sites. For $i \in [1,n]$ let $L_i$ be the sites to the left of region $M_i$ and let $R_i$  be the sites to the right of region $M_i$.

Lemma \ref{lem:tracebound} tells us that, since $\ket \psi$ has $(t_0,\xi)$-exponential decay of correlations, for any $i \in [1,n]$, $\rho_{L_iR_i}$ is close to $\rho_{L_i} \otimes \rho_{R_i}$ when $l$ is much larger than $\xi'$; that is,
\begin{equation}
D(\rho_{L_iR_i},\rho_{L_i} \otimes \rho_{R_i}) \leq C_3\exp(-l/\xi')
\end{equation}
whenever $l \geq t_0$. We also choose $\chi = d^{l}$ and for each $i$ define $\rho'_{L_i}$ and $\rho'_{R_i}$, each with rank at most $\sqrt\chi$ by taking $A = L_i$ and $A = R_i$ in Lemma \ref{lem:lowrank}. Thus we have
\begin{align}
D(\rho_{L_i}, \rho'_{L_i}) &\leq C_1 \exp(-l/(16 \xi)), \\
D(\rho_{R_i}, \rho'_{R_i}) &\leq C_1 \exp(-l/(16 \xi)).
\end{align}
Then by the triangle inequality we have
\begin{align}
&D(\rho_{L_iR_i},\rho'_{L_i} \otimes \rho'_{R_i}) \nonumber\\
\leq{}& D(\rho_{L_iR_i},\rho_{L_i} \otimes \rho_{R_i}) + D(\rho_{L_i} \otimes \rho_{R_i},\rho'_{L_i} \otimes \rho_{R_i}) \nonumber\\
& + D(\rho'_{L_i} \otimes \rho_{R_i},\rho'_{L_i} \otimes \rho'_{R_i}) \nonumber\\
={}& D(\rho_{L_iR_i},\rho_{L_i} \otimes \rho_{R_i}) + D(\rho_{L_i},\rho'_{L_i}) + D(\rho_{R_i},\rho'_{R_i}) \nonumber\\
\leq{}& C \exp(-l/\xi''), 
\end{align}
where $C\leq 2C_1+C_3$, $\xi'' = \max(\xi', 16 \xi)$, whenever $l \geq t_0$. 

Note that $\ket{\psi}_{L_iM_iR_i}$ can be viewed as a purification of $\rho_{L_iR_i}$ with $M_i$ the purifying auxiliary system. Divide region $M_i$ in half, forming $M_i^L$ (left half) and $M_i^R$ (right half). See Figure \ref{fig:mainthm} for a schematic. Each of these subsystems has total dimension $d^{l/2}$ and thus can act as the purifying auxiliary system for $\rho'_{L_i}$ or $\rho'_{R_i}$. Let $\ket{\alpha_i}_{L_iM_i^L}$ be a purification of $\rho'_{L_i}$ and $\ket{\beta_i}_{M_i^RR_i}$ be a purification of $\rho'_{R_i}$. Thus, $\ket{\alpha_i} \otimes \ket{\beta_i}$, which is defined over the entire original chain, is a purification of $\rho'_{L_i} \otimes \rho'_{R_i}$.

Uhlmann's theorem (Lemma \ref{lem:uhlmann}) shows how these purifications are related by a unitary on the purifying auxiliary system: for each $M_i$ with $i \in [1,n]$, there is a unitary $U_i$ acting non-trivially on region $M_i$ and as the identity on the rest of the chain such that $U_i\ket{\psi}$ is very close to $\ket{\alpha_i} \otimes \ket{\beta_i}$, a product state across the cut between $M_i^L$ and $M_i^R$. In other words, $U_i$ disentangles $L_i$ from $R_i$, up to some small error, by acting only on $M_i$. Formally we say that 
\begin{equation}
\left\lvert\bra{\alpha_i}_{L_iM_i^L}\otimes \bra{\beta_i}_{M_i^RR_i}U_i  \ket{\psi}_{L_iM_iR_i}\right\rvert = \sqrt{1-\delta_i^2},
\end{equation}
where $\delta_i \leq C\exp(-l/\xi'')$ for all $i$.

An equivalent way to write this fact is 
\begin{align}\label{eq:Vipsi}
U_i\ket{\psi}_{L_iM_iR_i} ={}& \sqrt{1-\delta_i^2}\ket{\alpha_i}_{L_iM_i^L}\otimes \ket{\beta_i}_{M_i^RR_i} \nonumber \\
&+ \delta_i\ket{\phi_i'}_{L_iM_i^LM_i^RR_i}
\end{align}
where $\ket{\phi_i'}$ is a normalized state orthogonal to $\ket{\alpha_i} \otimes \ket{\beta_i}$.

We can define the projector
\begin{equation}
\Pi_i = I_{L_iM_i^L}\otimes \ket{\beta_i}\bra{\beta_i}_{M_i^RR_i},
\end{equation}
whose rank is 1 when considered as an operator acting only on $M_i^R R_i$. 

We notice that $\Pi_iU_i\ket{\psi}_{L_iM_iR_i}$ is a product state across the $M_i^L/M_i^R$ cut and has a norm close to 1. Suppose we alternate between applying disentangling operations $U_i$ and projections $\Pi_i$ onto a product state as we move down the chain. Each $\Pi_i$ will reduce the norm of the state, but we claim the norm will never vanish completely (and delay the proof for later for clarity of argument).

\begin{claim}\label{prop:normnonzero}
If $l \geq \xi'' \log(3C)$, then
\begin{equation}
\lVert \Pi_nU_n \ldots \Pi_1 U_1 \ket{\psi} \rVert \neq 0.
\end{equation}

\end{claim}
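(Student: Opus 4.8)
The plan is to track the (unnormalized) partial states $\ket{\Psi_0}=\ket\psi$ and $\ket{\Psi_i}=\Pi_iU_i\ket{\Psi_{i-1}}$, so that the object of interest is $\lVert\ket{\Psi_n}\rVert$. Since $U_i$ is unitary and $\Pi_i$ is an orthogonal projector, the Pythagorean identity gives $\lVert\ket{\Psi_i}\rVert^2=\lVert\ket{\Psi_{i-1}}\rVert^2-\lVert(I-\Pi_i)U_i\ket{\Psi_{i-1}}\rVert^2$, so the entire problem reduces to controlling the norm of the discarded component $(I-\Pi_i)U_i\ket{\Psi_{i-1}}$ at each step. Because the chain is finite there are only $n$ steps; hence it in fact suffices to show that each per-step factor $1-\lVert(I-\Pi_i)U_i\ket{\Psi_{i-1}}\rVert^2/\lVert\ket{\Psi_{i-1}}\rVert^2$ is strictly positive, since a finite product of strictly positive numbers cannot vanish.

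The key structural observation I would exploit is that after the projection $\Pi_{i-1}$ the state factorizes across the $M_{i-1}^L/M_{i-1}^R$ cut with the whole right portion pinned to $\ket{\beta_{i-1}}$; that is, $\ket{\Psi_{i-1}}=\ket{\gamma}_{L_{i-1}M_{i-1}^L}\otimes\ket{\beta_{i-1}}_{M_{i-1}^RR_{i-1}}$ with $\lVert\ket\gamma\rVert=\lVert\ket{\Psi_{i-1}}\rVert$. Since $M_i\subset R_{i-1}$ and $M_i^RR_i\subset R_{i-1}$, both $U_i$ and $\Pi_i$ act entirely inside $R_{i-1}$ and leave the left factor $\ket\gamma$ untouched (with $M_{i-1}^R$ a spectator). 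This makes the norm drop multiplicative, $\lVert(I-\Pi_i)U_i\ket{\Psi_{i-1}}\rVert=\lVert\ket{\Psi_{i-1}}\rVert\,\lVert(I-\Pi_i)U_i\ket{\beta_{i-1}}\rVert$, and because $\Tr_{M_{i-1}^R}\ket{\beta_{i-1}}\bra{\beta_{i-1}}=\rho'_{R_{i-1}}$ the surviving factor is $\lVert(I-\Pi_i)U_i\ket{\beta_{i-1}}\rVert^2=\Tr\big(U_i^\dagger(I-\Pi_i)U_i\,\rho'_{R_{i-1}}\big)$, which depends only on the reduced state $\rho'_{R_{i-1}}$.

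The remaining step is to compare this to the analogous quantity for the true state, for which $U_i$ was designed. By Eq.~\eqref{eq:Vipsi}, $\lVert(I-\Pi_i)U_i\ket\psi\rVert\le\delta_i$, and since $U_i^\dagger(I-\Pi_i)U_i$ is supported on $R_{i-1}$ this equals $\Tr\big(U_i^\dagger(I-\Pi_i)U_i\,\rho_{R_{i-1}}\big)$. Subtracting the two traces and using $0\le U_i^\dagger(I-\Pi_i)U_i\le I$ bounds their difference by $\lVert\rho'_{R_{i-1}}-\rho_{R_{i-1}}\rVert_1$, which by Lemma~\ref{lem:lowrank} (recall $\rho'_{R_{i-1}}$ has rank $\sqrt\chi=d^{l/2}$) is at most $2C_1e^{-l/(16\xi)}$. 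Hence $\lVert(I-\Pi_i)U_i\ket{\beta_{i-1}}\rVert^2\le\delta_i^2+2C_1e^{-l/(16\xi)}$. The hypothesis $l\ge\xi''\log(3C)$, combined with $\delta_i\le Ce^{-l/\xi''}$, $\xi''\ge16\xi$, and the fact that $C$ dominates $C_1$, forces the right-hand side to stay strictly below $1$; each factor is therefore strictly positive, yielding $\lVert\ket{\Psi_n}\rVert\ge(1-\delta_1^2)^{1/2}\prod_{i\ge2}\big(1-\delta_i^2-2C_1e^{-l/(16\xi)}\big)^{1/2}>0$ (the $i=1$ step uses $\ket{\Psi_0}=\ket\psi$ directly).

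I expect the main obstacle to be the second paragraph: carefully verifying the geometric containments $M_i,M_i^RR_i\subset R_{i-1}$ that make the left factor a genuine spectator, and recognizing that the disentangling error incurred on $\ket{\beta_{i-1}}$ is governed only by the local reduced state $\rho'_{R_{i-1}}$ rather than by the full (and globally quite different) state $\ket{\Psi_{i-1}}$. This is exactly the mechanism by which the $O(N)$ projections fail to accumulate: each per-step factor is pinned by a single area-law truncation error plus $\delta_i^2$, both independent of $N$, so the finite product stays bounded away from zero regardless of the chain length.
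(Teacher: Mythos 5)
Your proposal is correct and follows essentially the same route as the paper's own proof: both exploit the product structure $\ket{\gamma}\otimes\ket{\beta_{i-1}}$ left behind by each rank-one projection, reduce the per-step norm loss to a trace against the truncated state $\rho'_{R_{i-1}}$, and control it by comparing to the true state via Eq.~\eqref{eq:Vipsi} (giving $\delta_i^2$) plus the trace-distance cost $\lVert\rho_{R_{i-1}}-\rho'_{R_{i-1}}\rVert_1$ from Lemma~\ref{lem:lowrank}. The only difference is presentational: the paper runs an induction that lower-bounds $\lVert\Pi_{j+1}U_{j+1}\ket{\beta_j}\rVert^2$ directly, while you upper-bound the complementary quantity $\lVert(I-\Pi_i)U_i\ket{\beta_{i-1}}\rVert^2$ and multiply the per-step survival factors, which is the same estimate in contrapositive form (and in fact slightly tighter, using $1-\delta_i^2$ rather than $(1-Ce^{-l/\xi''})^2$).
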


This allows us to define 
\begin{equation}
\kettilde{\phi} = \frac{\Pi_nU_n \ldots \Pi_1 U_1 \ket{\psi}}{\lVert \Pi_nU_n \ldots \Pi_1 U_1 \ket{\psi} \rVert}.
\end{equation}

Note that, to put our proof in line with what is described in the introduction and proof summary earlier, we may act with all the unitaries prior to the projectors if we conjugate the projectors
\begin{equation}
\kettilde{\phi} \propto \Pi'_n\ldots \Pi'_1 U_n \ldots U_1 \ket{\psi},
\end{equation}
where $\Pi'_i = U_n\ldots U_{i+1}\Pi_iU_{i+1}^\dagger\ldots U_{n}^\dagger$, which still only acts on the region $R_i$. 

This can be compared with the state $\ket{\phi}$ defined by applying the disentangling operations without projecting:
\begin{equation}
\ket{\phi} = U_n \ldots U_1 \ket{\psi}.
\end{equation}
We claim that $\kettilde{\phi}$ is a good local approximation for $\ket{\phi}$ (and delay the proof for clarity of argument).

\begin{claim}\label{prop:philocalapprox}
For any integer $k'$, $\kettilde{\phi}$ is a $(k', \epsilon')$-local approximation to $\ket{\phi}$ with $\epsilon'=C\sqrt{k'/l+3}\exp(-l/\xi'')$.
\end{claim}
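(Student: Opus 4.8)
The plan is to bound $D(\phi_X,\tilde{\phi}_X)$ for an arbitrary length-$k'$ region $X$ by combining two facts: that the rank-$1$ projectors localize the effect of the projection step to a neighborhood of $X$, and that each individual projection is cheap. I would work in the interleaved picture $\ket{\theta_i}:=\Pi_iU_i\cdots\Pi_1U_1\ket{\psi}$, so that $\kettilde{\phi}=\ket{\theta_n}/\lVert\theta_n\rVert$, and record the structural invariant that, because $\Pi_i=I_{L_iM_i^L}\otimes\ket{\beta_i}\bra{\beta_i}_{M_i^RR_i}$ has rank $1$ on $M_i^RR_i$, the state $\ket{\theta_i}$ is always a product across the central cut of $M_i$ whose right-hand factor is exactly $\ket{\beta_i}$, independent of everything applied before it.

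First I would establish localization. Let $M_{i_L}$ be the last block lying entirely to the left of $X$ and $M_{i_R}$ the first block lying entirely to its right; every block strictly between them meets $X$, so $i_R-i_L\le k'/l+3$. On the right, once $\Pi_{i_R}$ has been applied the state is a product across the centre of $M_{i_R}$ with $X$ contained in the left factor, and every later operation $U_i,\Pi_i$ (for $i>i_R$) acts only inside $R_{i_R}$, i.e.\ on the right factor, so none of them alters $\tilde{\phi}_X$ after renormalization. On the left, the invariant shows that $\Pi_{i_L}$ resets the factor on $M_{i_L}^RR_{i_L}\supseteq X$ to the fixed state $\ket{\beta_{i_L}}$, so $\tilde{\phi}_X$ is determined by $\ket{\beta_{i_L}}$ together with the operations at steps $i_L<i\le i_R$ alone. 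For $\ket{\phi}=U_n\cdots U_1\ket{\psi}$ the analogous reduction is immediate: the $U_i$ acting on blocks disjoint from $X$ commute past the others and lie in $X^c$, so they drop out of $\Tr_{X^c}$ and $\phi_X$ depends only on the $U_i$ meeting $X$.

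Next I would bound the accumulated error of these $i_R-i_L\le k'/l+3$ projections. Writing the orthogonal decomposition $U_i\ket{\theta_{i-1}}=\ket{\theta_i}+(I-\Pi_i)U_i\ket{\theta_{i-1}}$, the weight lost in step $i$ is $\lVert(I-\Pi_i)U_i\ket{\theta_{i-1}}\rVert$, which I would show is at most $\delta_i\le C\exp(-l/\xi'')$ by combining the disentangling estimate Eq.~\eqref{eq:Vipsi} with the fact that the invariant forces $\ket{\theta_{i-1}}$ to reproduce the exact local structure near $M_i$ (through $\ket{\beta_{i-1}}$ matching $\rho'$). Comparing the projected and unprojected states on a neighborhood of $X$ and telescoping over the relevant projections, each costing a fidelity factor at least $(1-\delta_i^2)^{1/2}$, the purified distance between them is at most $\sqrt{\sum_i\delta_i^2}\le\sqrt{i_R-i_L}\,C\exp(-l/\xi'')\le C\sqrt{k'/l+3}\exp(-l/\xi'')$. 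Since purified distance is monotone under the partial trace onto $X$ (and trace distance is dominated by it, Lemma \ref{lem:purifiedvstrace}), this descends to $D(\phi_X,\tilde{\phi}_X)$, proving the claim; Claim \ref{prop:normnonzero} guarantees the normalizations are nonzero.

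I expect the main obstacle to be the per-step bound $\lVert(I-\Pi_i)U_i\ket{\theta_{i-1}}\rVert\le\delta_i$: Eq.~\eqref{eq:Vipsi} controls $U_i\ket{\psi}$, but here $\Pi_i$ acts on the already-evolved, sub-normalized $\ket{\theta_{i-1}}$, so I must argue that $\ket{\theta_{i-1}}$ agrees with the exact state closely enough on $M_i^RR_i$ for the same projector $\ket{\beta_i}\bra{\beta_i}$ to retain nearly all of its weight. Closely tied to this is organizing the telescoping so the individual errors combine \emph{in quadrature}, yielding the $\sqrt{k'/l+3}$ factor rather than a linear $k'/l$, while carefully tracking the sub-normalization introduced by each projection throughout.
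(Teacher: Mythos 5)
Your localization step is sound and matches the paper's: the paper likewise shows that the reduced state on a window $Y=M_a^R M_{a+1}\cdots M_b^L\supseteq X$ depends only on the $U_j,\Pi_j$ with $j\in[a,b]$ (via the product structure in Eq.~\eqref{eq:phitildeproduct}), with the same $k'/l+3$ counting. The gap is precisely the one you flagged, and it is not a technicality: your scheme needs the per-step bound $\lVert(I-\Pi_i)U_i\ket{\theta_{i-1}}\rVert/\lVert\ket{\theta_{i-1}}\rVert\le\delta_i$, and this is not available. The paper never projects the evolved state at all. Because $U_{j'}$ for $j'<j$ acts on $M_{j'}$, which is disjoint from both $M_j$ and $M_j^R R_j$, all earlier unitaries commute past $\Pi_j U_j$, and the paper telescopes
\begin{equation*}
\Pi_bU_b\cdots\Pi_aU_a\ket{\psi} \;=\; U_b\cdots U_a\ket{\psi}\;-\;\sum_{j=a}^b\delta_j\,\bigl(\Pi_bU_b\cdots\Pi_{j+1}U_{j+1}\bigr)(I-\Pi_j)\ket{\phi_j'},
\end{equation*}
so that Eq.~\eqref{eq:Vipsi} is only ever invoked on $\ket{\psi}$ itself; each error vector has norm at most $\delta_j$ exactly, with the full rate $e^{-l/\xi''}$, and the claim follows from a single fidelity comparison with $U_b\cdots U_a\ket{\psi}$ (the $\sqrt{k'/l+3}$ coming from combining the $\delta_j$ in quadrature within that one telescoped error vector).

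If you repair your per-step bound in the natural way (the computation in the paper's proof of Claim \ref{prop:normnonzero}), you hit a quantitative loss. The invariant gives that the reduced state of $\ket{\theta_{i-1}}$ on $R_{i-1}$ is the \emph{truncated} state $\rho'_{R_{i-1}}$, not $\rho_{R_{i-1}}$, so the best you get is
\begin{equation*}
\frac{\lVert(I-\Pi_i)U_i\ket{\theta_{i-1}}\rVert^2}{\lVert\ket{\theta_{i-1}}\rVert^2}\;\le\;\delta_i^2+\lVert\rho_{R_{i-1}}-\rho'_{R_{i-1}}\rVert_1\;\le\;C^2e^{-2l/\xi''}+2C_1e^{-l/(16\xi)} .
\end{equation*}
The truncation term (Lemma \ref{lem:lowrank}) enters \emph{linearly} under the square root, so your per-step purified distance scales as $e^{-l/(2\xi'')}$, not $e^{-l/\xi''}$: the decay rate is halved. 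Moreover, your hybrid states compose only via the triangle inequality for purified distance, which is additive; successive hybrid errors have no evident orthogonality, so there is no mechanism for quadrature in your picture. The natural output of your scheme is therefore $\epsilon'=O\bigl((k'/l+3)\,e^{-l/(2\xi'')}\bigr)$, weaker than the claimed $C\sqrt{k'/l+3}\,e^{-l/\xi''}$ in both the prefactor and, more importantly, the exponent. Such a bound would still salvage Theorem \ref{thm:mainthm} after roughly doubling $l$ (squaring the bond dimension), but it does not prove the Claim as stated.
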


Next we can define
\begin{equation}\label{eq:tildepsifromtildephi}
\kettilde{\psi} = U_n^\dag \ldots U_1^\dag \kettilde{\phi},
\end{equation}
which parallels the relationship
\begin{equation}
\ket{\psi} = U_n^\dag \ldots U_1^\dag \ket{\phi}.
\end{equation}

Now suppose $X$ is a contiguous region of the chain of length $k$. Then there is a region $X'$ of the chain of length at most $k'=k+2l$ that contains $X$ and is made up of regions $M_j$ where $j \in [a',b']$. Then

\begin{align}\label{eq:psitildelocalapprox}
&\lVert \Tr_{X^c}(\kettilde{\psi}\bratilde{\psi}-\ket{\psi}\bra{\psi})\rVert_1 \nonumber\\
\leq{}& \lVert \Tr_{X'^c}(\kettilde{\psi}\bratilde{\psi}-\ket{\psi}\bra{\psi})\rVert_1 \nonumber\\
={}&\lVert \Tr_{X'^c}(\kettilde{\phi}\bratilde{\phi}-\ket{\phi}\bra{\phi})\rVert_1 \nonumber \\
\leq{}& C\sqrt{k/l+5}\exp(-l/\xi'') \nonumber \\
\leq{}& C\sqrt{6k}\exp(-l/\xi''),
\end{align}
where the third line follows from the fact that $\ket{\phi}$ and $\ket{\psi}$ are related by a unitary that does not couple region $X'$ and region $X'^c$, and the fourth line follows from Claim \ref{prop:philocalapprox}. 

If we choose $l = \max(t_0,\xi''\log(3C\sqrt{k}/\epsilon))$, then the requirements of Claim \ref{prop:normnonzero} are satisfied, and we can see from Eq.~\eqref{eq:psitildelocalapprox} that $\kettilde{\psi}$ is a $(k,\epsilon)$-local approximation to $\ket{\psi}$, item (1) of the theorem.

Item (2) states that $\kettilde{\psi}$ can be written as an MPS with constant bond dimension. This can be seen by the following logic. The Schmidt rank of $\kettilde{\phi}$ across any cut $M_i^L/M_i^R$ is 1, as discussed in the proof of Claim \ref{prop:philocalapprox} (see Eq.~\eqref{eq:phitildeproduct}), since the projector $\Pi_i$ projects onto a product state across that cut and unitaries $U_j$ with $j>i$ act trivially across the cut. By Corollary \ref{cor:schmidtrankincrease}, this implies that the Schmidt rank across any cut $M_i^R/M_{i+1}^L$ can be at most $d^{l/2}$. Acting with the inverse unitaries $U_j^\dag$ on $\kettilde{\phi}$ to form $\kettilde{\psi}$ preserves the Schmidt rank across the cut $M_i^R/M_{i+1}^L$, since none couple both sides of the cut. Because any cut is at most distance $l/2$ from some $M_i^R/M_{i+1}^L$ cut, the Schmidt rank across an arbitrary cut can be at most a factor of $d^{l/2}$ greater, again by Corollary \ref{cor:schmidtrankincrease}, meaning the maximum Schmidt rank across any cut of $\ket{\psi}$ is $\chi=d^l$. 

Given our choice of $l=\xi''\log(C\sqrt{6k}/\epsilon)$, we find that the state can be represented by an MPS with bond dimension 
\begin{align}
\chi &= (\sqrt{6}C)^{\xi''\log(d)}(k/\epsilon^2)^{\xi''\log(d)/2} \nonumber\\
&= e^{e^{\tilde{O}(\xi \log(d))}}(k/\epsilon^2)^{O(\xi^2\log^2(d))}.
\end{align}
This proves item (2). Note that, in the case that our choice of $l$ exceeds $N$, it is not possible to form the construction we have described. However, in this case $d^l$ will exceed $d^{N}$ and we may take $\kettilde{\psi} =\ket{\psi}$, which is a local approximation for any $k$ and $\epsilon$ and has bond dimension in line with item (2) or (2').

Item (2') follows by using the same equations with $C=2C_2+C_4 = \exp(\tilde{O}(\log^{3}(d)/\Delta))$ and $\xi'' =O(1/\Delta)$. For Lemma \ref{lem:tracebound} to apply, we must have $l \geq \Omega(\log^4(d)/\Delta^2)$, but this will be satisfied for sufficiently large choices of $k/\epsilon^2$. Thus the final analysis yields
\begin{equation}
\chi = e^{\tilde{O}(\log^{4}(d)/\Delta^{2})}(k/\epsilon^2)^{O\left(\log(d)/\Delta\right)}.
\end{equation}

Item (3) states that $\kettilde{\psi}$ can be formed from a low-depth quantum circuit. In the proof of Claim 2, we show how the state $\kettilde{\phi}$ is a product state across divisions $M_i^L/M_{i}^R$, as in Eq.~\eqref{eq:phitildeproduct}. Thus the state $\kettilde{\phi}$ can be created from $\ket{0}^{\otimes N}$ by acting with non-overlapping unitaries on regions $M_i^RM_{i+1}^L$ in parallel. Each of these unitaries is supported on $l$ sites. Then, $\kettilde{\psi}$ is related to $\kettilde{\phi}$ by another set of non-overlapping unitaries supported on $l$ sites, as shown in Eq.~\eqref{eq:tildepsifromtildephi}. We conclude that $\kettilde{\psi}$ can be created from the trivial state by two layers of parallel unitary operations where each unitary is supported on $l$ sites, as illustrated in Figure \ref{fig:constantdepthcircuit}. In \cite{brennen2005efficient}, it is shown how any $l$-qudit unitary can be decomposed into $O(d^{2l}) = O(\chi^2)$ two-qudit gates, with no need for ancillas. We can guarantee that these gates are all spatially local by spending at most depth $O(l)$ performing swap operations to move any two sites next to each other, a factor only logarithmic in the total depth. This proves the theorem.
\end{proof}


\begin{proof}[Proof of Claim \ref{prop:normnonzero}]

We prove by induction. Let $|\tilde{\phi}_j\rangle = \Pi_jU_j \ldots \Pi_1U_1 \ket{\psi}$. Note that $\Pi_i U_i\ket{\psi}$ is non-zero for all $i$, so in particular $|\tilde{\phi}_1\rangle$ is non-zero. Furthermore we note that, if it is non-zero, $|\tilde{\phi}_j\rangle$ can be written as a product state $\ket{\alpha'_j}_{L_jM_j^L}\otimes \ket{\beta_j}_{M_j^RR_j}$ for some unnormalized but non-zero state $\ket{\alpha'_j}$, and the reduced density matrix of $|\tilde{\phi}_j\rangle$ on $R_j$ is $\rho'_{R_j}$. If we assume $|\tilde{\phi}_j\rangle$ is non-zero then we can write
\begin{align}
|\tilde{\phi}_{j+1}\rangle &= \Pi_{j+1}U_{j+1} |\tilde{\phi}_j\rangle \nonumber \\
&= \ket{\alpha_j'}\otimes \Pi_{j+1}U_{j+1} \ket{\beta_j}
\end{align}
and
\begin{align}
\lVert |\tilde{\phi}_{j+1}\rangle\rVert^2 
&= \lVert \ket{\alpha_j'}\otimes \Pi_{j+1}U_{j+1} \ket{\beta_j}\rVert^2 \nonumber\\ 
&= \lVert \ket{\alpha'_j}\rVert^2\lVert \Pi_{j+1}U_{j+1} \ket{\beta_j}\rVert^2 \nonumber\\
&= \lVert \ket{\alpha'_j}\rVert^2\Tr(\Pi_{j+1}U_{j+1}\ket{\beta_j}\bra{\beta_j}U_{j+1}^\dag \Pi_{j+1}) \nonumber\\
&= \lVert \ket{\alpha'_j}\rVert^2\Tr(\Pi_{j+1}U_{j+1}\rho'_{R_{j}}U_{j+1}^\dag \Pi_{j+1})\nonumber \\
&\geq \lVert \ket{\alpha'_j}\rVert^2\Tr(\Pi_{j+1}U_{j+1}\rho_{R_j} U_{j+1}^\dag \Pi_{j+1}) \nonumber\\
&-\lVert \ket{\alpha'_j}\rVert^2\lVert \rho_{R_j}- \rho'_{R_j}\rVert_1 \nonumber\\
&\geq \lVert \ket{\alpha'_j}\rVert^2\Tr(\Pi_{j+1}U_{j+1}\rho U_{j+1}^\dag \Pi_{j+1}) \nonumber\\
&-\lVert \ket{\alpha'_j}\rVert^2\lVert \rho_{R_j}- \rho'_{R_j}\rVert_1 \nonumber\\
&\geq \lVert \ket{\alpha'_j}\rVert^2\lVert\Pi_{j+1}U_{j+1}\ket{\psi} \rVert^2 \nonumber\\
&-\lVert \ket{\alpha'_j}\rVert^2\lVert \rho_{R_j}- \rho'_{R_j}\rVert_1 \nonumber\\
&\geq \lVert \ket{\alpha'_j}\rVert^2(1-C\exp(-l/\xi''))^2 \nonumber\\
&- \lVert \ket{\alpha'_j}\rVert^2C\exp(-l/\xi'') \nonumber\\
& >  0 \nonumber
\end{align}
as long as $l \geq \xi'' \log(3C)$.
\end{proof}


\begin{proof}[Proof of Claim \ref{prop:philocalapprox}]

First, consider the cut $M_i^L/M_i^R$ during the formation of the state $\kettilde{\phi}$. When the projector $\Pi_i$ is applied, the state becomes a product state across this cut. The remaining operators are $U_j$ and $\Pi_j$ with $j>i$, and thus they have no effect on the Schmidt rank across the $M_i^L/M_i^R$ cut, meaning $\kettilde{\phi}$ is a product state across each of these cuts, or in other words
\begin{align}\label{eq:phitildeproduct}
\kettilde{\phi} ={}& \ket{\phi_1}_{M_0M_1^L}\otimes \ket{\phi_2}_{M_1^RM_2^L} \otimes \ldots \nonumber \\
&\ldots \otimes \ket{\phi_n}_{M_{n-1}^RM_n^L} \otimes \ket{\phi_{n+1}}_{M_n^RM_{n+1}}.
\end{align}

Given an integer $k$ and a contiguous region $X$ of length $k$, we can find integers $a$ and $b$ such that $Y = M_a^RM_{a+1}\ldots M_{b-1}M_b^L$ contains $X$ and $\lvert b - a \rvert \leq k/l+2$. Then
\begin{align}
&\Tr_{Y^c}(\kettilde{\phi}\bratilde{\phi}) \nonumber \\
={}& \ket{\phi_{a+1}}\bra{\phi_{a+1}} \otimes \ldots \otimes \ket{\phi_{b}}\bra{\phi_{b}} \nonumber\\
\propto{}& \Tr_{L_aM_a^LM_b^RR_b}\left(\Pi_bU_b\ldots\Pi_aU_a\ket{\psi}\bra{\psi}U_a^\dag\Pi_a\ldots U_b^\dag \Pi_b\right).
\end{align}
The advantage here is that all of the $U_i$ and $\Pi_i$ for which $i \not\in [a,b]$ have disappeared.
On the other hand, we have
\begin{align}
&\Tr_{Y^c}(\ket{\phi}\bra{\phi}) \nonumber\\
={}& \Tr_{Y^c}(U_n\ldots U_1\ket{\psi}\bra{\psi}U_1^\dag \ldots U_n^\dag) \nonumber\\
={}& \Tr_{L_aM_a^LM_b^RR_b}\left(U_b\ldots U_{a}\ket{\psi}\bra{\psi}U_{a}^\dag\ldots U_{b}^\dag\right).
\end{align}
Note that, since
\begin{equation}
U_i \ket{\psi} = \sqrt{1-\delta_i^2}\ket{\alpha_i} \otimes \ket{\beta_i} + \delta_i \ket{\phi_i'},
\end{equation}
we can say that
\begin{equation}
\Pi_i U_i \ket{\psi} = U_i\ket{\psi} - \delta_i(I-\Pi_i)\ket{\phi_i'},
\end{equation}
and thus
\begin{align}
&\Pi_bU_b\ldots \Pi_aU_a\ket{\psi} \nonumber \\
={}& U_b\ldots U_a \ket{\psi} - \nonumber\\
&\sum_{j=a}^b \delta_j(\Pi_bU_b\ldots \Pi_{j+1}U_{j+1})(I-\Pi_j)\ket{\phi_j'} \nonumber\\
\equiv{}& U_b\ldots U_a \ket{\psi} - \delta \ket{\phi'},
\end{align}
where $\delta \leq \sqrt{\sum_{j=a}^b\delta_j^2}$ and normalized $\ket{\phi'}$ is normalized. This implies
\begin{equation}
    \frac{\lvert\bra{\psi} U_a^\dagger \ldots U_b^\dagger \Pi_b U_b \ldots \Pi_a U_a \ket{\psi}\rvert}{\lVert \Pi_a \ldots U_b^\dagger \Pi_b U_b \ldots \Pi_a U_a \ket{\psi}\rVert} \geq \sqrt{1-\delta^2},
\end{equation}
which shows that $D_1(\tau, \tau') \leq \delta$, where 
\begin{align}
&\tau = U_b\ldots U_{a}\ket{\psi}\bra{\psi}U_{a}^\dag\ldots U_{b}^\dag \nonumber\\
&\tau' = \frac{\Pi_bU_b\ldots\Pi_aU_a\ket{\psi}\bra{\psi}U_a^\dag\Pi_a\ldots U_b^\dag \Pi_b}{\lVert \Pi_a \ldots U_b^\dagger \Pi_b U_b \ldots \Pi_a U_a \ket{\psi}\rVert^2} 
\end{align}
and hence
\begin{align}
&D_1( \Tr_{X^c}(\ket{\phi}\bra{\phi}),\Tr_{X^c}(\kettilde{\phi}\bratilde{\phi}))\nonumber \\
&\leq D_1( \Tr_{Y^c}(\ket{\phi}\bra{\phi}),\Tr_{Y^c}(\kettilde{\phi}\bratilde{\phi}))\nonumber \\
&\leq D_1(\tau_Y,\tau'_Y) \leq D_1(\tau,\tau') \leq \delta \nonumber \\
&\leq C\sqrt{k/l+3}\exp(-l/\xi'').
\end{align}
This holds for any region $X$ of length $k$, so this proves the claim.
\end{proof}


\subsection{Proof of Theorem \ref{thm:reduction}}

First we state a lemma that will do most of the legwork needed for Theorem \ref{thm:reduction}. Then we prove Theorem \ref{thm:reduction}.

\begin{lemma}\label{lem:Kcombined}
Suppose, for $j=0,1$, $H^{(j)}$ is a translationally invariant Hamiltonian defined on a chain of length $N$ and local dimension $d$. Further suppose that $|\psi^{(j)}\rangle$ is the unique ground state of $H^{(j)}$ with energy $E_0^{(j)}$, and let $\Delta^{(j)}$ be the spectral gap of $H^{(j)}$. We may form a new chain with local dimension $2d$ by adding an ancilla qubit to each site of the chain. Then there is a Hamiltonian $K$ defined on this chain such that
\begin{enumerate}[(1)]
    \item The ground state energy of $K$ is \begin{equation}
        E_0^K = \frac{1}{3}\min_jE_0^{(j)}.
    \end{equation}
    \item If $E_0^{(0)} < E_0^{(1)}$, then the ground state of $K$ is $\ket{0^N}_A \otimes |\psi^{(0)}\rangle$ and if $E_0^{(1)} < E_0^{(0)}$, then the ground state is $\ket{1^N}_A \otimes |\psi^{(1)}\rangle$, where $A$ refers to the $N$ ancilla registers collectively.
    \item If $E_0^{(0)} < E_0^{(1)}$, then the spectral gap of $K$ is at least $\min(\Delta^{(0)},E_0^{(1)} - E_0^{(0)},1)/3$ and if $E_0^{(1)} < E_0^{(0)}$, then the spectral gap of $K$ is at least $\min(\Delta^{(1)},E_0^{(0)} - E_0^{(1)},1)/3$.
\end{enumerate}
\end{lemma}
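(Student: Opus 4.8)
The plan is to realize $K$ with an ancilla ``flag'' on each site that records which of the two Hamiltonians is being applied. I attach one qubit to every site (raising the local dimension from $d$ to $2d$), write $\Pi^{ab}_{i,i+1}$ for the projector onto the ancilla configuration $\ket{ab}$ on the pair $(i,i+1)$, and set
\begin{equation}
K_{i,i+1} = \tfrac{1}{3}\Big(\Pi^{00}_{i,i+1}H^{(0)}_{i,i+1} + \Pi^{11}_{i,i+1}H^{(1)}_{i,i+1} + \lambda\,(\Pi^{01}_{i,i+1}+\Pi^{10}_{i,i+1})\Big),
\end{equation}
with $K=\sum_{i=1}^{N-1}K_{i,i+1}$ and a domain-wall penalty $\lambda$ (a constant fixed below). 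The three ancilla projectors are mutually orthogonal and each physical factor is positive semidefinite of norm at most one, so every $K_{i,i+1}$ is positive semidefinite with $\lVert K_{i,i+1}\rVert \le \max(1,\lambda)/3 \le 1$ once $\lambda\le 3$; the $1/3$ normalization is precisely what leaves room to take $\lambda$ somewhat larger than the interaction norm. The construction is manifestly translationally invariant, so $K$ is an admissible Hamiltonian on the enlarged chain.

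Because every ancilla operator in $K$ is diagonal, $K$ commutes with all ancilla $Z$-operators and block-diagonalizes over ancilla strings $a\in\{0,1\}^N$. In the block labeled by $a$, $K$ acts on the physical chain as $\tfrac13$ times the sum of $H^{(0)}$ on the bonds interior to the $0$-runs, $H^{(1)}$ on the bonds interior to the $1$-runs, and $\lambda\,w(a)\,I$, where $w(a)$ counts the domain walls of $a$. I first handle the two uniform blocks: $a=0^N$ yields $\tfrac13 H^{(0)}$, with lowest eigenvalues $\tfrac13 E_0^{(0)}$ and $\tfrac13(E_0^{(0)}+\Delta^{(0)})$, and $a=1^N$ yields $\tfrac13 H^{(1)}$ analogously. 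These already give the claimed ground energy $\tfrac13\min_j E_0^{(j)}$ and ground state (items 1 and 2, with $j^\ast$ the index of the smaller $E_0^{(j)}$), and they contribute $\tfrac13\Delta^{(j^\ast)}$ (excitation inside the winning block) and $\tfrac13\lvert E_0^{(0)}-E_0^{(1)}\rvert$ (the rival uniform block) to the gap.

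The crux, and what I expect to be the main obstacle, is to show that \emph{every} block with $w(a)\ge 1$ has energy at least $\tfrac13\big(\min_j E_0^{(j)}+1\big)$, so that domain-wall blocks neither undercut the uniform ground state nor spoil the gap. The driving mechanism is that opening a domain wall at a bond simply deletes one physical interaction term, and since $\lVert H_{i,i+1}\rVert\le 1$ this lowers the physical energy by at most $1$, which the penalty $\lambda$ repays. To make this quantitative for arbitrary run patterns I would use translation invariance: letting $e_j(s)$ be the ground energy of $H^{(j)}$ on an $s$-site segment, translation invariance makes $e_j$ well defined and gives monotonicity, superadditivity $e_j(a+b)\ge e_j(a)+e_j(b)$, and near-subadditivity $e_j(a+b)\le e_j(a)+e_j(b)+1$ coming from dropping or paying for a single cut bond. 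The block energy is $\tfrac13\big(\sum_r e_{j_r}(s_r)+\lambda\,w(a)\big)$, and the goal is to glue adjacent runs back together one at a time, each merge raising $\sum_r e_{j_r}(s_r)$ by at most a bounded amount while removing one penalty $\lambda$, until only a uniform block remains. Controlling the energy that accumulates from runs assigned the ``wrong'' Hamiltonian---while keeping the constants clean enough that a constant $\lambda\le 3$ suffices---is the delicate step; translation invariance is essential here, since it rules out Hamiltonians that are cheap on some regions and expensive on others in a way that a geographically split (domain-walled) state could exploit.

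Granting that estimate, I would finish by taking the minimum over the three competing excitations---the within-block gap $\tfrac13\Delta^{(j^\ast)}$, the rival uniform block $\tfrac13\lvert E_0^{(0)}-E_0^{(1)}\rvert$, and the domain-wall blocks at $\ge\tfrac13$---and use that the claimed bound caps at $1/3$ to read off the gap $\tfrac13\min\big(\Delta^{(j^\ast)},\lvert E_0^{(0)}-E_0^{(1)}\rvert,1\big)$, which together with the uniform-block analysis gives items 1--3.
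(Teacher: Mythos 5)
Your Hamiltonian is exactly the paper's construction (take $\lambda=3$: the paper places the full identity $I_{i,i+1}$ on anti-aligned ancilla pairs), and your block-diagonalization over ancilla strings, the analysis of the two uniform blocks, and the final three-way minimum for the gap all reproduce the paper's argument. The genuine gap is the step you yourself flag as ``delicate'': the lower bound $\tfrac13(\min_j E_0^{(j)}+1)$ for every block with $w(a)\ge 1$. This bound \emph{is} the entire content of the lemma, and your sketched mechanism cannot deliver it. First, ``gluing adjacent runs'' is ill-defined as stated: adjacent runs by construction carry \emph{different} Hamiltonians, so a merge forces you to compare $e_0(s_1)+e_1(s_2)$ against $e_j(s_1+s_2)$, and there is no inequality relating segment energies of two unrelated Hamiltonians (they are linked only through the single number $\min_j E_0^{(j)}$, which concerns the full chain). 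Second, even after the natural repair (use within-type near-subadditivity to group all type-$0$ runs and all type-$1$ runs, reducing the problem to $e_0(L_0)+e_1(L_1)\ge \min_j E_0^{(j)}-2$ for $L_0+L_1=N$), the properties you list are quantitatively too weak: iterating near-subadditivity with the remainder handled by monotonicity yields only $e_j(s)\gtrsim \frac{s}{N+s}E_0^{(j)}-1$, and, e.g., for $L_0=0.45N$, $L_1=0.55N$ the resulting bound $\frac{M-3}{3}+\frac{M-1}{2}$ (with $M=\min_j E_0^{(j)}$) falls short of $M-2$ for all $M>3$. Since $E_0^{(j)}$ scales like $N$ while the wall penalty is $O(1)$ per wall, losing any constant factor on the energy density is fatal, not a matter of ``cleaning up constants.''

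What closes this hole in the paper is its Claim 3 (the ``partial chain'' bound): for a translationally invariant $H^{(j)}$ with positive semi-definite bounded terms, and for \emph{any} state $\ket{\eta}$, the energy of the bonds inside a length-$s$ window obeys $\bra{\eta}\sum_i H^{(j)}_{i,i+1}\ket{\eta}\ge \frac{s}{N-1}E_0^{(j)}-1$. Its proof is not a sub/superadditivity bookkeeping argument but a tiling trick: take the reduced density matrix of $\ket{\eta}$ on the window, tensor $\sim N/s$ copies of it to cover the chain (handling the remainder separately), and apply the variational principle to this product trial state --- this is where translation invariance enters quantitatively, not merely through the well-definedness of $e_j(s)$. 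With this proportional lower bound, each of the $m+1$ runs contributes at least $\frac{s_r}{N-1}E_0^{(j_r)}-1\ge \frac{s_r}{N-1}\min_j E_0^{(j)}-1$, so the block energy is at least $\frac13\min_j E_0^{(j)}+\frac{2m-1}{3}\ge E_0^K+\frac13$ once $m\ge1$, exactly what your outline needs. Your proposal becomes a complete proof precisely when the merging step is replaced by this (or an equivalent) per-segment estimate; as written, it assumes the hard part.
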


\begin{proof}[Proof of Lemma \ref{lem:Kcombined}]
Note that a variant of this lemma is employed in \cite{bausch2018undecidability,cubitt2015undecidability,bausch2018size} to show the undecidability of certain properties of translationally invariant Hamiltonians.

Since $H^{(j)}$ is translationally invariant, it is specified by its single interaction term $H^{(j)}_{i,i+1}$:
\begin{equation}
    H^{(j)} = \sum_{i=1}^{N-1} H^{(j)}_{i,i+1}.
\end{equation}
The Hamiltonian $K$ will be defined over a new chain where we attach to each site an ancilla qubit, increasing the local Hilbert space dimension by a factor of 2. We refer to the ancilla associated with site $i$ by the subscript $A_i$, and we refer to the collection of ancillas together with the subscript $A$. Operators or states without a subscript are assumed to act on the original $d$-dimensional part of the local Hilbert spaces. Let
\begin{equation}
    K =  \sum_{i=1}^{N-1} K_{i,i+1},  
\end{equation}
where
\begin{align}
    K_{i,i+1} ={}& \;\;\; \frac{1}{3}H^{(0)}_{i,i+1} \otimes \ket{00}\bra{00}_{A_iA_{i+1}} \\
    &+ \frac{1}{3}H^{(1)}_{i,i+1} \otimes \ket{11}\bra{11}_{A_iA_{i+1}} \\
    &+ I_{i,i+1} \otimes\left(\ket{01}\bra{01}+\ket{10}\bra{10}\right)_{A_iA_{i+1}}
\end{align}
with $I_{i,i+1}$ denoting the identity operation on sites $i$ and $i+1$. In this form it is clear that $K$ is a nearest-neighbor translationally invariant Hamiltonian and that each interaction term has operator norm 1 (a requirement under our treatment of 1D Hamiltonians). The picture we get is that if two neighboring ancillas are both $\ket 0$, then $H^{(0)}_{i,i+1}/3$ is applied, if both are $\ket 1$ then $H^{(1)}_{i,i+1}/3$ is applied, and if the ancillas are different, then $I_{i,i+1}$ is applied. Following this intuition, we can rewrite $K$ as follows.
\begin{equation}
    K = \sum_{x=0}^{2^N-1} \left(\ket{x} \bra{x}_A\otimes \sum_{i=1}^{N-1} K_{x,i}\right),
\end{equation}
where the first sum is over all settings of the ancillas and the operator $K_{x,i}$ acts on the non-ancilla portion of sites $i$ and $i+1$, with 
\begin{equation}
    K_{x,i}=
\begin{cases} 
      H^{(0)}_{i,i+1}/3 & \text{ if } x_i=x_{i+1}=0 \\
      H^{(1)}_{i,i+1}/3 & \text{ if }x_i=x_{i+1}=1 \\
      I_{i,i+1} & \text{ if }x_i \neq x_{i+1} 
   \end{cases}
\end{equation}

We analyze the spectrum of $K$. If $H^{(j)}$ has eigenvalues $E^{(j)}_n$ with corresponding eigenvectors $|\phi^{(j)}_n\rangle$ (where $E^{(j)}_n$ is non-decreasing with increasing integers $n$), then the states $\ket{0^N}_A \otimes |\phi^{(0)}_n\rangle$ and $\ket{ 1^N}_A \otimes |\phi^{(1)}_n\rangle$ are eigenstates of $K$ with eigenvalues $E^{(0)}_n/3$ and $E^{(1)}_n/3$, respectively.

Recall that eigenvectors of a Hamiltonian span the whole Hilbert space over which the Hamiltonian is defined. Therefore, the eigenvectors of $K$ listed above span the entire sectors of the Hilbert space associated with the ancillas set to $\ket{0^N}_A$ or $\ket{ 1^N}_A$. Suppose $\ket{\phi}$ is another eigenvector of $K$. Since it is orthogonal to all of the previously listed eigenvectors, $\ket{\phi}$ can be written
\begin{equation}
    \ket{\phi} = \sum_{x=1}^{2^N-2} \alpha_x \ket x_A \otimes \ket{\eta_x}
\end{equation}
for some set of complex coefficients $\alpha_x$ with $\sum_x \lvert \alpha_x \rvert^2=1$ and some set of normalized states $\ket{\eta_x}$. The sum explicitly leaves out the $x=0 = 0^N$ and $x=2^N-1 = 1^N$ binary strings because these states lie in the subspace spanned by eigenstates already listed. We wish to lower bound the energy of the state $\ket{\phi}$, i.e.,~the quantity
\begin{align}
    &\bra \phi K\ket \phi\nonumber\\
    &= \sum_{x=1}^{2^N-2}\sum_{y=1}^{2^N-2} \alpha^*_x\alpha_y \bra{x}_A \bra{\eta_x}K\ket{y}_A \ket{ \eta_y} \nonumber\\
    &=\sum_{x=1}^{2^N-2}\sum_{y=1}^{2^N-2}\sum_{z=0}^{2^N-1} \alpha^*_x\alpha_y \braket{x}{z}\braket{z}{y}_A \bra{\eta_x}\sum_{i=1}^{N-1} K_{z,i}\ket{ \eta_y} \nonumber\\
    &=\sum_{x=1}^{2^N-2} \lvert \alpha_x\rvert^2 \bra{\eta_x}\sum_{i=1}^{N-1} K_{x,i}\ket{\eta_x}.
\end{align}
We make the following claim:
\begin{claim}\label{claim:partialchain}
For any state $\ket{\eta}$, and any $1 \leq a < b \leq N$
\begin{equation}
     \bra{\eta} \sum_{i=a}^{b-2}H^{(j)}_{i,i+1} \ket{\eta} \geq \frac{b-a}{N-1}E_0^{(j)}-1.
\end{equation}
\end{claim}

\begin{proof}[Proof of Claim \ref{claim:partialchain}]

First we prove it in the case that $M:=b-a$ divides $N$. Let region $Y$ refer to sites $[a,b-1]$, let $\rho = \Tr_{Y^c}(\ket{\eta}\bra{\eta})$, and let $\sigma = \rho \otimes\ldots \otimes \rho$ be $N/M$ copies of $\rho$ which covers all $N$ sites. Then
\begin{align}
    \Tr(H^{(j)}\sigma) =& \frac{N}{M}\bra{\eta} \sum_{i=a}^{b-2}H^{(j)}_{i,i+1}\ket\eta \nonumber \\
    &+ \sum_{k=1}^{N/M-1}\bra{\eta}H^{(j)}_{kM,kM+1}\ket{\eta} \nonumber\\
    \leq& \frac{N}{M}\bra{\eta} \sum_{i=a}^{b-2}H^{(j)}_{i,i+1}\ket\eta+\left(\frac{N}{M}-1\right),
\end{align}
where the last line follows from the fact that the interaction strength $\lVert H^{(j)}_{i,i+1} \rVert \leq 1$. Moreover, by the variational principle, $\Tr(H^{(j)}\sigma) \geq E_0^{(j)}$. These observations together yield
\begin{equation}
    \bra{\eta} \sum_{i=a}^{b-2}H^{(j)}_{i,i+1}\ket\eta \geq (M/N) (E_0^{(j)}+1)-1,
\end{equation}
which implies the statement of the claim.

Now suppose $M$ does not divide $N$. We decompose $N = s M +r$ for non-negative integers $s$ and $r < M$. Let $\sigma = \rho \otimes\ldots \otimes \rho  \otimes \ket{\nu}\bra{\nu}$ where there are $s$ copies of $\rho$ and $\ket{\nu}$ is the exact ground state of $\sum_{i=N-r+1}^{N-1} H^{(j)}_{i,i+1}$, which has energy $E_r$. Then
\begin{equation}
    \Tr(H^{(j)}\sigma) \leq s\bra{\eta} \sum_{i=a}^{i_0+M_0-2}H^{(j)}_{i,i+1}\ket{\eta}+ E_r + s.
\end{equation}
Here we invoke the variational principle twice. First, note that the expectation value of $\sum_{i=N-r+1}^{N-1} H^{(j)}_{i,i+1}$ in the state $|\phi_0^{(j)}\rangle$ (the exact ground state of the whole chain) is exactly $(r-1)E_0^{(j)}/(N-1)$. Since $\ket{\nu}$ is the exact ground state of that Hamiltonian, $E_r$ must be smaller than this quantity. Second, as before, $\Tr(H^{(j)}\sigma) \geq E^{(j)}_0$. Combining these observations yields
\begin{align}
\bra{\eta} \sum_{i=i_0}^{i_0+M-1}H^{(j)}_{i,i+1} \ket{\eta} &\geq \frac{1}{s}E_0^{(j)}\left(1-\frac{r-1}{N-1}\right)-1 \nonumber \\
&= \frac{M}{N-1}E_0^{(j)}-1.
\end{align}
\end{proof}

Now we use this claim to complete the proof of Lemma \ref{lem:Kcombined}. For any binary string $x$ we can associate a sequence of indices $1 =i_0 < i_1 < \ldots < i_{m} < i_{m+1} = N+1$ such that $x_i=x_{i_k}$ for all $k = 1,\ldots,m$ and all $i$ in the interval $[i_k, i_{k+1}-1]$. Moreover we require $x_{i_{k-1}} \neq x_{i_{k}}$. In other words, $x$ can be decomposed into substrings of consecutive 0s and consecutive 1s, with $i_j$ representing the index of the ``domain wall'' that separates a substring of 0s from a substring of 1s. The parameter $m$ is the number of domain walls. Using this notation, and letting $E_0^K = \min_j E_0^{(j)}/3$ we can rewrite
\begin{align}
    \bra{\eta_x}\sum_{i=1}^{N-1} K_{x,i}\ket{\eta_x} ={}& \sum_{k=0}^{m} \bra{\eta_x}\sum_{i=i_k}^{i_{k+1}-2} \frac{1}{3}H_{i,i+1}^{(x_{i_k})}\ket{\eta_x}\nonumber\\
    &+\sum_{k=1}^m \bra{\eta_x} I_{i_k-1,i_k}\ket{\eta_x} \nonumber\\
    \geq{}&\sum_{k=0}^{m}\left( \frac{i_{k+1}-i_k}{3(N-1)}E_0^{(x_{i_j})}-\frac{1}{3}\right)+m\nonumber\\
    \geq{}&E_0^K+\frac{2m-1}{3}.
\end{align}
For any $x$ other than $0^N$ and $1^N$, there is at least one domain wall and $m \geq 1$. Thus we can say
\begin{equation}\label{eq:gapbound}
    \bra \phi K \ket \phi \geq E_0^K + 1/3.
\end{equation}
We have shown that any state orthogonal to the states $\ket{0^N}_A \otimes |\phi^{(0)}_n\rangle$ and $\ket{1^N}_A \otimes |\phi^{(1)}_n\rangle$ will have energy at least 1/3 larger than the lowest energy state of the system. Without loss of generality, suppose $E_0^{(0)} \leq E_0^{(1)}$. Then, the ground state energy is $E_0^K = E_0^{(0)}/3$ and the ground state is $\ket{0^N}_A \otimes |\phi^{(0)}_0\rangle$ (note that in the statement of the Lemma we have $\ket{\psi^{(0)}} = \ket{\phi_0^{(0)}}$). The first excited state is either $\ket{0^N}_A \otimes |\phi^{(0)}_1\rangle$, $\ket{1^N}_A \otimes |\phi^{(1)}_0\rangle$, or lies outside the sector associated with ancillas $\ket{0^N}$ and $\ket{1^N}$, whichever has lowest energy. The three cases lead to spectral gaps of $\Delta^{(0)}/3$, $(E_0^{(1)}-E_0^{(0)})/3$, and something larger than 1/3 (due to Eq.~\eqref{eq:gapbound}), respectively. This proves all three items of the Lemma.
\end{proof}

\begin{proof}[Proof of Theorem \ref{thm:reduction}]
We begin by specifying a family of Hamiltonians, parameterized by $t \in [0,1/2]$ and defined over a chain of length $N$ with local dimension $d$.
\begin{equation}
    H^Z(t) = \sum_{i=1}^{N-1} I_i \otimes I_{i+1}-(1-t)\ket 0\bra 0_i \otimes \ket 0 \bra 0_{i+1}
\end{equation}
The ground state of $H^Z(t)$ is the trivial product state ${\ket 0}^{\otimes N}$ with ground state energy $t(N-1)$, and thus energy density $t$. The interaction strength is bounded by $1$ and the spectral gap is $1-t \geq 1/2$. 

Now we construct an algorithm for Problem \ref{prob:energydensity}. We are given $H$ as input, with associated parameters $N$ and $d$, and a lower bound $\Delta$ on the spectral gap. Let the true ground state energy for $H$ be $E$ and let $u = E/(N-1)$. We choose a value of $s$ between 0 and $1$, and we apply Lemma \ref{lem:Kcombined} to construct a Hamiltonian $K$ combining Hamiltonians $H/2$ and $H^Z(s/2)$. $K$ acts on $N$ sites, has local dimension $2d$, and has spectral gap at least $\min(\Delta, \lvert s-u \rvert (N-1),1)/6$. We are given a procedure to solve Problem \ref{prob:localprops} with $\delta = 0.9$ for a single site, i.e.~we can estimate the expectation value of any single site observable in the ground state of $K$. If $s < u$, the true reduced density matrix of $K$ will have its ancilla bits all set to 1. If $s > u$ the reduced density matrix corresponds to the reduced density matrix of $H$ with all its ancilla bits set to 0. Thus we can choose our single site operator to be the $Z_A$ operator that has eigenvalue $1$ for states whose ancilla bit is $\ket{0}$ and eigenvalue $-1$ for states whose ancilla bit is $\ket{1}$. If we have a procedure to determine $\bra{\psi}Z_A\ket{\psi}$ to precision 0.9 then we can determine the setting of one of the ancilla bits in the ground state and thus determine whether $u$ is larger or smaller than $s$. The time required to make this determination is $f(\min(\Delta, (N-1)\lvert s-u \rvert,1)/6,2d,N)$. Because we have control over $s$, we can use this procedure to binary search for the value of $u$. We assume we are given a lower bound on $\Delta$ but since we do not know $u$ \textit{a priori}, we have no lower bound on $\lvert s-u \rvert$, so we may not know how long to run the algorithm for Problem \ref{prob:localprops} in each step of the binary search. If our desired precision is $\epsilon$, we will impose a maximum runtime of $f(\min(\Delta, (N-1)\epsilon/2,1)/6,2d,N)$ for each step. Thus, if we choose a value of $s$ for which $\lvert s-u \rvert < \epsilon/2$, the output of this step of the binary search may be incorrect. After such a step, our search window will be cut in half and the correct value of $u$ will no longer be within the window. However, $u$ will still lie within $\epsilon/2$ of one edge of the window. Throughout the binary search, some element of the search window will always lie within $\epsilon/2$ of $u$, so if we run the search until the window has width $\epsilon$ and output the value $\tilde{u}$ in the center of the search window, we are guaranteed that $\lvert u - \tilde{u} \rvert \leq \epsilon$. The number of steps required is $O(\log(1/\epsilon))$ and the time for each step is $f(\min(2\Delta, (N-1)\epsilon,2)/12,2d,N)$, yielding the statement of the theorem. 
\end{proof}


\section{Discussion}

Our results paint an interesting landscape of the complexity of approximating ground states of gapped nearest-neighbor 1D Hamiltonians locally. On the one hand, we show all $k$-local properties of the ground state can be captured by an MPS with only constant bond dimension, an improvement over the $\text{poly}(N)$ bond dimension required to represent the global approximation. This constant scales like a polynomial in $k$ and $1/\epsilon$, when parameters like $\Delta$, $\xi$, and $d$ are taken as constants. On the other hand, we give evidence that, at least for the case where the Hamiltonian is translationally invariant, finding the local approximation may not offer a significant speedup over finding the global approximation: we have shown that the ability to find even a constant-precision estimate of local properties would allow one to learn a constant-precision estimate of the ground state energy with only $O(\log(N))$ overhead. This reduction does not allow one to learn any global information about the state besides the ground state energy, so it falls short of giving a concrete relationship between the complexity of the global and local approximations. Nonetheless, the reduction has concrete consequences. In particular, at least one of the following must be true about translationally invariant gapped Hamiltonians on chains of length $N$:
\begin{enumerate}[(1)]
    \item The ground state energy can be estimated to $O(1)$ precision in $O(\log(N))$ time.
    \item Local properties of the ground state cannot be estimated to $O(1)$ precision in time independent of $N$. 
\end{enumerate}
In particular, the second item, if true, would seem to imply that, in the translationally invariant case when $N \rightarrow \infty$, local properties cannot be estimated at all.

Indeed, it is when the chain is very long, or when we are considering the thermodynamic limit directly that our results are most relevant. In the translationally invariant case as $N\rightarrow \infty$, our first proof method (Theorem \ref{thm:improvedbd}) yields a local approximation that is a translationally invariant MPS. However, the MPS is non-injective and the state is a macroscopic superposition on the infinite chain. Thus the bulk tensors alone do not uniquely define the state and specification of a boundary tensor at infinity is also required \cite{vanderstraeten2019tangent,zauner2018variational}. Our second proof method (Theorem \ref{thm:mainthm}), on the other hand, yields a periodic MPS (with period $O(\log(k/\epsilon^2))$) that is injective and can be constructed by a constant-depth quantum circuit made from spatially local gates. If we allow the locality of the gates to be $O(\log(k/\epsilon^2))$, then the circuit can have depth 2, as in Figure \ref{fig:constantdepthcircuit}. If we require the locality of the gates be only a constant, say 2, then the circuit can have depth $\text{poly}(k,1/\epsilon)$.

The fact that the approximation is injective perhaps makes the latter method more powerful. Injective MPS are the exact ground states of some local gapped Hamiltonian \cite{fannes1992finitely, perez2006matrix}. Additionally, non-injective MPS form a set of measure zero among the entire MPS manifold, so variational algorithms that explore the whole manifold are most compatible with an injective approximation. In fact, since the approximation can be generated from a constant-depth circuit, the result justifies a more restricted variational ansatz using states of that form. This ansatz could provide several advantages over MPS in terms of number of parameters needed and ability to quickly calculate local observables, like the energy density. 
However, algorithms that perform variational optimization of the energy density generally suffer from two issues, regardless of the ansatz they use. First, they do not guarantee convergence to the global minimum within the ansatz set, and second, even when they do find the global minimum, the output does not necessarily correspond to a good local approximation. This stems from the fact that a state that is $\epsilon$-close to the ground state energy density may actually be far, even orthogonal, to the actual ground state. Therefore, even a brute-force optimization over the ansatz set cannot be guaranteed to give any information about the ground state, other than its energy density. 

This leaves open many questions regarding the algorithmic complexity of gapped local 1D Hamiltonians. For the general case on a finite chain, can one find a local approximation to the ground state faster than the global approximation? For translationally invariant chains, can one learn the ground state energy to $O(1)$ precision in $O(\log(N))$ time, and can one learn local properties in time independent of the chain length? Relatedly, in the thermodynamic limit, can one learn an $\epsilon$-approximation to the ground state energy density in $O(\log(1/\epsilon))$ time, and can one learn local properties at all? These are interesting questions to consider in future work. 

We would like to conclude by drawing the reader's attention to independent work studying the same problem by Huang \cite{huang2019approximating}, which appeared simultaneously with our own.

\begin{acknowledgments}

We thank Thomas Vidick for useful discussions about this work and its algorithmic implications. AMD gratefully acknowledges support from the Dominic Orr Fellowship and the National Science Foundation Graduate Research Fellowship Program under Grant No. DGE‐1745301. This work was supported by NSF and Samsung. The Institute for Quantum Information and Matter (IQIM) is an NSF Physics Frontiers Center.
\end{acknowledgments}

\bibliographystyle{abbrvnat}
\bibliography{references}
\end{document}